\theoremstyle{plain}
 \newtheorem{thm}{Theorem}[section]
  \newtheorem{prop}[thm]{Proposition}
    \newtheorem{coro}[thm]{Corollary}
 \theoremstyle{definition}    
  \newtheorem{dfn}[thm]{Definition}
  \newtheorem{ass}[thm]{Assumption}
  \newtheorem{prob}[thm]{Problem}
  \newtheorem{exa}[thm]{Example}
\theoremstyle{remark}
  \newtheorem{rem}[thm]{Remark}
\numberwithin{equation}{section}\numberwithin{figure}{section}
\def\Hom{{\text{\rm{Hom}}}}
\def\rchi{{\hbox{\raise1.5pt\hbox{$\chi$}}}}
\def\isom{\cong}
\newcommand{\bea}{\begin{eqnarray}}
\newcommand{\eea}{\end{eqnarray}}
\newcommand{\be}{\begin{equation}}
\newcommand{\ee}{\end{equation}}
\newcommand{\Res}{\mathop{\rm Res}}
\title[Topological recursion and uncoupled BPS structures II]{
Topological recursion and uncoupled BPS structures II: Voros symbols and the $\tau$-function}
\author{Kohei Iwaki}%
\address{Graduate School of Mathematical Sciences, 
The University of Tokyo, 
3-8-1 Komaba, Meguro-ku, Tokyo, 153-8914, Japan}
\email{iwaki@ms.u-tokyo.ac.jp}
\author{Omar Kidwai}%
\address{Graduate School of Mathematical Sciences, 
The University of Tokyo, 
3-8-1 Komaba, Meguro-ku, Tokyo, 153-8914, Japan}
\email{kidwai@ms.u-tokyo.ac.jp}
\date{}
\begin{document}

\large
\setcounter{section}{0}
\maketitle
\begin{abstract}
   We continue our study of the correspondence between BPS structures and topological recursion in the uncoupled case, this time from the viewpoint of quantum curves. For spectral curves of hypergeometric type, we show the Borel-resummed Voros symbols of the corresponding quantum curves solve Bridgeland's ``BPS Riemann-Hilbert problem''. In particular, they satisfy the required jump property in agreement with the generalized definition of BPS indices $\Omega$ in our previous work. Furthermore, we observe the Voros coefficients define a closed one-form on the parameter space, and show that (log of) Bridgeland's $\tau$-function encoding the solution is none other than the corresponding potential, up to a constant. When the quantization parameter is set to a special value, this agrees with the Borel sum of the topological recursion partition function $Z_{\rm TR}$, up to a simple factor.
   
\end{abstract}

\tableofcontents

\section{Introduction}

{
This paper continues our study \cite{IK20} of the relationship between the theory of BPS structures and the formalism of topological recursion, in the special case where the former is \emph{uncoupled}. In the present work, we approach this based on the relationship that each of these two theories has with \emph{exact WKB analysis} --- in particular, the theory of \emph{quantum curves}. As before, our testing ground will be concrete examples related to the Gauss hypergeometric equation and its confluent degenerations, the \emph{spectral curves of hypergeometric type}, summarized in Table \ref{table:classical} below.

In our previous work, we gave a formula for the topological recursion free energies $F_g$ as a sum over BPS cycles, weighted by BPS indices $\Omega(\gamma)$ whose definition we introduced generalizing \cite{GMN09,BS13}. In the present paper we build on this result (which is formal in $\hbar$), following the topological recursion construction of the quantum curve in \cite{IKoT-I,IKoT-II} to solve Bridgeland's \emph{BPS Riemann-Hilbert problem} \cite{Bri19} (which is analytic in nature). Our solution is given by the Borel-resummed Voros symbols of the quantum curve, and we furthermore obtain the corresponding ``$\tau$-function'', together with its interpretation in terms of a potential for the Voros coefficients.

We will state our main result in detail below after giving a brief background of these two theories, together with the results of our previous paper \cite{IK20}.


\begin{table}[t]
\begin{center}
\begin{tabular}{ccc}\hline
$\underset{\rm (label)}{\rm Equation}$ & $Q_{\bullet}(x)$  & Assumption
\\\hline\hline
\parbox[c][4.0em][c]{0em}{}
$\underset{\rm (HG)}{\rm Gauss}$
&
\begin{minipage}{.35\textwidth}
\begin{center}
$\dfrac{{m_\infty}^2 x^2 
- ({m_\infty}^2 + {m_0}^2 - {m_1}^2)x 
+ {m_0}^2}{x^2 (x-1)^2}$
\end{center}
\end{minipage}
& ~~~\quad
\begin{minipage}{.35\textwidth}
\begin{center}
$m_0, m_1, m_\infty \neq 0$,\\
$m_0 \pm m_1 \pm m_\infty \ne 0$.
\end{center}
\end{minipage}
\\\hline
\parbox[c][3.0em][c]{0em}{}
$\underset{\rm (dHG)}{\rm Degenerate~Gauss}$
&
\begin{minipage}{.35\textwidth}
\begin{center}
$\dfrac{{m_\infty}^2 x + {m_1}^2 - {m_\infty}^2}{x(x-1)^2}$
\end{center}
\end{minipage}
&
\begin{minipage}{.35\textwidth}
\begin{center}
$m_1, m_\infty \neq 0$,\\
$m_1 \pm m_\infty \neq 0$.
\end{center}
\end{minipage}
\\\hline
\parbox[c][3.0em][c]{0em}{}
$\underset{\rm (Kum)}{\rm Kummer}$ 
&
\begin{minipage}{.3\textwidth}
\begin{center}
$\dfrac{x^2 + 4 m_\infty x + 4 {m_0}^2}{4x^2}$
\end{center}
\end{minipage}
&
\begin{minipage}{.3\textwidth}
\begin{center}
$m_0\neq 0$,\\
$m_0 \pm m_\infty \neq 0$.
\end{center}
\end{minipage}
\\\hline
\parbox[c][3.0em][c]{0em}{}
$\underset{\rm (Leg)}{\rm Legendre}$ 
& $\dfrac{m_\infty^2}{x^2-1}$
&
$m_\infty \neq 0$.
\\\hline
\parbox[c][3.0em][c]{0em}{}
$\underset{\rm (Bes)}{\rm Bessel}$ 
& $\dfrac{x + 4m_0^2}{4x^2}$
&
$m_0 \neq 0$.
\\\hline
\parbox[c][3.0em][c]{0em}{}
$\underset{\rm (Whi)}{\rm Whittaker}$ 
& $\dfrac{x - 4m_\infty}{4 x}$
& $m_\infty \neq 0$.
\\\hline
\parbox[c][3.0em][c]{0em}{}
$\underset{\rm (Web)}{\rm Weber}$ 
& $\dfrac{1}{4} x^2 - m_\infty$
& $m_\infty \neq 0$.
\\\hline
\parbox[c][3em][c]{0em}{}
$\underset{\rm (dBes)}{\rm Degenerate~Bessel}$ 
& $\dfrac{1}{x}$
& --
\\\hline
\parbox[c][2.5em][c]{0em}{}
$\underset{\rm (Ai)}{\rm Airy}$ 
& $x$
& --
\\\hline
\end{tabular}
\end{center}
\caption{Spectral curve $\Sigma_\bullet : y^2 - Q_{\bullet}(x) = 0$ 
of hypergeometric type, where 
$\bullet \in \{ {\rm HG}, {\rm dHG}, {\rm Kum}, 
{\rm Leg}, {\rm Bes}, {\rm Whi}, {\rm Web}, {\rm dBes},  {\rm Ai} \}$ 
labels the equation.} 
\label{table:classical}
\vspace{-7mm}
\end{table}
\subsection{BPS structures}

BPS structures, introduced by Bridgeland \cite{Bri19}, are an axiomatization of the structure of Donaldson-Thomas (DT) invariants of a Calabi-Yau 3 triangulated category with a stability condition. It consists of a triple $(\Gamma, Z, \Omega)$ of a ``charge lattice'' $\Gamma$ with a skew-symmetric pairing, a ``central charge'' homomorphism $Z : \Gamma \to {\mathbb C}$, and a collection of numbers (``BPS indices'') $\Omega : \Gamma \to {\mathbb Q}$, associated to each $\gamma\in\Gamma$, satisfying some conditions. This is a special case of the stability structure in \cite{KS08}. 

Given a BPS structure, Bridgeland also associated a Riemann-Hilbert problem on ${\mathbb C}^\ast = {\mathbb C}^\ast_{\hbar}$ with the coordinate $\hbar$, which we call the \emph{BPS Riemann-Hilbert problem}, seeking a collection of 
piecewise holomorphic (or meromorphic; see  \S \ref{subsection:Voros-vs-minimal-intro} below and \cite{Bar}) twisted-torus-valued functions $X_{\ell}(\hbar)$, labeled by a ray $\ell \subset {\mathbb C}^\ast$, with a collection of {\em BPS rays} on which the functions have discontinuity. More precisely, $X_\ell$ is defined on a half-plane ${\mathbb H}_\ell \subset {\mathbb C}^\ast$ whose bisecting ray is given by $\ell$, and when we vary $\ell$ so that it crosses over a BPS ray $\ell_{\rm BPS}$, then $X_{\ell}$ has a discontinuity described by a cluster-like birational transformation, called the \emph{BPS automorphism}, encoded by the BPS invariants $\Omega(\gamma)$ of $\gamma \in \Gamma$ satisfying $Z(\gamma) \in \ell_{\rm BPS}$. 
The BPS Riemann-Hilbert problem is closely related to the one studied by Gaiotto-Moore-Neitzke in \cite{GMN08, GMN09}, in which the BPS spectra for 
``class ${\mathcal S}$ theories''
were investigated. These Riemann-Hilbert problems allows us to understand the {\em Kontsevich-Soibelman wall-crossing formula}, a non trivial relation among the BPS invariants under variations of stability conditions, as a kind of iso-Stokes property around the ``irregular singular point'' $\hbar = 0$ (see also \cite{BTL, FFS}). This wall-crossing structure is axiomatized in the notion of {\em variation of BPS structures}.

If a variation of BPS structures over a complex manifold $M$ is nicely parametrized (``miniversal''), the solution to the associated BPS Riemann-Hilbert problem is naturally regarded as a collection of piecewise-holomorphic functions on $M \times {\mathbb C}^\ast$, and can be used to define a certain $\hbar$-dependent vector field on $M$. In this setting, Bridgeland also defined a function on $M \times {\mathbb C}^\ast$, which we call the \emph{BPS $\tau$-function} and denote by $\tau_{\rm BPS}$, by the condition that $\log\tau_{\rm BPS}$ be a Hamiltonian for this vector field (see \S\ref{sec:tau-def}).


If the BPS structure is uncoupled, it was shown in \cite{Bri19} that the solutions to the BPS Riemann-Hilbert problem can be solved explicitly in terms of gamma functions, and the associated BPS $\tau$-function is expressed in terms of a product of Barnes $G$-functions. Although non-trivial wall-crossing does not occur in the uncoupled situation, it was observed in \cite{Bri19, Bri20} that the BPS $\tau$-function has an alternative interesting interpretation. Namely,  $\tau_{\rm BPS}$ for the BPS Riemann-Hilbert problem associated with the natural variation of (uncoupled) BPS structures arising from resolved conifold through Donaldson-Thomas theory provides a ``non-perturbative partition function'' for the resolved conifold; that is, the asymptotic expansion of $\tau_{\rm BPS}$ when $\hbar \to 0$ coincides with the generating series of all-genus Gromov-Witten invariants. This surprising discovery was also generalized in \cite{Stoppa}.

The study of BPS Riemann-Hilbert problems associated to spectral curves of hypergeometric type 
through our previous work \cite{IK20} --- in particular, the construction of their solutions and $\tau$-functions via topological recursion --- will be the main theme of our present work. 
We will recall how the BPS structures arise from such spectral curves in \S \ref{subsection:review-part-1}. 

\subsection{Topological recursion and spectral curves of hypergeometric type}

Topological recursion (TR) is an algorithm generalizing the construction of correlation functions and partition functions in matrix models, whose input data is an algebraic curve enhanced with some additional data, called a \emph{spectral curve} (\cite{EO, CEO}). Mathematically, to a given spectral curve, TR constructs a doubly-indexed sequence $\{ W_{g,n} \}_{g \ge 0, \, n \ge 1}$ of meromorphic multi-differentials on the spectral curve, and a sequence $\{ F_g \}_{g \ge 0}$ of complex numbers.

These outputs of TR are expected to encode information of enumerative geometry such as Gromov-Witten invariants, Hurwitz numbers, etc. (essentially due to various dualities in mathematical physics), and there is a large class of spectral curves for which this expectation is known to hold true (see \cite{EO2, Eynard-11, DMNPS13, DN16} etc.). As represented by the celebrated Kontsevich-Witten theorem \cite{Kon, Wit}, generating series of such enumerative invariants are expected to be related to $\tau$-functions of some corresponding integrable systems. 
In fact, at the level of formal power series of $\hbar$, KdV and Painlev\'e $\tau$-functions have already been constructed via (the discrete Fourier transform of) the TR partition function $Z_{\rm TR} = \exp(\sum_{g \ge 0} \hbar^{2g-2} F_g )$ for certain spectral curves; see \cite{EO, IS, IM, IMS, MO18, I19, EG19, MO19} for example. The actual $\tau$-function should be given as the Borel sum of those formal series, and thus, in this paper, we denote by $\tau_{\rm TR}$ the Borel sum of the TR partition function. Although the Borel summability of TR partition functions is not proved in full generality (see \cite{Eynard-19} for the growth estimate of the coefficients), those which appear in this paper are Borel summable along all but finitely many rays. 

In \cite{IKoT-I, IKoT-II}, the first named author together with collaborators applied the TR formalism to the spectral curves of hypergeometric type, and obtained a result on {\em quantum curves} which states the following: after the principal specialization (i.e., setting $z_1 = \cdots = z_n = z$), a certain generating function of a primitive $F_{g,n}$ of $W_{g,n}$ (i.e., a function satisfying\footnote{We need a slight modification when $(g,n) = (0,2)$.} $d_{z_1} \cdots d_{z_n} F_{g,n} = W_{g,n}$) gives a WKB solution to a Schr\"odinger-type ODE, which we denote by ${\bm E}$, whose classical limit recovers the spectral curve. 
Indeed, we may add a tuple of parameters ${\bm \nu}$ which parametrizes the integration contour to define the primitive $F_{g,n}$, and hence, we obtain a family ${\bm E} = {\bm E}({\bm \nu})$ of quantum curves. We note that they also depend on the tuple ${\bm m} = (m_s)$ of parameters contained in the definition of $Q(x)$, which is one of $Q_\bullet (x)$ in Table \ref{table:classical}. We also observe that the resulting quantum curve is equivalent to an ODE satisfied by classical special functions; for example, the quantum curve of $\Sigma_{\rm HG}$ is equivalent to the Gauss hypergeometric differential equation. This allows us to study the TR invariants in terms of various properties of special functions. As one of the main results in \cite{IKoT-I, IKoT-II}, an explicit formula in terms of Bernoulli numbers was given for the $F_g$, as well as the {\em Voros coefficients} of the quantum curve ${\bm E}({\bm \nu})$, defined as a term-wise integral of the formal series given as the logarithmic derivative of the WKB solution along cycles on the spectral curve. It turns out that the Borel sum of generating series of free energy and Voros coefficients can be interpreted through the language of BPS structures.


\subsection{Brief review of results of Part I} 
\label{subsection:review-part-1}

To formulate the results of \cite{IK20}, let us first recall how BPS structures are constructed from spectral curves of hypergeometric type. There is a canonical way to construct (a variation of) BPS structures from such (a family of) spectral curves (see \cite[\S 7]{Bri19} for the general case, and see also \cite{GMN08} as the origin of this construction). For a given spectral curve $\Sigma = \Sigma_{\bullet}$ in Table \ref{table:classical}, we take the charge lattice and pairing to be $\Gamma = \{\gamma \in H_1(\widetilde{\Sigma}, {\mathbb Z}) ~|~ \iota_\ast \gamma = - \gamma \}$ and the intersection pairing on it, respectively. Here,  $\widetilde{\Sigma}$ is a partial compactification of ${\Sigma}$ (see \S \ref{section:hypergeometric-curvs-structure}), and $\iota$ is the covering involution $\iota : \widetilde{\Sigma} \to \widetilde{\Sigma}$. The central charge is given by the period map $Z(\gamma) = \oint_{\gamma} y dx = \oint_{\gamma} \sqrt{Q(x)} \, dx$, and the BPS indices $\Omega(\gamma)$ were defined by a certain weighted counting of degenerations of spectral networks (Stokes graphs) for the meromorphic quadratic differential $Q(x) \,dx^2$.  We recall these details more precisely in \S \ref{subsection:saddle-and-BPS-indices}.

The list of BPS structures which we obtained in \cite{IK20} following this construction is summarized concisely in Table \ref{table:BPS-str-HG}.
In all our examples, since the curves $\Sigma$ in Table \ref{table:classical} are of genus $0$, $\Gamma$ has the trivial intersection pairing so that the resulting BPS structure is \emph{a fortiori} uncoupled. 
The rank of $\Gamma$ is equal to the number of mass parameters ${\bm m} = (m_{s})$ attached to even order poles $s$ of $Q(x)$. If we vary ${\bm m}$, we have a variation of BPS structure parametrized by ${\bm m} \in M$, where $M$ is the space of mass parameters satisfying the assumptions in Table \ref{table:classical}. We can describe the central charge $Z(\gamma)$ explicitly as a function of ${\bm m}$, and it is easy to check that the variation is miniversal.  

We recall that our definition of $\Omega(\gamma)$ generalizes the one given in \cite{GMN08, BS13, Bri19} in order to include degenerations involving a saddle connection with endpoint a simple pole of $Q(x) \,dx^2$, and degenerate ring domains (see \cite[\S 3.6]{IK20}). With this generalization, we have shown in \cite{IK20} that the TR free energy $F_g$ for the spectral curve $\Sigma$ is described as a sum over all BPS cycles whose central charge lie in a generically chosen half-plane 
(\cite[Theorem 5.3]{IK20}). 
The following formula for TR free energy with $g \ge 2$ is one of the main results in \cite{IK20}:  
\begin{equation} \label{eq:main-formula-from-part-I}
F_g = \frac{B_{2g}}{2g(2g-2)} \sum_{\substack{ \gamma \in \Gamma \\ Z(\gamma) \in {\mathbb H}}} 
\Omega(\gamma) \, \left( \frac{2\pi i}{Z(\gamma)} \right)^{2g-2}.  
\end{equation}
Here, ${\mathbb H}$ is any half-plane whose boundary rays are not BPS. This universal formula is valid for all our examples, and shows that each term in the free energy is a contribution of a BPS cycle associated with a degenerate spectral network; furthermore the coefficient of each term knows the BPS index of the BPS cycle. We conjectured that the formula \eqref{eq:main-formula-from-part-I} is still valid for higher degree spectral curves if the associated BPS structure is uncoupled, and checked this expectation numerically for a few examples of degree 3 spectral curves (\cite[\S 5.3]{IK20}). 

However, in our previous work, we did not give any explanation for the ``naturality'' of our generalized definition of $\Omega(\gamma)$; namely, we defined them so that the universal formula \eqref{eq:main-formula-from-part-I} is valid. Therefore, in the present paper, we will try to fill this gap\footnote{In the recent paper \cite{Haiden}, Haiden discussed the Donaldson-Thomas theory of certain CY3 categories associated with a quadratic differential with simple poles. We see that the BPS indices defined in \cite[\S 3.6]{IK20} indeed agree with the ones obtained in \cite{Haiden}.} and show the validity of our generalization of $\Omega(\gamma)$ by showing they control the analytic behaviour of (Borel-resummed) Voros coefficients. We will make this more precise below.


\subsection{Results of Part II}

The BPS indices $\Omega(\gamma)$ are defined by weighted counting of degenerate spectral networks, but they have another interpretation in Gaiotto-Moore-Neitzke \cite{GMN08}. That is, the BPS indices should appear in the exponents in the formula which describe the mutation of the Fock-Goncharov cluster coordinates (on the moduli space of flat ${\rm SL}_2({\mathbb C})$-connections) caused by the degeneration of the spectral network. 
On the other hand, as is shown in \cite{I16, All18, Kuw20}, the Fock-Goncharov cluster coordinates can be identified (on the oper locus, where this makes sense) with the Borel-resummed Voros symbols (exponential of Voros coefficients) in the exact WKB analysis of a certain Schr\"odinger-type equation with the spectral curve as its classical limit (see also \cite{IN14, IN15}), where the mutation formula (BPS automorphism) is a consequence of the Stokes phenomenon for the Voros symbols. 

Keeping these previous works in mind, to investigate the relation between BPS structures and TR, we study the Stokes structures of the Voros symbols for the quantum curves constructed through TR from the spectral curves of hypergeometric type obtained in \cite{IKoT-I, IKoT-II}, and compare those to the BPS Riemann-Hilbert problems associated with the BPS structure studied in our previous work \cite{IK20}. 
We note here that such analysis of the Stokes phenomenon for the Voros symbols of the (confluent) hypergeometric differential equations are not new; some special cases (e.g., for specific ${\bm \nu}$) have already been studied in the exact WKB literature  (see \cite{Aoki-Tanda, ATT, AIT, Takei08, KoT11, KKKT11, KKT14} for example). We also note that Stokes structures of the Voros symbols for Schr\"odinger-type equations were also studied in \cite{DDP93, IN14} in a general setting.  

Let us summarize our results more precisely below.


\subsubsection{Voros solution to the (almost-doubled) BPS Riemann-Hilbert problem}
\label{subsub:Voros-solution-intro}

A natural BPS Riemann-Hilbert problem, which is compared to the Stokes structure on the TR side, is closely related to the one  associated with the {\em doubled} BPS structure which has a charge lattice $\Gamma_{\rm D} = \Gamma \oplus \Gamma^{\vee}$. In what follows, we identify a relative homology class in $H_1(\overline{\Sigma}, D_\infty, {\mathbb Z})$, where $D_\infty$ is the divisor supported on the poles of $Q(x) dx^2$ of order $\ge 2$, with an element in $\Gamma^\vee$ through the Poincar\'e-Lefschetz duality.

For the quantum curves ${\bm E} = {\bm E}({{\bm \nu}})$ constructed in \cite{IKoT-I, IKoT-II}, first we show that the Voros coefficients for relative homology classes (``path Voros coefficients'') have an explicit expression as a sum over all BPS cycles (which is analogous to the formula \eqref{eq:main-formula-from-part-I} for the TR free energy): 
for a given $\beta \in \Gamma^\vee$ satisfying $\iota_\ast \beta = - \beta$, we have
\[
V_{\beta} =  \sum_{k \ge 1} \hbar^{k} V_{\beta, k},
\]
with the coefficients being given in terms of the Bernoulli polynomials: 
\begin{align}
\label{eq:introvoros}
V_{\beta, k}
& = 
\sum_{\substack{\gamma \in \Gamma \\ Z(\gamma) \in {\mathbb H} \\ \Omega(\gamma) \ne -1}} 
\Omega(\gamma) \, \langle \beta, \gamma \rangle \, 
\frac{B_{k+1}\bigl( \frac{1+\nu(\gamma)}{2} \bigr)}{k(k+1)} \, 
\left( \frac{2 \pi i}{Z(\gamma)} \right)^k  \notag \\[-.5em]
& \quad + \sum_{\substack{\gamma \in \Gamma \\ Z(\gamma) \in {\mathbb H} \\ \Omega(\gamma) = -1}} 
\Omega(\gamma) \, \frac{\langle \beta, \gamma \rangle}{2}  \, 
\frac{B_{k+1}\bigl(\frac{\nu(\gamma)}{2} \bigr)+ B_{k+1}\bigl( 1 + \frac{\nu(\gamma)}{2} \bigr)}{k(k+1)} \, 
\left( \frac{2 \pi i}{Z(\gamma)} \right)^k
\end{align}
(see \S\ref{subsubsection:Voros-coefficient} for the precise notation). We also have an explicit expression of Voros coefficients $V_{\gamma}$ for homology classes $\gamma \in \Gamma$ (``cycle Voros coefficients'') in terms of the parameters ${\bm m}$ and ${\bm \nu}$.
Together, we find that 
\begin{itemize}
\item 
the Voros symbols $e^{V_{\gamma}}$, $e^{V_{\beta}}$ are Borel summable as a formal series of $\hbar$ along any ray $\ell$ which is not the BPS rays for the corresponding BPS structure given in \cite{IK20}, and
\item 
the Borel-resummed Voros symbols jumps when $\ell$ crosses a BPS ray due to the Stokes phenomenon, and the action of {\em Stokes automorphism} on the Voros symbols are given in the exactly same manner as the BPS automorphism (see Proposition \ref{prop:jump-of-Voros-symbols-in-BPS-form}) with the BPS indices $\Omega(\gamma)$ defined in \cite{IK20}. This justifies our definition of $\Omega(\gamma)$. 

\end{itemize}

The second observation tells us that the Borel-resummed Voros symbols of the quantum curve gives a certain solution to the BPS Riemann-Hilbert problem. To describe the underlying BPS structure, let us introduce the {\em almost-doubled} BPS structure $(\Gamma_{\text{\rm \DJ}},Z_{\rm \text{\rm \DJ}},\Omega_{\text{\rm \DJ}})$, which has the charge lattice $\Gamma_{\text{\DJ}} = \Gamma \oplus \Gamma^{\ast}$ where 
\begin{equation}
\Gamma^{\ast} = \{ \beta \in \Gamma^{\vee} ~ | ~ \iota_\ast \beta = - \beta \}, 
\end{equation} 
with trivially extended central charge $Z_{\text{\DJ}}$ and BPS indices $\Omega_{\text{\DJ}}$ (we note that $(\Gamma_{\text{\rm \DJ}},Z_{\rm \text{\rm \DJ}},\Omega_{\text{\rm \DJ}})$ is also uncoupled).  
The {\em constant term} $\xi$ in the small $\hbar$ asymptotics of the solution is specified by the aforementioned parameter ${\bm \nu}$ together with a specific {\em quadratic refinement} (which introduces an appropriate sign to give a twisted-torus-valued function).


  
   \begin{thm}[{Theorem \ref{thm:voros-soln}}]
 \label{thm:voros-soln-intro} 
 Let $(\Gamma, Z, \Omega)$ denote a BPS structure obtained from any spectral curve of hypergeometric type. The Borel sum of cycle and path Voros symbols of the corresponding quantum curve ${\bm E}({\bm \nu})$ provide a meromorphic solution to the BPS Riemann-Hilbert problem associated to the almost-doubled BPS structure $(\Gamma_{\text{\rm \DJ}},Z_{\rm \text{\rm \DJ}},\Omega_{\text{\rm \DJ}})$, and with constant term $\xi_{{\text{\rm \DJ}},\bm \nu}$ given explicitly in \eqref{eq:nuxi}. 
  \end{thm}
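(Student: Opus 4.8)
The plan is to verify the two defining properties of a solution to the BPS Riemann--Hilbert problem for $(\Gamma_{\text{\DJ}}, Z_{\text{\DJ}}, \Omega_{\text{\DJ}})$ --- the correct discontinuity (jump) across BPS rays, and the prescribed constant term as $\hbar \to 0$ --- for the candidate built from the Borel-resummed Voros symbols. First I would assemble the explicit data: on the TR side, the cycle Voros coefficients $V_\gamma$ and path Voros coefficients $V_\beta$ have closed-form expressions in terms of ${\bm m}$, ${\bm \nu}$ and Bernoulli polynomials (the formula \eqref{eq:introvoros}), obtained in \cite{IKoT-I, IKoT-II} and recalled here. From these one reads off that each $V_\gamma$, $V_\beta$ is a $\hbar$-series whose Borel transform has singularities exactly at the points $2\pi i n / Z(\gamma)$ for $n \in \mathbb{Z}\setminus\{0\}$ and $\gamma$ a BPS cycle; hence the Voros symbols $e^{V_\gamma}$, $e^{V_\beta}$ are Borel summable along every ray $\ell$ that is not a BPS ray, and their Borel sums ${\mathcal S}_\ell(e^{V_\bullet})$ are holomorphic (resp. meromorphic, allowing for the poles coming from the second sum in \eqref{eq:introvoros}, cf. the $\Omega(\gamma) = -1$ terms) on the half-plane ${\mathbb H}_\ell$.

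Next I would construct the twisted-torus-valued function $X_\ell : {\mathbb H}_\ell \to {\mathbb T}_{\Gamma_{\text{\DJ}}}$ by declaring $X_\ell(\gamma) = {\mathcal S}_\ell(e^{V_\gamma})$ for $\gamma \in \Gamma$ and $X_\ell(\beta) = {\mathcal S}_\ell(e^{V_\beta})$ for $\beta \in \Gamma^\ast$, extended multiplicatively; the sign twist needed to make this land in the twisted torus rather than the honest torus is exactly where the quadratic refinement enters, and I would fix it so as to reproduce \eqref{eq:nuxi}. The small-$\hbar$ asymptotics of ${\mathcal S}_\ell(e^{V_\bullet})$ is governed by the leading behaviour of $V_\bullet$, which from \eqref{eq:introvoros} tends to a constant determined by ${\bm \nu}$ (the $k\to$ constant part, i.e. the $B_1$/value-at-endpoint contributions); combining this with the chosen quadratic refinement gives $X_\ell(\hbar) \to \xi_{\text{\DJ}, {\bm \nu}}$, which is the constant-term condition. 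The growth/decay condition at $\hbar \to 0$ along $\ell$ then follows from the same expansion together with the fact that the remaining terms are $O(\hbar)$.

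The heart of the argument is the jump property: I must show that as $\ell$ crosses a BPS ray $\ell_{\rm BPS}$ with $Z(\gamma_0) \in \ell_{\rm BPS}$, the Borel sums on the two sides are related by precisely the BPS automorphism ${\mathcal S}(\ell_{\rm BPS})$ of $(\Gamma_{\text{\DJ}}, Z_{\text{\DJ}}, \Omega_{\text{\DJ}})$, with exponents the BPS indices $\Omega(\gamma)$ from \cite{IK20}. This is a Stokes-phenomenon computation: the discrepancy between the two lateral Borel sums of $e^{V_\bullet}$ across $\ell_{\rm BPS}$ is dictated by the residues/monodromy of the Borel transform at its singularities $2\pi i n/Z(\gamma_0)$, and I would extract it from the Bernoulli-polynomial structure of \eqref{eq:introvoros} --- essentially the Borel transform of $B_{k+1}(\tfrac{1+\nu}{2})/(k(k+1)) \,(2\pi i/Z)^k$ is a rational function of the Borel variable with known principal parts, summing (over $n$) to a logarithm of $(1 - X_{\gamma_0}^{\pm})$-type factors. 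One then checks that, for a uncoupled BPS structure, this reproduces term-by-term the transformation $X_\ell(\gamma) \mapsto X_\ell(\gamma)(1 - X_\ell(\gamma_0))^{\Omega(\gamma_0)\langle \gamma_0, \gamma\rangle}$ on each generator, i.e. Proposition \ref{prop:jump-of-Voros-symbols-in-BPS-form}; the pairing $\langle \beta, \gamma_0 \rangle$ for $\beta \in \Gamma^\ast$ is exactly the relative intersection number appearing in the path-Voros formula, so the path symbols transform correctly too. The main obstacle I anticipate is bookkeeping the exceptional $\Omega(\gamma) = -1$ and simple-pole/degenerate-ring-domain cases: there the Voros coefficient involves the \emph{sum} of two Bernoulli polynomials (second line of \eqref{eq:introvoros}) and the resulting Borel singularities are double or have different residue structure, forcing the solution to be genuinely meromorphic rather than holomorphic and requiring the "almost-doubled" rather than "doubled" lattice --- matching these jumps to the correctly normalized BPS automorphism, and confirming the whole assignment is consistent across \emph{all} nine entries of Table \ref{table:classical} (including the degenerate cases Ai, dBes with no mass parameters), is the delicate part.
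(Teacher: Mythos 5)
Your overall strategy is the same as the paper's: build the candidate from the Borel sums of the cycle and path Voros symbols twisted by a quadratic refinement, read off the constant term from the explicit ($\hbar$-independent part of the) Voros coefficients, and obtain the jump (RH1) from the Stokes phenomenon of the Borel transform, whose Bernoulli-polynomial structure produces the $(1-x_\gamma)^{\Omega(\gamma)\langle\gamma,\cdot\rangle}$ factors. The paper organizes the jump computation by reducing everything to the two building blocks (Weber, $\Omega>0$; Bessel, $\Omega=-1$) via the superposition formula, rather than checking the nine cases separately, but that is a bookkeeping difference, not a different idea.

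There is, however, a genuine gap. Borel summation only produces a holomorphic function on $\{\hbar\in\mathbb{H}_\ell : |\hbar|\ll 1\}$ and only controls the asymptotics as $\hbar\to 0$; it gives you neither the meromorphic continuation of $X_\ell$ to the whole half-plane $\mathbb{H}_\ell$ (which the statement requires) nor condition (RH3), the polynomial growth as $|\hbar|\to\infty$, which you do not address at all. The paper obtains both by explicitly evaluating the Laplace integrals for the Weber and Bessel cases in closed form as $\Lambda$-functions (via Binet's formula for $\log\Gamma$), after which meromorphy on $\mathbb{H}_\ell$ and the growth at infinity are read off from known properties of $\Gamma$. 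Your proposal never makes this identification, so these two conditions remain unproven. Two smaller inaccuracies: the Borel-plane singularities sit at integer multiples $nZ(\gamma)$, not at $2\pi i n/Z(\gamma)$ (only the former is consistent with your correct claim that the non-summable directions are the BPS rays); and in the $\Omega(\gamma)=-1$ (Bessel) case the Borel transform still has simple poles --- the reason the \emph{almost}-doubled lattice is needed is not a changed pole order but that the intersection pairing of $\gamma_{s_+}-\gamma_{s_-}$ with $\Gamma_{\text{\DJ}}$ is even, which turns the half-integer exponents of $\Lambda$ into integers and so avoids square-root branching in $X^{\rm Vor}_\ell$.
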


The reason for considering the sublattice $\Gamma^{*}$ is essentially to avoid a square root type singularity in the Borel-resummed Voros symbols (i.e., so we can obtain a {\em meromorphic} solution to the BPS Riemann-Hilbert problem), although the reader can take the full lattice if they are happy to accept such a singularity. 

We call the above solution the {\em Voros solution} to the almost-doubled BPS Riemann-Hilbert problem, and denote it by $X^{\rm Vor}_{\ell}$.  We give its explicit expression in \eqref{eq:vorosexplicit}.


\subsubsection{Relation between the Voros solution and other solutions}
\label{subsection:Voros-vs-minimal-intro}

Similar problems have been considered in \cite{All19,Bri19,Bar}\footnote{
The spectral curves of hypergeometric type, the main examples considered in this paper, are excluded as {\em non-amenable} cases 
in \cite{All19}.
}. 
For the special choice of $\xi$ where $\xi(\gamma)=1$ if $\Omega(\gamma)\neq 0$, Bridgeland gave the explicit general piecewise-\emph{holomorphic} solution, which we will denote by $X^{\rm hol}_\ell$ 
for all uncoupled BPS structures in \cite{Bri19}. This was generalized by \cite{Bar} who noted that, even in the uncoupled case, holomorphic solutions do not usually exist for other values of $\xi$, but solved the problem \emph{meromorphically} in this case for arbitrary $\xi$. 
\cite{All19} provides a (meromorphic) solution for the general (coupled) set up of a quadratic differential with essentially arbitrary pole configuration, but at a specific value of $\xi$, and in terms of Fock-Goncharov coordinates. Finally, \cite{Bri19} solves the problem for arbitrary $\xi$ in the simplest nontrivial coupled case of the cubic oscillator, also using Fock-Goncharov coordinates.

Among these, our work is closely related to the problem considered by Barbieri \cite{Bar}. Like her work, we provide a solution for arbitrary $\xi$, at the cost of considering only uncoupled examples. We note that our solution is not the so-called {\em minimal solution} $X^{\rm min}_{\ell}$ in the sense of \cite{Bar}, which is reduced to the holomorphic solution in the original case $\xi(\gamma) = 1$ considered by Bridgeland \cite{Bri19}, but in \S\ref{sec:relationtominimal} we show that the minimal solution and Voros solution differ only by a simple factor and provide an explicit relation between the two. We will make this more precise in \S \ref{sec:relationtominimal}.

\subsubsection{Voros potential, BPS $\tau$-function and TR partition function function}
\label{subsub:Voros-potential-intro}

{

We observe that Voros coefficients of the quantum curve ${\bm E}( {\bm \nu})$ naturally define a formal power series valued one-form on the space $M$ of parameters ${\bm m}$ which is \emph{closed} (in fact exact), and arises from a potential function $\phi$; 
in particular, we have a formal power series valued function on $M$ satisfying  \begin{equation}
d_M \phi = \sum_{s} V_{\beta_s} dm_s, 
\end{equation}
where $\{ \beta_s \}$  is a natural basis of $\Gamma^\ast$. We call $\phi$ the \emph{Voros potential}. The following theorem tell us that the Voros potential (more precisely, exponential of its $\hbar$-derivative) is a topological recursion / quantum curve analogue of the BPS $\tau$-function.   

\begin{thm}[Theorem \ref{thm:taupotential}]
Let $(\Gamma_{\text{\rm \DJ}},Z_{\text{\rm \DJ}},\Omega_{\text{\rm \DJ}})$ be the almost-doubled variation of BPS structures corresponding to one of the families of spectral curves of hypergeometric type, and $\ell$ be any non-BPS ray. 
The BPS $\tau$-function $\tau^{\rm Vor}_{{\rm BPS}, \ell}$ associated to the Voros solution $X^{\rm Vor}_{\ell}$ is given by the Borel sum of the $\hbar$-derivative of the Voros potential as: 
\begin{equation}
\tau_{{\rm BPS}, \ell}^{\rm Vor} = c_{\ell} \, \mathcal{S}_{\ell}e^{- \partial_\hbar \phi},
\end{equation}
where $c_\ell$ is an explicit constant factor (see Theorem \ref{thm:taupotential}).
\end{thm}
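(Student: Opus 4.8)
The plan is to compare the two sides via their defining differential equations in the $\hbar$-variable and their asymptotic normalization, rather than by brute-force resummation. First I would unwind the definition of the BPS $\tau$-function: by construction (see \S\ref{sec:tau-def}), $\log\tau^{\rm Vor}_{{\rm BPS},\ell}$ is characterized, up to an $\hbar$-independent constant, as a Hamiltonian generating the $\hbar$-dependent vector field on $M$ determined by the solution $X^{\rm Vor}_\ell$; concretely this amounts to an identity of the form $d_M \log\tau^{\rm Vor}_{{\rm BPS},\ell} = \frac{1}{2\pi i}\sum_s \bigl(\text{something built from }\partial_\hbar\log X^{\rm Vor}_{\beta_s,\ell}\bigr)\,dm_s$ together with the $\hbar$-derivative relation tying it to the central charges. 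Using Theorem~\ref{thm:voros-soln-intro}, the components $X^{\rm Vor}_{\beta_s,\ell}$ are the Borel sums of the Voros symbols $e^{V_{\beta_s}}$, so $\partial_\hbar\log X^{\rm Vor}_{\beta_s,\ell} = \mathcal S_\ell\,\partial_\hbar V_{\beta_s}$. On the other side, the Voros potential satisfies $d_M\phi = \sum_s V_{\beta_s}\,dm_s$ by definition, hence $d_M(\partial_\hbar\phi) = \sum_s (\partial_\hbar V_{\beta_s})\,dm_s$. Thus after taking $\mathcal S_\ell$ and $\partial_\hbar$, both $-\partial_\hbar\log\tau^{\rm Vor}_{{\rm BPS},\ell}$ and $\mathcal S_\ell\,\partial_\hbar\phi$ are primitives on $M$ of the \emph{same} closed one-form $\sum_s (\mathcal S_\ell\,\partial_\hbar V_{\beta_s})\,dm_s$; they can therefore differ only by a function of $\hbar$ alone. (The closedness needed here is exactly the observation, quoted in \S\ref{subsub:Voros-potential-intro}, that the Voros one-form is closed.)

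The second step is to pin down that $\hbar$-dependent discrepancy, i.e.\ to compute the constant $c_\ell$. Here I would use the explicit small-$\hbar$ (or all-order formal) asymptotics on both sides. On the $\tau$-function side, Bridgeland's construction fixes the normalization of $\tau_{\rm BPS}$ through the prescribed constant term / leading asymptotics of the solution $X_\ell$, which in our case is the constant term $\xi_{\text{\DJ},\bm\nu}$ from Theorem~\ref{thm:voros-soln-intro}; this feeds into the $\hbar\to 0$ behaviour of $\log\tau$. On the Voros side, the explicit Bernoulli-polynomial formula \eqref{eq:introvoros} for $V_{\beta,k}$ (and the companion explicit formula for the cycle Voros coefficients in terms of ${\bm m},{\bm\nu}$) lets me write $\partial_\hbar\phi$ order-by-order in $\hbar$ and identify its ``constant'' (i.e.\ $M$-independent integration) part. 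Matching the two expansions determines $c_\ell$ explicitly as a function of the central charges $Z(\gamma)$, the $\nu(\gamma)$, and the quadratic refinement — which is precisely the $c_\ell$ recorded in the full statement of Theorem~\ref{thm:taupotential}. The comparison of these expansions is where the combinatorics of Barnes $G$-functions / the Bernoulli generating function $\sum_k B_{k+1}(x)t^k/(k+1)!$ enters, mirroring the uncoupled $\tau$-function computation of \cite{Bri19}.

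The main obstacle, I expect, is not the $M$-direction argument but the bookkeeping in the $\hbar$-direction: reconciling Bridgeland's normalization convention for $\tau_{\rm BPS}$ (defined via a Hamiltonian/vector-field condition with a particular choice of constant term and a particular branch of the half-plane-valued solution) with the WKB normalization of the Voros potential $\phi$, including keeping track of the quadratic refinement signs and the contributions of the $\Omega(\gamma)=-1$ BPS cycles, which in \eqref{eq:introvoros} carry the symmetrized combination $B_{k+1}(\tfrac{\nu}{2})+B_{k+1}(1+\tfrac{\nu}{2})$ rather than a single Bernoulli polynomial. A secondary technical point is ensuring the interchange of Borel summation $\mathcal S_\ell$ with $d_M$ and with $\partial_\hbar$ is legitimate on the relevant sectors; this should follow from the uniformity of the Borel summability established in the earlier sections (the Voros symbols being Borel summable along every non-BPS ray, with the expansions depending holomorphically on ${\bm m}$), so that $\mathcal S_\ell$ commutes with these operations and the ``differ by a function of $\hbar$'' conclusion is valid as an identity of genuine (resummed) functions, not merely formal series. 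Once $c_\ell$ is identified by the asymptotic matching, the theorem follows.
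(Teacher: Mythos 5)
Your plan matches the paper's proof: both identify the defining condition \eqref{eq:tau-concrete} for $\log\tau^{\rm Vor}_{{\rm BPS},\ell}$ with the potential equation \eqref{eq:vorpot} for the Voros potential (checked at the level of asymptotic expansions and promoted to an analytic identity by Borel summability, with $\mathcal S_\ell$ commuting with $\partial_\hbar$ and $d_M$), so the two sides agree up to an ${\bm m}$-independent function of $\hbar$. The only divergence is in how that factor $c_\ell$ is pinned down: since a BPS $\tau$-function is intrinsically ambiguous up to multiplication by a function of $\hbar$, the paper simply \emph{chooses} $c_\ell$ so that $\tau^{\rm Vor}_{{\rm BPS},\ell}$ is invariant under the rescaling $(m_s,\hbar)\mapsto(\lambda m_s,\lambda\hbar)$, rather than matching against a pre-existing normalization fixed by the constant term $\xi_{\text{\DJ},{\bm\nu}}$ as you suggest (the constant term constrains $X_\ell$, not the normalization of $\tau$).
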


We can also express $\tau_{\rm BPS, \ell}^{\rm Vor}$ as a product of modified Barnes $G$-functions, which is closely related to the BPS $\tau$-function obtained by Alexandrov-Pioline \cite{AP}. 

In all examples, we furthermore find there exists a choice (not unique) of the parameter ${\bm \nu}={\bm \nu}_*$ so that we have $\xi_{{\text{\rm \DJ}},{\bm \nu}_*}(\gamma_{\rm BPS}) = 1$ for all BPS cycles $\gamma_{\rm BPS}$. If we specialize the parameter to this value, we have 


\begin{coro}
\label{coro:maintheorem-1}
Let $(\Gamma_{\text{\rm \DJ}},Z_{\text{\rm \DJ}},\Omega_{\text{\rm \DJ}})$ be the almost-doubled variation of BPS structures corresponding to one of the families of spectral curves of hypergeometric type, and $\ell$ be any non-BPS ray. If we specialize the parameter value ${\bm \nu}={\bm \nu}_*$, the BPS $\tau$-function $\tau_{\rm BPS, \ell}^{\rm Vor}$ is reduced to the BPS $\tau$-function $\tau_{\rm BPS, \ell}^{\rm hol}$ associated with the holomorphic solution $X_{\ell}^{\rm hol}$. 
More precisely, we have
\begin{equation}
\tau_{{\rm BPS}, \ell}^{{\rm Vor}} \,\Big|_{{\bm \nu} = {\bm \nu}_*} 
=\varkappa \cdot \tau_{\rm BPS, \ell}^{\rm hol}, 
\end{equation}
where $\varkappa$ is either 1 or a simple explicit function independent of $\ell$. 
\end{coro}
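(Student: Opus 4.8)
The plan is to compare $\tau_{{\rm BPS},\ell}^{\rm Vor}$ with $\tau_{{\rm BPS},\ell}^{\rm hol}$ by routing through the \emph{minimal} solution of \cite{Bar}. Recall from \S\ref{subsection:Voros-vs-minimal-intro} that for an uncoupled BPS structure the minimal solution $X^{\rm min}_{\ell}$ with constant term $\xi$ coincides with Bridgeland's piecewise-holomorphic solution $X^{\rm hol}_{\ell}$ exactly when $\xi(\gamma) = 1$ for every $\gamma$ with $\Omega(\gamma) \neq 0$; since the jumps of the Riemann--Hilbert problem occur only across BPS rays, the value of $\xi$ on charges with $\Omega(\gamma) = 0$ plays no role. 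By construction of ${\bm \nu}_*$ we have $\xi_{{\text{\DJ}},{\bm \nu}_*}(\gamma_{\rm BPS}) = 1$ for every BPS cycle $\gamma_{\rm BPS}$ of the almost-doubled structure, which is itself uncoupled, so that at ${\bm \nu} = {\bm \nu}_*$ the minimal solution of the almost-doubled BPS Riemann--Hilbert problem \emph{equals} the holomorphic one. Consequently the associated BPS $\tau$-functions coincide, $\tau_{{\rm BPS},\ell}^{\rm min}\big|_{{\bm \nu}={\bm \nu}_*} = \tau_{{\rm BPS},\ell}^{\rm hol}$.

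It therefore suffices to express $\tau_{{\rm BPS},\ell}^{\rm Vor}$ through $\tau_{{\rm BPS},\ell}^{\rm min}$. This I would extract from the explicit relation between $X^{\rm Vor}_{\ell}$ and $X^{\rm min}_{\ell}$ established in \S\ref{sec:relationtominimal}: the two differ by an elementary, $\ell$-independent factor built from Gamma-type functions of $Z(\gamma)/(2\pi i\hbar)$ and the ${\bm \nu}$-parameters. Passing this through Bridgeland's definition of the BPS $\tau$-function as a Hamiltonian potential for the vector field determined by the logarithmic derivatives $\partial_{m_s}\log X_{\beta_s}$ (\S\ref{sec:tau-def}) produces a relation $\tau_{{\rm BPS},\ell}^{\rm Vor} = \varkappa\cdot\tau_{{\rm BPS},\ell}^{\rm min}$, where $\varkappa$ is obtained by integrating the (exact) one-form $\sum_s \partial_{m_s}\log(\text{factor})\,dm_s$ over the parameter space $M$. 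The most transparent way to pin down $\varkappa$ is to compare the two explicit presentations as products of (modified) Barnes $G$-functions: on the one hand $\tau_{{\rm BPS},\ell}^{\rm Vor} = c_{\ell}\,\mathcal{S}_{\ell}e^{-\partial_\hbar\phi}$ with $\phi$ the Voros potential and $c_\ell$ the constant of Theorem \ref{thm:taupotential}; on the other hand the product-of-Barnes-$G$ formula for Bridgeland's uncoupled holomorphic solution \cite{Bri19} (see also \cite{AP}). Specializing ${\bm \nu} = {\bm \nu}_*$ sends the arguments $\tfrac{1+\nu(\gamma)}{2}$, $\tfrac{\nu(\gamma)}{2}$, $\dots$ occurring in \eqref{eq:introvoros} precisely to the values realizing $\xi = 1$, so that the ratio of the two expressions collapses to the elementary function $\varkappa$; combining with the previous paragraph gives $\tau_{{\rm BPS},\ell}^{\rm Vor}\big|_{{\bm \nu}={\bm \nu}_*} = \varkappa\cdot\tau_{{\rm BPS},\ell}^{\rm hol}$.

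I expect the genuine work to be threefold. First, one must check that a single ${\bm \nu}_*$ with $\xi_{{\text{\DJ}},{\bm \nu}_*}(\gamma_{\rm BPS}) = 1$ for all BPS cycles at once does exist; this is a finite verification carried out example by example against Table \ref{table:BPS-str-HG}, matching the quadratic-refinement signs in the definition \eqref{eq:nuxi} of $\xi_{{\text{\DJ}},{\bm \nu}}$ against the arithmetic of the Bernoulli polynomials and Gamma functions. Second, and this is the main obstacle, one has to verify that $\varkappa$ is \emph{genuinely independent of $\ell$}: both $\tau_{{\rm BPS},\ell}^{\rm Vor}$ (through $c_\ell$ and the Borel-summation ray hidden in $\mathcal{S}_{\ell}$) and $\tau_{{\rm BPS},\ell}^{\rm hol}$ carry their own $\ell$-dependence, and these must cancel in the quotient so that $\varkappa$ descends to a function on $M\times{\mathbb C}^\ast_\hbar$ alone. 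Third, the BPS cycles with $\Omega(\gamma_{\rm BPS}) = -1$ (those coming from degenerate ring domains) require separate bookkeeping, since in \eqref{eq:introvoros} their contribution involves the symmetric combination $\tfrac12\bigl(B_{k+1}(\tfrac{\nu(\gamma)}{2}) + B_{k+1}(1+\tfrac{\nu(\gamma)}{2})\bigr)$ rather than a single Bernoulli polynomial; it is exactly here that $\varkappa$ can fail to be $1$ and become instead the simple explicit expression claimed in the statement. A sign-chase through these cases, together with the $\ell$-independence check, would complete the argument.
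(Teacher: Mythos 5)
Your overall strategy is sound and close to the paper's, but with one reordering: the paper proves the corollary by comparing $X^{\rm Vor}_{\ell}\big|_{{\bm \nu}={\bm \nu}_*}$ \emph{directly} with $X^{\rm hol}_{\ell}$ at the level of asymptotic expansions (Corollary \ref{coro:vorminhol}, matching \eqref{eq:asymptotic-log-holomorphic-solution} against the Bernoulli-polynomial formula \eqref{eq:voros-coeffs-general} case by case), obtaining a discrepancy $\varrho_\beta$, and then reruns the potential-integration argument of Corollary \ref{thm:maintheorem-1} with $\rho$ replaced by $\varrho$. You instead propose to factor through the minimal solution: $\tau^{\rm Vor}=\kappa_\ell\,\tau^{\rm min}$ followed by $\tau^{\rm min}\big|_{{\bm \nu}_*}=\tau^{\rm hol}$. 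The second step is correct: when $\xi(\gamma)=1$ on active classes, \eqref{eq:mintau} reduces to \eqref{eq:tau-BPS-hol} since $\Upsilon(w,0)=\Upsilon(w)$. Your remarks on fixing $\varkappa$ are also on target, though note the defining relation is $\partial_{m_s}\log\tau=-\partial_\hbar\log Y_{\ell,\beta_s}$ (an $\hbar$-derivative, not $\partial_{m_s}\log X_{\beta_s}$), and the $\ell$-independence you worry about is in fact easy: the discrepancies $\varrho_\beta$ are exponentials of \emph{convergent} series in $\hbar$ (e.g.\ $-\log(1-\tfrac{\hbar}{2m_s})$), hence carry no Borel-ray dependence, and the residual integration constant is pinned down by the leading $\log$-asymptotics (equivalently the scale-invariance normalization built into $c_\ell$).

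The one genuine caveat in your route is the first step. The Voros-vs-minimal comparison (Corollary \ref{cor:vormin} and hence the formula \eqref{eq:expression-of-kappa-ell} for $\kappa_\ell$) is established only under the hypothesis ${\rm Re}\,\nu(\gamma_{\rm BPS})\in(-2,0]$ for loop-type BPS cycles, whereas ${\bm \nu}_*$ forces $\nu_s=1$, i.e.\ $\nu(\gamma_{s_+}-\gamma_{s_-})=2$, for exactly one second-order pole in the Legendre, degenerate Gauss and Gauss cases --- precisely the cases where $\varkappa\neq 1$. With the branch convention ${\rm Im}\log\xi\in[0,2\pi)$, one then has $\xi_{{\bm \nu}_*}(-\gamma)=1$ and $\log\xi(-\gamma)=0$ rather than $-\pi i\nu(\gamma)$, so the second argument of $\Lambda$ in the minimal solution shifts by an integer and the factor \eqref{eq:vorminfactor} must be recomputed via $\Lambda(w,\eta+1)=\Lambda(w,\eta)(1+\tfrac{\eta}{w})$. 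So you cannot simply quote $\kappa_\ell$ at ${\bm \nu}_*$; you must redo the $\Omega(\gamma)=-1$ bookkeeping there, which is exactly the computation the paper performs directly against $X^{\rm hol}$. You correctly flagged this as the crux in your final paragraph, so the plan is salvageable as written, but that recomputation is the actual content of the proof rather than a routine sign-chase.
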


This BPS $\tau$-function $\tau_{\rm BPS, \ell}^{\rm hol}$ was originally obtained by Bridgeland in \cite{Bri19}, where its explicit expression in terms of the Barnes $G$-function was given. 
Through a well-known formula for the asymptotic expansion of the Barnes $G$-function and the formula \eqref{eq:main-formula-from-part-I} of our previous work \cite{IK20}, we can relate $\tau_{\rm BPS}^{\rm hol}$ to the Borel-resummed topological recursion partiton function $\tau_{\rm TR}$. More precisely, we have:

\begin{thm}[Theorem \ref{thm:tauhol}]
Let $(\Gamma_{\text{\rm \DJ}},Z_{\text{\rm \DJ}},\Omega_{\text{\rm \DJ}})$ be the almost-doubled variation of BPS structures corresponding to one of the families of spectral curves of hypergeometric type. 
Then Bridgeland's BPS $\tau$-function $\tau_{\rm BPS}^{\rm hol}$ and the Borel-resummed topological recursion partition function $\tau_{\rm TR}$ are related as:
\begin{equation}
\tau_{{\rm BPS}, \ell}^{{\rm hol}} = {\tau}^{\mathsmaller{>0}}_{{\rm TR}, \ell}.
\end{equation}
Here ${\tau}^{\mathsmaller{>0}}_{{\rm TR}, \ell}$ 
denotes the Borel-resummed TR partition function without the genus $0$ part. 
\end{thm}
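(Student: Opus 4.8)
The plan is to compare two explicit formulas: the Barnes $G$-function expression for $\tau_{\rm BPS, \ell}^{\rm hol}$ obtained by Bridgeland \cite{Bri19} in the uncoupled case, and the sum-over-BPS-cycles formula \eqref{eq:main-formula-from-part-I} for the TR free energies $F_g$ together with the definition $\tau_{\rm TR} = \mathcal{S}_\ell \exp(\sum_{g \ge 0} \hbar^{2g-2} F_g)$. The strategy is to establish the identity at the level of asymptotic expansions in $\hbar$ and then upgrade it to an equality of Borel sums. First I would recall (from the theorems quoted above, in particular the $\tau$-potential theorem and Corollary \ref{coro:maintheorem-1}) that $\tau_{\rm BPS, \ell}^{\rm hol}$ admits an explicit product expression over BPS cycles $\gamma$ with $Z(\gamma)$ in the relevant half-plane, each factor being a (modified) Barnes $G$-function $G\bigl(1 + \tfrac{Z(\gamma)}{2\pi i \hbar}\bigr)$-type object, raised to the power $\Omega(\gamma)$, with suitable prefactors absorbing the linear and quadratic-in-$\tfrac{1}{\hbar}$ terms.

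The key computational input is the classical asymptotic expansion of $\log G(1+w)$ as $w \to \infty$:
\begin{equation}
\log G(1+w) = \frac{w^2}{2}\log w - \frac{3w^2}{4} + \frac{w}{2}\log 2\pi - \frac{1}{12}\log w + \zeta'(-1) + \sum_{g \ge 2} \frac{B_{2g}}{2g(2g-2)} \, w^{2-2g} + \cdots,
\end{equation}
where the tail $\sum_{g\ge2} \tfrac{B_{2g}}{2g(2g-2)} w^{2-2g}$ is precisely what reproduces \eqref{eq:main-formula-from-part-I} once one substitutes $w = Z(\gamma)/(2\pi i \hbar)$ and sums against $\Omega(\gamma)$. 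So the second step is: take $\log$ of Bridgeland's product formula, apply this asymptotic expansion term-by-term in $\gamma$, and collect powers of $\hbar$. The $\hbar^{2g-2}$ coefficient for $g \ge 2$ matches $F_g$ by \eqref{eq:main-formula-from-part-I}; the $g=0$ (i.e.\ $\hbar^{-2}$) and $g=1$ (i.e.\ $\hbar^0$) terms, together with the $w^2 \log w$, $w\log w$, $\log w$ pieces and the constant $\zeta'(-1)$, must be shown to be exactly the parts that the prefactors in Bridgeland's formula were designed to cancel, leaving only the contributions corresponding to $F_g$ for $g \ge 1$ — which is the meaning of the superscript $\mathsmaller{>0}$ on $\tau_{\rm TR}^{\mathsmaller{>0}}$ (the genus-$0$ part $F_0$ is precisely the piece stripped off). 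One then verifies, case by case over the finite list in Table \ref{table:classical} (or uniformly using the explicit central charges as functions of $\bm m$), that no stray $\hbar$-independent or polynomial-in-$1/\hbar$ discrepancy survives; here Corollary \ref{coro:maintheorem-1} and the relation $\xi_{\text{\DJ}, \bm\nu_*}(\gamma_{\rm BPS}) = 1$ are used to ensure the BPS-index data entering both sides coincide.

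The final step is to promote the matching of asymptotic series to an honest equality of Borel sums. Both $\tau_{\rm BPS, \ell}^{\rm hol}$ and $\tau_{\rm TR, \ell}^{\mathsmaller{>0}}$ are, by construction, Borel sums of their respective formal series along the non-BPS ray $\ell$; since the formal series agree by the previous step and Borel summation is injective on the relevant class of series (Borel summable along $\ell$, with the Borel transforms having the same singularity structure dictated by the BPS rays), the Borel sums agree on the appropriate sectorial domain and hence define the same function. I expect the main obstacle to be the careful bookkeeping of the low-genus terms: tracking exactly which constants, which linear-in-$1/\hbar$ contributions, and which $\log \hbar$ (equivalently $\log w$) terms are present on each side, making sure the normalization of Bridgeland's $\tau$-function (its prefactor conventions) lines up with the normalization implicit in $\mathcal{S}_\ell \exp(\sum \hbar^{2g-2}F_g)$ with $F_0$ removed, and handling the genus-$1$ term $F_1$ which involves the subtle $-\tfrac{1}{12}\log w$ and $\zeta'(-1)$ contributions. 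The higher-genus matching, by contrast, is essentially automatic once \eqref{eq:main-formula-from-part-I} and the Barnes asymptotics are in hand.
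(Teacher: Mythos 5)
Your proposal is correct and follows essentially the same route as the paper: the paper likewise matches the asymptotic expansion of $\log\tau^{\rm hol}_{\rm BPS}$ (written as $\sum_\gamma \Omega(\gamma)\log\Upsilon(Z(\gamma)/2\pi i\hbar)$, using \eqref{eq:asymptotic-upsilon}, where the $\Upsilon$-function already has the $w^2\log w$, $w$, and constant pieces stripped off so only the $-\tfrac{B_2}{2}\log w$ and $B_{2g}$-tail survive) against $\sum_{g\ge 1}\hbar^{2g-2}F_g$ via Theorem \ref{thm:Borel-sum-free-energy}, and then invokes the validity of that expansion in a sufficiently large sector to conclude both Borel summability of $F_{\rm TR}$ and the equality of Borel sums. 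The only cosmetic differences are that you work with the raw Barnes $G$-asymptotics and must cancel the prefactors by hand, and that the paper derives (rather than assumes) the Borel summability of the TR partition function from the sectorial asymptotics of $\Upsilon$.
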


}

These results establish a precise link between objects arising from the topological recursion (quantum curves and their Voros coefficients/potential), and objects arising from the BPS structures (solution to the BPS RHP, $\tau$-function, etc.) coming from the same initial data; i.e., the spectral curves of hypergeometric type. We summarize our results as a dictionary between uncoupled BPS structures and topological recursion, depicted in Table \ref{table:bpstr} below, which holds precisely in all our examples (we believe these examples to be exhaustive of finite uncoupled cases when the spectral curve is of degree 2). 

Although the TR is applicable to higher degree spectral curves (\cite{BHLMR-12, BE-12, BE-16}), the existence of the associated BPS structure remains conjectural: we expect that Gaiotto-Moore-Neitzke's approach in \cite{GMN12} allows us to define a natural BPS structure for higher degree spectral curves. We will briefly verify our picture continues to hold in this case based on numerical experiments in our previous work, following computations by Y.\,M.\,Takei \cite{YM-Takei20}.

\begin{table}[h]
\begin{center}  \begin{tabular}{|c|c|} \hline
   \parbox[c][1.75em][c]{0em}{}  
   \textbf{BPS structure}
    & \textbf{TR / quantum curve} 
        \\ \hline 
        
    \parbox[c][1.75em][c]{0em}{}  
   BPS index $\Omega(\gamma)$ &
    coefficients in $F_g$ (alien derivative)
    \\ \hline
     \parbox[c][1.75em][c]{0em}{}     constant term $\xi$ & quantization parameter ${\bm \nu}$
    \\ \hline
    \parbox[c][1.75em][c]{0em}{} 
    solution to the BPS RHP $X_\ell$
    & Borel-resummed Voros symbol $\mathcal{S}_\ell e^{V}$ of quantum curve 
    \\ \hline 
    \parbox[c][1.75em][c]{0em}{} 
    BPS $\tau$-function $\tau^{\rm Vor}_{\rm BPS,{\ell}}$
    & Borel-resummed Voros potential $\mathcal{S}_{\ell}\exp{\left(-{\partial_\hbar\phi}\right)}$
     \\ \hline 
    \parbox[c][1.75em][c]{0em}{} 
    $\tau_{{\rm BPS}, \ell}^{{\rm hol}}$  ($\xi(\gamma_{\rm BPS})\equiv1)$ 
    & Borel-resummed TR partition function  
    $\tau_{\rm TR,{\ell}}= {\mathcal S}_{\ell}[{Z}_{\rm TR}]$ 
     \\ \hline
  \end{tabular} 
   \vspace{+1.em}
     \caption{A dictionary between uncoupled BPS structures and topological recursion.}
     \label{table:bpstr}
     \vspace{-7mm}
\end{center}     
\end{table}

\subsection{Remarks}
\begin{enumerate}
\item Based on our result, we would like to emphasize the following point. For the special parameter value ${\bm \nu}= {\bm \nu}_\ast$ we have shown that the Borel sum of the ($g>0$) TR partition function is equal to $\tau^{\rm hol}_{\rm BPS}$, so that in particular $\tau_{\rm TR }$ encodes the solution to the BPS Riemann-Hilbert problem. That is, $\tau_{\rm TR}$ is characterized by its discontinuity structure and asymptotics, so that one may approach the partition function of TR from the perspective of Riemann-Hilbert-type methods. While the correspondence is literal for all examples we have considered, we hope that this perspective may be useful to deepen our understanding of the topological recursion in more general cases.

\item We also note a similar structure to our result appears in the solution to a different (related) problem, but in a certain limit of quadratic differentials with four second-order poles (``nodal limit of the four-punctured sphere''). In particular, we note the resemblance to the expression for the $1$-loop part of the twisted effective superpotential\footnote{$\widetilde{W}^{\rm eff}$ is itself identified with $\lim\limits_{\epsilon_2\to 0} \epsilon_2 \log{Z_{\rm Nek}}$, the \emph{Nekrasov-Shatashvili free energy}.} $\widetilde{W}^{\rm eff}$ in the conjecture of Nekrasov-Rosly-Shatashvili \cite{NRS11,TV15,HK,JN18}, identified with the generating function of ($\hbar$-)opers. There, a hypergeometric spectral curve also appears, essentially as the four-punctured sphere breaks up into two three-punctured spheres.  However, that calculation corresponds to the $\epsilon_1 \rightarrow 0$ limit, whereas the approach of this paper corresponds to the $\epsilon_2=-\epsilon_1=\hbar$ limit. It is somewhat puzzling that these two regimes produce a similar structure.

\item While $\tau_{\rm BPS}$ differs slightly from our $\tau_{\rm TR}$, the expression for $\tau_{\rm TR}$ in the Weber case (a single BPS state) coincides exactly with the special function $\gamma_\hbar$ appearing in the work of Nekrasov-Okounkov \cite{NO} as the perturbative part of the Nekrasov partition function. While the difference may be attributed to possible variants of the definition of $\tau_{\rm BPS}$ and choices of renormalization schemes defining the partition function, it would be desirable to understand this relationship better.

\item Finally, we mention that our results can also be interpreted in the context of so-called \emph{Joyce structures} \cite{Bri19-2}. This is a recently proposed geometric structure arising naturally on the space of stability conditions from the theory of wall-crossing and Donaldson-Thomas invariants (see also \cite{BSt20, AP, AP2, AST21} for the associated hyperk\"ahler structure). Our solution to the BPS Riemann-Hilbert problem provides a canonical example of one such geometry, though we leave its detailed computation and further investigation to future work.
\end{enumerate}

\subsection{Organization} 
The paper is structured as follows.
In \S \ref{sec:BPS} we recall the notion of BPS structures and 
the define the associated BPS Riemann-Hilbert problem and $\tau$-function. In \S \ref{sec:BPS-from-HG-curves}, we focus on the BPS structures arising from spectral curves of hypergeometric type, recall our previous results, and prepare for the proof of the main results. In \S \ref{sec:quantumcurves} we recall the basics of topological recursion, and the relevant results on quantum curves and 
associated the Voros coefficients, describe them in terms of BPS structures and express their Stokes phenomena.
Finally, in \S \ref{sec:final}, we assemble the pieces and solve the BPS Riemann-Hilbert problem, relate $\tau_{\rm BPS}$ to the TR side, and remark on evidence in higher rank examples. 

\subsection{Acknowledgements} 
We thank 
Dylan Allegretti, Anna Barbieri, Tom Bridgeland, Lotte Hollands, Kento Osuga, and Yumiko Takei
for helpful conversations. 
O.K.'s work was supported by 
a JSPS Postdoctoral Fellowship for Foreign Researchers in Japan. 
This work was also supported by JSPS KAKENHI Grant Numbers 
16K17613, 16H06337, 17H06127, 19F19738 and 20K14323.

}

\section{Review of BPS structures and the BPS Riemann-Hilbert problem}
\label{sec:BPS}

Here we recall the basic facts on BPS structures, including the results of our previous work \cite{IK20}, define the BPS Riemann-Hilbert problem \cite{Bri19}, and give some preparatory lemmas about the variations of BPS structures we will work with. 

\subsection{Definition of BPS structure}

\begin{dfn}[{\cite[Definition 2.1]{Bri19}}]
 A \emph{BPS structure} is a tuple $(\Gamma,Z,\Omega)$ of the following data:
\begin{itemize}
\item a free abelian group of finite rank $\Gamma$ equipped with an antisymmetric pairing 
\begin{equation}
\langle \cdot , \cdot \rangle : \Gamma \times \Gamma \rightarrow \mathbb{Z},
\end{equation} 
\item a homomorphism of abelian groups $Z \colon \Gamma \rightarrow \mathbb{C}$, and
\item a map $\Omega : \Gamma \to {\mathbb Q}$,
\end{itemize}
satisfying the conditions
\begin{itemize}
\item \emph{Symmetry}: $\Omega(\gamma)=\Omega(-\gamma)$ for all $\gamma \in \Gamma$.
\item \emph{Support property}: for some (equivalently, any) choice of norm $\|\cdot\|$ on 
$\Gamma \otimes \mathbb{R}$, there is some $C>0$ such that
\begin{equation} \label{eq:support-property}
\Omega(\gamma)\neq  0 \implies |Z(\gamma)|>C\cdot \|\gamma\|.
\end{equation}
\end{itemize}
We call $\Gamma$ the \emph{charge lattice}, and the homomorphism $Z$ is called the \emph{central charge}. 
The rational numbers $\Omega(\gamma)$ are called the \emph{BPS indices} or \emph{BPS invariants}.
\end{dfn}

Let us recall some useful terminology \cite{Bri19}: 
\begin{itemize} 
\item
An \emph{active class} or \emph{BPS state} is an element $\gamma \in \Gamma$ which has 
nonzero BPS index, $\Omega(\gamma)\neq 0$.
\item 
A \emph{BPS ray} is a subset of ${\mathbb C}^\ast$  
which can be written as $\ell_\gamma = Z(\gamma) \, {\mathbb R}_{> 0}$ for some active class $\gamma$.


\end{itemize}
\noindent We note that the central charge $Z(\gamma)$ for an active class $\gamma$ is nonzero 
due to the support property. 

We can also consider certain classes of BPS structures with especially nice properties. A BPS structure $(\Gamma, Z, \Omega)$ is said to be
\begin{itemize} 
\item 
\emph{finite} if there are only finitely many active classes,
\item
\emph{uncoupled} if $\langle \gamma_{1},\gamma_{2}\rangle=0$ holds whenever 
$\gamma_{1},\gamma_{2}$ are both active classes
\item
\emph{integral} if the BPS invariant $\Omega$ takes values in ${\mathbb Z}$.
\end{itemize}

Though we will stick to such examples in this paper, in order to formulate the BPS Riemann-Hilbert problem in the next section, we may weaken these conditions. We call a BPS structure \emph{ray-finite} if there are finitely many active classes $\gamma$ with $Z(\gamma)\in \ell$ for a given BPS ray $\ell$, and \emph{generic} if $\langle \gamma_1,\gamma_2\rangle=0$ whenever $\gamma_1, \gamma_2$ are active and $Z(\gamma_1)$ and $Z(\gamma_2)$ lie on the same BPS ray. 

If the BPS structure is not uncoupled, we call it \emph{coupled}. This is in fact the more typical situation, but the analysis of the BPS structure and the corresponding Riemann-Hilbert problem given in the next section is much more difficult in 
this case (see \cite{Bri19,All19}). All our considerations in this paper will be for 
finite, uncoupled and integral BPS structures.

\subsection{Twisted torus and BPS automorphism}

For a lattice $\Gamma$ with an antisymmetric pairing $\langle \cdot, \cdot \rangle$, 
we define the corresponding \emph{twisted torus} $\mathbb{T}_-$ as the set of twisted homomorphisms into $\mathbb{C}^\ast$:
\begin{align}
\mathbb{T}_- := \left\{\xi : \Gamma \rightarrow \mathbb{C}^\ast  \; | \; 
\xi{(\gamma_1+\gamma_2)}= (-1)^{\langle\gamma_1,\gamma_2\rangle}\xi(\gamma_1)\xi(\gamma_2)\right\}
\end{align}

We also have the usual torus $\mathbb{T}_+:= \Hom(\Gamma,\mathbb{C}^*)$. We denote by $x_{\gamma} : {\mathbb T}_{-} \to {\mathbb C}^\ast$
the \emph{twisted character} naturally associated to $\gamma \in \Gamma$,
defined by ${x}_\gamma(\xi) = {\xi}(\gamma)$, which generate the coordinate ring of ${\mathbb T}_{-}$. Similarly, the ordinary \emph{characters} of the usual torus are denoted $y_\gamma$. The twisted and usual characters satisfy
\begin{equation}
x_\gamma x_{\gamma'} = (-1)^{\langle \gamma,\gamma' \rangle}x_{\gamma+\gamma'}, \qquad y_\gamma y_{\gamma'} = y_{\gamma+\gamma'}.
\end{equation}

Since $\mathbb{T}_+$ acts on $\mathbb{T}_-$ in the obvious way, a point $\xi\in \mathbb{T}_-$ can be understood as a point in $\mathbb{T}_+$ together with a ``base point'' $\sigma \in \mathbb{T}_-$. There is a particularly natural set of possible base points, intertwining usual inversion with the map $x_\gamma \mapsto x_{-\gamma}$, those taking values in $\mathbb{Z}_2$. These are the so-called \emph{quadratic refinements} $\sigma:\Gamma\rightarrow\{\pm1\}$. 

{Since $\mathbb{T}_+$ carries a Poisson structure given by 
\begin{equation} \label{eq:ordinary-poiss}
     \{y_\gamma,y_{\gamma'}\}=\langle \gamma,\gamma'\rangle \cdot y_{\gamma} y_{\gamma'}.
\end{equation}
We may transport this to $\mathbb{T}_-$ using any such base point, giving it a Poisson structure too (independent of the choice).
}

The twisted torus ${\mathbb T}_-$ will {essentially} be the space in which the solution to 
an associated Riemann-Hilbert problem takes values. The Riemann-Hilbert jump contour will 
be a finite union of rays in ${\mathbb C}^\ast$ whose coordinate is denoted by $\hbar$.
Here, a ray is a subset of the form 
$\ell = e^{i \vartheta}{\mathbb R}_{>0}\subset {\mathbb C}^\ast$.
Given a ray $\ell$, we associate a \emph{BPS automorphism} ${\mathbb S}(\ell)$ 
as the jump factor in the Riemann-Hilbert problem. 
In \cite{Bri19}, the BPS automorphisms were defined as a time 1 Hamiltonian flow of 
a certain function which contains the information of BPS/DT invariants.  
Note that these operators were also considered in \cite{KS08, GMN08}.

In general, defining the BPS automorphism needs a careful analysis of convergence problems, 
and one may need to work with an appropriate completion of the twisted torus.
However, thanks to \cite[Proposition 4.2]{Bri19}, 
the BPS automorphisms for finite, uncoupled, integral BPS structures  
are given by the explicit birational automorphisms
${\mathbb S}(\ell) : {\mathbb T}_- \dashrightarrow {\mathbb T}_-$:
\begin{equation} \label{eq:BPS-auto}
{\mathbb S}(\ell)^\ast ({x}_\beta) := {x}_\beta \, 
\prod_{\substack{\gamma \in \Gamma \\ Z(\gamma) \in \ell}} 
(1 - {x}_\gamma)^{\Omega(\gamma) \langle \gamma, \beta \rangle} 
\quad (\beta \in \Gamma).
\end{equation}
Here we define ${\mathbb S}(\ell)$ via its pullback ${\mathbb S}(\ell)^\ast$ 
acting on the coordinate ring of ${\mathbb T}_{-}$, also called the {\em Kontsevich-Soibelman transform} in \cite{GMN09}.
Note that the operator acts trivially on twisted characters 
$x_\gamma$ corresponding to ``null'' elements $\gamma \in \Gamma$ with zero intersection $\langle\gamma,\gamma' \rangle$ for all $\gamma' \in \Gamma$.
We will take \eqref{eq:BPS-auto} as the definition of the BPS automorphism, 
and consider the associated Riemann-Hilbert problem below.

\subsection{BPS Riemann-Hilbert problem}
\label{section:BPS-RHP}

Here we formulate the Riemann-Hilbert problem associated with a ray-finite, generic, integral BPS structure, 
following \cite{Bri19}. We will call it the \emph{BPS Riemann-Hilbert problem} for short. 
Roughly speaking, in the BPS Riemann-Hilbert problem, 
we seek a twisted-torus-valued, piecewise-meromorphic map on ${\mathbb C}^\ast$
which jumps when crossing a BPS ray, with the jump given precisely by the BPS automorphism.  
Note that the BPS Riemann-Hilbert problem was defined for more general BPS structures 
(i.e., for ``convergent'' BPS structure) in \cite{Bri19}; see \cite[Appendix A, B]{Bri19} 
for technical details in defining the BPS automorphisms in general setting.

Given a ray $\ell =  e^{i \vartheta}\,{\mathbb R}_{>0} \subset {\mathbb C}^\ast$ for some 
$\vartheta \in {\mathbb R}$, we denote by
\begin{equation}
{\mathbb H}_\ell := \{ \hbar \in {\mathbb C}^\ast ~|~ |\arg \hbar - \vartheta| < \frac{\pi}{2} \}
\end{equation}
the half plane centred on $\ell$. 
Then, the BPS Riemann-Hilbert problem is formulated as follows {(without loss of generality we may consider $\mathbb{H}_\ell$ as the domain)}:

\begin{prob}[{\cite{Bri19,Bar}}]
\label{prob:holRHP}
Let $(\Gamma, Z, \Omega)$ be a ray-finite, generic, integral BPS structure.  
Fix an element $\xi \in \mathbb{T}_-$ (``the constant term''). 
For every non-BPS ray $\ell \subset \mathbb{C}^\ast$, find meromorphic functions
\begin{equation}
 X_{\ell,\gamma} \colon \mathbb{H}_\ell \rightarrow \mathbb{C}
\end{equation}
for every $\gamma\in\Gamma$, with the following properties:
\begin{itemize}
    \item[(RH1)] \emph{Jumping.} 
    Suppose that two non-BPS rays $\ell_1, \ell_2 \subset {\mathbb C}^\ast$ form the boundary rays 
    of an acute sector $\Delta \subset {\mathbb C}^\ast$ taken in the clockwise order. 
    Then, 
    \begin{equation} \label{eq:RHP-jump}
        X_{\ell_2}(\hbar)={\mathbb S}(\Delta) \, X_{\ell_1}(\hbar)
    \end{equation}
    whenever $\hbar \in \mathbb{H}_{\ell_1}\cap \mathbb{H}_{\ell_2}$ and $0< |\hbar|\ll 1$. 
    Here ${\mathbb S}(\Delta)$ is the birational automorphism of ${\mathbb T}_-$ given by 
    the product of all BPS automorphisms ${\mathbb S}_\ell$ with $\ell \subset \Delta$. More explicitly, 
    \begin{equation} \label{eq:RHP-jump-explicit}
    X_{\ell_2, \gamma}(\hbar) = 
    X_{\ell_1, \gamma}(\hbar) \,
    \prod_{\substack{\gamma' \in \Gamma \\ Z(\gamma') \in \Delta}} 
    (1 - X_{\ell_1, \gamma'}(\hbar))^{\Omega(\gamma') \langle \gamma', \gamma \rangle},
    \qquad (\gamma \in \Gamma).
    \end{equation}

    \item[(RH2)] \emph{Asymptotics at 0.} 
    For any $\gamma \in \Gamma$, as $\hbar\rightarrow 0$ in $\mathbb{H}_\ell$, we have 
    \begin{equation}
 \, X_{\ell,\gamma}(\hbar) \sim  e^{-Z(\gamma)/\hbar} \xi(\gamma)
    \end{equation}
    whenever $\ell$ is a non-BPS ray.
    
    \item[(RH3)] \emph{Polynomial growth at $\infty$.} 
   For every $\gamma \in \Gamma$ and non-BPS ray $\ell$, there exists some $k$
    \begin{equation}
        |\hbar|^{-k}<|X_{\ell,\gamma}(\hbar)|<|\hbar|^k
    \end{equation}
    whenever $|\hbar|\gg 0$ in $\mathbb{H}_\ell$.
\end{itemize}
{We denote the collection of all $X_{\ell,\gamma}$ by $X_\ell$, which should then be understood as meromorphic maps 
\begin{equation}
    X_{\ell}:\mathbb{H}_{\ell}\setminus{\Xi_\ell} \rightarrow \mathbb{T}_{-},
\end{equation}
where $\Xi_\ell$ denotes the set of zeroes of all $X_{\ell,\gamma}$, for $\gamma\in\Gamma$.}
\end{prob}


\begin{rem}
Note that the BPS Riemann-Hilbert problem is closely related
to the infinite-dimensional Riemann-Hilbert problem encountered by Gaiotto-Moore-Neitzke \cite{GMN09} in constructing 
special Darboux coordinate functions on the moduli space of Higgs bundles (see also \cite{BTL, FFS}).
The solution is connected to exact WKB analysis in \cite{Bri19, All19}, 
and is expected to furthermore satisfy a kind of 
thermodynamic Bethe ansatz equation 
\cite{Gaiotto14, GMN09, Neitzke17}. 
Many interesting properties remain to be understood, even conjecturally. 
Solutions to the BPS Riemann-Hilbert problem should furthermore provide 
a map between cluster varieties and spaces of stability structures, 
proved in certain cases by \cite{All19}. Apart from the uncoupled cases and the works \cite{All19,Bri19}, very little is currently known about the general solution.
\end{rem}

For any finite, uncoupled, and integral BPS structure, Bridgeland \cite{Bri19} gave the explicit and unique piecewise-\emph{holomorphic} solution to this problem for a specific choice of constant term ($\xi(\gamma) = 1$ for every active $\gamma\in\Gamma$) in terms of the gamma function. Barbieri observed in \cite{Bar} that for other values of $\xi$, there will usually be no holomorphic solution, but generalized the solution of \cite{Bri19} as follows. To write it down, we use a variant of the gamma function,
\begin{equation} \label{eq:Lambda-function-original}
\Lambda(w,\eta) = \frac{e^{w} \, \Gamma(w+\eta)}{\sqrt{2\pi} \, w^{w + \eta - \frac{1}{2}}},
\end{equation}
and will later sometimes write $\Lambda(w)=\Lambda(w,0) = \Lambda(w,1)$. 
It is a multivalued function due to the factor $w^{w+\eta+1/2} = \exp((w+\eta+1/2) \log w)$, and we will restrict the domain of the function to ${\rm Re} \, w > 0$ and take the principal branch of $\log w$. The asymptotic expansion of $\Lambda(w, \eta)$ is given explicitly (see e.g. \cite{Bar}) as 
\begin{equation}
    \log \Lambda(w,\eta) \sim  \sum_{k \ge 1} \frac{B_{k+1}(1 - \eta)}{k(k+1)} w^{-k},
\end{equation}
valid when $w \to \infty$ in the complement of any closed sector containing ${\mathbb R}_{< 0}$.

\begin{thm}[{\cite{Bar,Bri19}}] \label{thm:Anna-RHP-sol}
Let $(\Gamma, Z, \Omega)$ be a finite, uncoupled, integral BPS structure, and fix $\xi \in \mathbb{T}_-$.
 Then the associated BPS Riemann-Hilbert problem has a solution, which associates to a non-BPS ray $\ell$ the map $X^{\rm min}_{\ell}$ 
whose value on $\gamma \in \Gamma$ is given by\footnote{We have corrected a sign error appearing in \cite{Bar}.} 
\begin{equation} \label{eq:Anna-solution}
X^{\rm min}_{\ell, \gamma}(\hbar) = e^{ -{Z(\gamma)}/{\hbar}} \cdot \xi(\gamma) \cdot 
\prod_{\substack{\gamma' \in \Gamma \\ Z(\gamma') \in i {\mathbb H}_{\ell} }} 
{\displaystyle\Lambda}\left(\frac{Z(\gamma')}{2 \pi i \hbar},\frac{\log\xi(-\gamma')}{2\pi i} \right)^{\Omega(\gamma') \langle \gamma, \gamma' \rangle}.
\end{equation}
Here, we fix the branch of the logarithm appearing in the formula so that ${\rm Im}\log{\xi}(\gamma) \in [0,2\pi)$ for $\gamma \in \Gamma$. 
When $\xi$ is chosen so that $\xi(\gamma)=1$ for all active $\gamma\in \Gamma$, this becomes the unique solution which is holomorphic on the half planes ${\mathbb H}_\ell$, denoted 
\begin{equation} \label{eq:X-hol}
X^{\rm hol}_{\ell, \gamma}(\hbar) = 
e^{ -{Z(\gamma)}/{\hbar}} \cdot \xi(\gamma) \cdot 
\prod_{\substack{\gamma' \in \Gamma \\ Z(\gamma') \in i {\mathbb H}_{\ell} }} 
{\displaystyle\Lambda}\left(\frac{Z(\gamma')}{2 \pi i \hbar} \right)^{\Omega(\gamma') \langle \gamma, \gamma' \rangle}.
\end{equation}
\end{thm}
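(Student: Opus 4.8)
The plan is to verify the three Riemann-Hilbert conditions (RH1)-(RH3) directly for the candidate solution \eqref{eq:Anna-solution}, exploiting the uncoupled hypothesis to decouple the problem into independent one-variable gamma-function-type problems. Since the BPS structure is finite and uncoupled, the set of active classes $\gamma'$ comes (up to sign) in finitely many ``rank-one'' blocks, and in the product defining $X^{\rm min}_{\ell,\gamma}$ each factor depends on a single active $\gamma'$ through the variable $w = Z(\gamma')/(2\pi i \hbar)$. The key analytic input is the modified gamma function $\Lambda(w,\eta)$ of \eqref{eq:Lambda-function-original}: I will use the reflection formula for $\Gamma$ to compute how $\Lambda(w,\eta)$ changes when $w$ is analytically continued across the negative real axis (equivalently, when $Z(\gamma')/(2\pi i\hbar)$ crosses $\mathbb{R}_{<0}$, i.e. when $\ell$ crosses the BPS ray $\ell_{\gamma'}$), and the stated asymptotic expansion $\log\Lambda(w,\eta)\sim\sum_{k\ge1}\frac{B_{k+1}(1-\eta)}{k(k+1)}w^{-k}$ to control the behaviour as $\hbar\to0$ and $\hbar\to\infty$.

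First I would check (RH2): as $\hbar\to0$ in $\mathbb{H}_\ell$, each $w=Z(\gamma')/(2\pi i\hbar)\to\infty$ in a half-plane avoiding $\mathbb{R}_{<0}$ (precisely because $\ell$ is non-BPS, so $Z(\gamma')\notin \pm\ell$), hence $\Lambda(w,\eta)\to1$, and the prefactor $e^{-Z(\gamma)/\hbar}\xi(\gamma)$ gives exactly the required asymptotics; the choice $\eta = \frac{\log\xi(-\gamma')}{2\pi i}$ and the specified branch of the logarithm are what make $\xi$ appear correctly and make $\Lambda$ single-valued on the relevant domain. Next (RH3): the polynomial growth as $|\hbar|\to\infty$ follows since $w\to0$, where $\Lambda(w,\eta)=e^w\Gamma(w+\eta)/(\sqrt{2\pi}\,w^{w+\eta-1/2})$ has at worst a power-type singularity in $w$ (a pole of $\Gamma$ when $\eta\in\mathbb{Z}_{\le0}$, the branch factor $w^{-(w+\eta-1/2)}$, and $e^{-Z(\gamma)/\hbar}\to1$), giving the two-sided bound $|\hbar|^{-k}<|X_{\ell,\gamma}|<|\hbar|^k$ for suitable $k$. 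The main work is (RH1): I would compute $\mathbb{S}(\Delta)^\ast(x_\gamma)$ from \eqref{eq:BPS-auto} and match it against the ratio $X^{\rm min}_{\ell_2,\gamma}/X^{\rm min}_{\ell_1,\gamma}$. As $\ell$ rotates clockwise from $\ell_1$ to $\ell_2$ sweeping $\Delta$, the active classes $\gamma'$ entering the product $\{Z(\gamma')\in i\mathbb{H}_\ell\}$ change precisely for those $\gamma'$ with $Z(\gamma')$ in the boundary of $i\mathbb{H}_\ell$, and the contribution of such a factor jumps by the continuation of $\Lambda$ across $w\in\mathbb{R}_{<0}$; the reflection-formula computation shows this jump is exactly $(1-x_{\gamma'})^{\pm\Omega(\gamma')\langle\gamma,\gamma'\rangle}$, reproducing the Kontsevich-Soibelman transform (using $x_{\gamma'}=e^{-Z(\gamma')/\hbar}\xi(\gamma')$, which matches the leading asymptotics of $X^{\rm min}$ and, crucially, that in the uncoupled case $X^{\rm min}_{\ell,\gamma'}$ and $x_{\gamma'}$ differ only by factors that are $\langle\cdot,\gamma'\rangle$-trivial, so no discrepancy arises).

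The hard part will be bookkeeping the signs and branches in (RH1): keeping straight the clockwise-versus-counterclockwise orientation convention of the sector $\Delta$, the half-plane $i\mathbb{H}_\ell$ versus $\mathbb{H}_\ell$, the sign $\langle\gamma,\gamma'\rangle$ versus $\langle\gamma',\gamma\rangle$, and which side of $\mathbb{R}_{<0}$ the variable $w$ approaches as $\ell$ crosses a BPS ray — this is exactly where the sign error in \cite{Bar} (flagged in the footnote) lives, so I would be careful to fix all conventions at the outset and track them through the reflection-formula computation. The holomorphy statement for $X^{\rm hol}$ when $\xi\equiv1$ on active classes is then immediate: with $\eta\in\{0,1\}$ the function $\Lambda(w)=\Lambda(w,0)=\Lambda(w,1)$ is holomorphic and nonvanishing on $\mathrm{Re}\,w>0$, hence each factor is holomorphic on $\mathbb{H}_\ell$; uniqueness follows from a standard Liouville-type argument (the ratio of two solutions would be an entire $\mathbb{T}_-$-valued function with controlled growth and trivial asymptotics, hence constant equal to $1$), as in \cite{Bri19}.
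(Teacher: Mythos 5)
The paper itself gives no proof of this theorem---it is quoted from \cite{Bar,Bri19}---but your outline reproduces the argument of those references in all essentials: uncoupledness makes the product decouple into independent rank-one factors and forces $X^{\rm min}_{\ell,\gamma'}=e^{-Z(\gamma')/\hbar}\xi(\gamma')$ \emph{exactly} (not just asymptotically) for active $\gamma'$, so the Kontsevich--Soibelman transform may be evaluated on the solution itself; (RH1) then comes from the Euler reflection formula for $\Gamma$ applied to $\Lambda(w,\eta)$, and (RH2)/(RH3) from the Binet/Stirling asymptotics and the local behaviour of $\Gamma$ and the branch factor $w^{-(w+\eta-1/2)}$ near $w=0$. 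The one point I would state more carefully in (RH1) is that crossing a BPS ray is not the analytic continuation of a single $\Lambda$-factor across $w\in\mathbb{R}_{<0}$, but the simultaneous removal of the factor attached to $-\gamma'$ and insertion of the one attached to $+\gamma'$ (since the half-plane $i\mathbb{H}_\ell$ loses one and gains the other), these two being related by the reflection formula; with that bookkeeping fixed, your plan is the standard proof.
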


\noindent Note that the solution has the structure of a product over BPS states, since the exponent will vanish whenever $\Omega$ does. The main observations of \cite{Bar} are
\begin{itemize}
    \item There is an explicit solution $X^{\min}$ given by \eqref{eq:Anna-solution}.
    \item The solution is not unique, but any two solutions are related by multiplication by a meromorphic function $f(\hbar)$ of $\hbar$ satisfying $f(0)=1$.
    \item The solution $X^{\rm min}$ is distinguished by the property that all poles and zeroes must be simple, may lie only at $\frac{Z(\gamma)}{\log{\xi}+2\pi i k}$, with $\gamma \in \Gamma$ and $k \in \mathbb{Z}$, and for each $\gamma \in \Gamma$ the points  $\frac{Z(\gamma)}{\log{\xi}+2\pi i k}$, $k\in\mathbb{Z}$ lie on the same side of $\ell_\gamma$.
\end{itemize}

In our case, we will solve (almost) the same Riemann-Hilbert problem using Voros symbols, but our solution will differ from $X^{\rm min}$. We will use the expression \eqref{eq:Anna-solution} to relate the two precisely in \S\ref{sec:relationtominimal}.

\subsection{Variation of BPS structures} 

BPS structures become even more interesting when considered in families. In this case, we require additional axioms ensuring the behaviour is the same as the Donaldson-Thomas invariants, in particular the wall-crossing phenomenon.
  
\begin{dfn}[{\cite[Definition 3.3]{Bri19}}] 
A \emph{variation of BPS structure} over a complex manifold $M$ 
consists of a collection of BPS structures 
$(\Gamma_{\bm m}, Z_{\bm m}, \Omega_{\bm m})$ attached to each ${\bm m} \in M$ 
satisfying the following axioms:

\begin{enumerate}[(VBPS1)]
\item The family $\Gamma_{\bm m}$ of charge lattices forms a local system over $M$, and the pairing 
$\langle \cdot ,\cdot \rangle$ is covariantly constant.
 
\item Given a covariantly constant family $\gamma_{\bm m} \in \Gamma_{\bm m}$, 
the central charges $Z_{\bm m}(\gamma_{\bm m})$ are holomorphic functions on $M$.
\item The constant $C$ appearing in the support property \eqref{eq:support-property} 
may be chosen uniformly on compact subsets of $M$
\item The \emph{wall-crossing formula} holds: for each acute sector 
$\Delta\subset \mathbb{C}^{*}$, the counterclockwise product
\begin{equation}
{\mathbb S}_{\bm m}(\Delta) := \prod_{\ell \subset \Delta} 
{{\mathbb S}_{\bm m}(\ell)} 
\end{equation} 
is covariantly constant as ${\bm m}$ varies, whenever the boundary rays of $\Delta$ are non-BPS throughout.
\end{enumerate}
\end{dfn}

The precise meaning of condition (VBPS4) for general variation of BPS structure 
is given in \cite[Appendix A]{Bri19}. 
For us, since we will only consider finite, uncoupled, integral BPS structures, 
the operator ${\mathbb S}_{\bm m}(\Delta)$ can be realized as a birational automorphism
of ${\mathbb T}_{-,{\bm m}}$, so we do not have to consider any completion. 

In all of our examples, the wall-crossing formula will be automatically satisfied as a result of the computation of \cite{IK20} -- in particular, the BPS indices do not vary with ${\bm m}$ at all, so the identity is trivially satisfied.

\subsection{BPS $\tau$-function}
\label{sec:tau-def}
Given a sufficiently nice variation of BPS structures, and a family of solutions to the corresponding BPS Riemann-Hilbert problems, Bridgeland \cite{Bri19} defined a notion of {\em $\tau$-function} encoding the solution in a single piecewise meromorphic function. To recall this, let us first introduce some terminology.

A variation of BPS structures $(\Gamma_{\bm m}, Z_{\bm m}, \Omega_{\bm m})$ 
over a complex manifold $M$ is said to be {\em framed} 
if the local system $(\Gamma_{\bm m})_{\bm m \in M}$ is trivial. For a framed variation of BPS structure, we can identify all the lattices $\Gamma_{\bm m}$ 
with a fixed lattice $\Gamma$ through the parallel transport. 
Using the identification, we will denote a framed variation of BPS structures by $(\Gamma, Z_{\bm m}, \Omega_{\bm m})$. 

\begin{dfn}
A framed variation of BPS structures $(\Gamma, Z_{\bm m}, \Omega_{\bm m})$ 
over $M$ is said to be {\em miniversal} if the ``period map''
\begin{equation}
\label{eq:periodmap}
\begin{array}{ccc}
M & \stackrel{}{\longrightarrow}&  \Hom(\Gamma, {\mathbb C}) \\
{\bm m} & \longmapsto& Z_{\bm m}
\end{array}
\end{equation} 
is a local isomorphism. 

\end{dfn}

Now, suppose we are given a framed, miniversal variation of BPS structure 
$(\Gamma, Z_{\bm m}, \Omega_{\bm m})$ over $M$. The miniversality property ensures we have the isomorphism 
\begin{equation}\label{eq:miniversal}
    D\pi: T_{\bm m}M \rightarrow {\rm Hom}(\Gamma,\mathbb{C}) = \Gamma^\vee \otimes \mathbb{C}
\end{equation}
and {dually we may identify $T_{\bm m}^*M$ with $\Gamma  \otimes\mathbb{C}$.
Then, the pairing $\langle \cdot, \cdot \rangle$ on $\Gamma$ induces 
a Poisson structure on $M$. If we take a ($\mathbb{C}$-)basis $\gamma_1, \dots, \gamma_r$ of $\Gamma$ and use $(z_1,\dots, z_r) = (Z_{\bm m}(\gamma_1), \dots, Z_{\bm m}(\gamma_r))$ as a coordinate on $M$, then the Poisson bracket is explicitly given as 
\begin{equation} \label{eq:possion-M}
    \{ z_i, z_j \} = \langle \gamma_i, \gamma_j \rangle 
    \qquad (i,j = 1,\dotsm r).
\end{equation}
}


For a family of solutions $\{ X_{\ell}({\bm m}, \hbar) \}_{{\bm m} \in M}$ to the BPS Riemann-Hilbert problem, we denote by $Y_\ell({\bm m}, \hbar) \in {\mathbb T}_{+}$ the ``non-twisted part''
\begin{equation}
Y_\ell({\bm m}, \hbar) := e^{Z_{\bm m}/\hbar} \, \xi^{-1} \, X_\ell({\bm m}, \hbar).
\end{equation}
{This is a collection of piecewise meromorphic functions (one for each $\gamma \in \Gamma)$ 
\begin{equation}
Y_{\ell,\gamma} : M \times {\mathbb H}_\ell \to {\mathbb{C}}
\end{equation} }
with discontinuities at ${\bm m} \in M$ such that the ray $\ell$ is a BPS ray for the corresponding BPS structure $(\Gamma_{\bm m}, Z_{\bm m}, \Omega_{\bm m})$. 
Thanks to the isomorphism \eqref{eq:miniversal}, for any non-BPS ray $\ell$, 
\begin{equation}
{\partial_\hbar} \log Y_{\ell}: M \times \mathbb{H}_{\ell} \rightarrow {\rm Hom}(\Gamma,\mathbb{C})
\end{equation}
defines a vector field on $M$ with discontinuities at this locus. Then, we can associate the $\tau$-function corresponding to the family as follows.
\begin{dfn}[{\cite[\S 4.6]{Bri19}}] 
Suppose $X({\bm m},\hbar)$ is a family of solutions to the BPS Riemann-Hilbert problem associated to a framed miniversal variation of BPS structures over $M$.
A \emph{BPS $\tau$-function} 
for the family of solutions $X({\bm m},\hbar)$, associated to a {non-BPS} ray $\ell$, 
{\begin{equation}
    \tau_{\ell}:M\times \mathbb{H}_\ell \rightarrow \mathbb{C}
\end{equation}}
\noindent is a {possibly-multivalued} meromorphic function defined so that the Hamiltonian vector field of the function
$2\pi i \, \log \tau_{{\rm BPS}, \ell}$ coincides with the vector field ${\partial_\hbar} \log Y_{\ell}$, 
where the Poisson structure on $M$ is given by \eqref{eq:possion-M}.

\end{dfn}

In terms of the coordinates above (for a chosen basis $\gamma_1, \dots, \gamma_r \in \Gamma$), we may write the condition for being a $\tau$-function as
\begin{equation} \label{eq:defining-tau}
    \frac{1}{2 \pi i} \, \dfrac{\partial \log Y_{\ell, \gamma_j}}{\partial \hbar} 
    =   \sum_{i=1}^{{r}} \langle \gamma_i, \gamma_j \rangle 
    \dfrac{\partial \log{\tau_{{\rm BPS},\ell}}}{\partial z_i}.
\end{equation}

In general, it is unknown if there exists a $\tau$-function, and uniqueness is not guaranteed. However, if we additionally impose that
$\tau_{{\rm BPS}, \ell}$ is invariant under simultaneous rescaling of the coordinates $z_i$ and $\hbar$, and
if the antisymmetric form $\langle \cdot, \cdot \rangle$ is nondegenerate, 
then the $\tau$-function is uniquely specified up to a multiplication 
by an element in ${\mathbb C}^\ast$.


Recall that Barbieri \cite{Bar} provided a canonical piecewise-meromorphic solution to the BPS Riemann-Hilbert problem for arbitrary uncoupled BPS structures. 
The $\tau$-function was not computed in that work, but considered in a recent paper \cite{AP} by Alexandrov-Pioline.  The formula is given as follows (see also \cite{BBS}).
\begin{prop}[{cf. \cite[\S 4]{AP}}] \label{prop:annatau}
Let $(\Gamma,Z,\Omega)$ be any miniversal variation of finite, uncoupled, integral BPS structures.
A BPS $\tau$-function for the family of minimal solutions to the BPS Riemann-Hilbert problem is given by
\begin{equation}
\label{eq:mintau}
    \tau^{\rm min}_{\rm BPS,{\ell}}({\bm m},\xi, \hbar)= \prod_{\substack{\gamma \in \Gamma \\ Z(\gamma) \in i {\mathbb H}_{\ell}}} 
 \Upsilon \left(\frac{Z(\gamma)}{2\pi i \hbar},
 \frac{\log \xi(-\gamma)}{2\pi i}\right)^{\Omega(\gamma)}
\end{equation}
where $\Upsilon$ is given explicitly in terms of the Barnes $G$-function {\rm (\cite{Barnes})}:
\begin{equation}
\label{eq:upsilon}
    \Upsilon(w,\eta)=\dfrac{e^{-\zeta'(-1)}e^{\frac{3}{4}w^2}G(w+\eta+1)}{(2\pi)^{w/2}w^{w^2/2}\Gamma(w+\eta)^\eta}.
\end{equation}
\end{prop}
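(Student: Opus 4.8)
\textbf{Proof proposal for Proposition \ref{prop:annatau}.}

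The plan is to verify directly that the proposed product \eqref{eq:mintau} satisfies the defining relation \eqref{eq:defining-tau} for a BPS $\tau$-function associated to the minimal solution $X^{\rm min}_\ell$ of Theorem \ref{thm:Anna-RHP-sol}. First I would record the relevant functional/asymptotic identities for the auxiliary functions. The function $\Lambda(w,\eta)$ in \eqref{eq:Lambda-function-original} and the function $\Upsilon(w,\eta)$ in \eqref{eq:upsilon} are related by a ``$w$-derivative'' identity: one checks from the definitions that $\partial_w \log \Upsilon(w,\eta)$ equals (up to elementary factors and a constant) $\log \Lambda(w,\eta)$, which is the Barnes-$G$ analogue of the classical relation $\log G(w+1) = \tfrac12\log(2\pi)\,w - \tfrac12 w(w+1) + \int_0^w \log\Gamma(t)\,dt$ differentiated once. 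This is the key bridge between the solution $X^{\rm min}$ (built from $\Lambda$) and the claimed $\tau$-function (built from $\Upsilon$).

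Next I would compute both sides of \eqref{eq:defining-tau}. For the left-hand side: from \eqref{eq:Anna-solution} we have $Y_{\ell,\gamma_j} = e^{Z(\gamma_j)/\hbar}\xi(\gamma_j)^{-1}X^{\rm min}_{\ell,\gamma_j} = \prod_{\gamma': Z(\gamma')\in i\mathbb H_\ell}\Lambda\!\left(\frac{Z(\gamma')}{2\pi i\hbar},\frac{\log\xi(-\gamma')}{2\pi i}\right)^{\Omega(\gamma')\langle\gamma_j,\gamma'\rangle}$, so $\partial_\hbar \log Y_{\ell,\gamma_j}$ is a sum over active $\gamma'$ of $\Omega(\gamma')\langle\gamma_j,\gamma'\rangle$ times $\partial_\hbar \log\Lambda(\tfrac{Z(\gamma')}{2\pi i\hbar},\cdot)$. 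For the right-hand side: from \eqref{eq:mintau}, $\log\tau^{\rm min}_{\rm BPS,\ell} = \sum_{\gamma: Z(\gamma)\in i\mathbb H_\ell}\Omega(\gamma)\log\Upsilon(\tfrac{Z(\gamma)}{2\pi i\hbar},\tfrac{\log\xi(-\gamma)}{2\pi i})$, and since $z_i = Z(\gamma_i)$ are the coordinates with $Z(\gamma) = \sum_k \langle \gamma, \gamma_k^\vee\rangle$-type linear combination — more precisely writing $\gamma = \sum_k n_k \gamma_k$ so that $Z(\gamma) = \sum_k n_k z_k$ — one gets $\partial_{z_i}\log\Upsilon(\tfrac{Z(\gamma)}{2\pi i\hbar},\cdot) = \frac{n_i}{2\pi i\hbar}\,(\partial_w\log\Upsilon)(\tfrac{Z(\gamma)}{2\pi i\hbar},\cdot)$. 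Plugging in the bridge identity $\partial_w\log\Upsilon = (\text{const}) + \log\Lambda$ up to the elementary factors, and using $\partial_\hbar\log\Lambda(\tfrac{Z}{2\pi i\hbar},\cdot) = -\tfrac{Z}{2\pi i\hbar^2}(\partial_w\log\Lambda)(\tfrac{Z}{2\pi i\hbar},\cdot)$ together with the explicit asymptotic/derivative formula for $\log\Lambda$, both sides reduce to the same sum; the antisymmetric pairing $\langle\gamma_i,\gamma_j\rangle$ enters through $\sum_i\langle\gamma_i,\gamma_j\rangle\,n_i = \langle\gamma,\gamma_j\rangle$ on the RHS, matching the $\langle\gamma_j,\gamma'\rangle$ on the LHS up to sign (which is why the corrected sign in Theorem \ref{thm:Anna-RHP-sol} matters). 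The contribution of the elementary prefactors $e^{-\zeta'(-1)}$, $e^{\frac34 w^2}$, $(2\pi)^{w/2}w^{w^2/2}$, $\Gamma(w+\eta)^{-\eta}$ in \eqref{eq:upsilon} must be tracked carefully: the constant $e^{-\zeta'(-1)}$ drops out under $\partial_{z_i}$, and the remaining algebraic/Gamma factors are precisely engineered so that $\partial_w\log\Upsilon$ produces $\log\Lambda$ cleanly rather than $\log\Lambda$ plus spurious terms.

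The main obstacle I expect is the bookkeeping of these elementary prefactors and branch choices: verifying that $\partial_w\log\Upsilon(w,\eta)$ equals exactly $\log\Lambda(w,\eta)$ (and not that plus a leftover rational or logarithmic term) requires carefully differentiating $\tfrac34 w^2 - \tfrac{w}{2}\log(2\pi) - \tfrac{w^2}{2}\log w - \eta\log\Gamma(w+\eta) + \log G(w+\eta+1)$ and invoking the standard identity $\tfrac{d}{dw}\log G(w+1) = \tfrac12\log(2\pi) + \tfrac12 - w + \log\Gamma(w)$ — here the $\eta$-dependent terms and the shift by $\eta$ in the $G$-argument must be reconciled with the definition \eqref{eq:Lambda-function-original} of $\Lambda(w,\eta)$, using $\log\Gamma(w+\eta) = \log\Gamma(w) + \eta\,\psi(w) + O(w^{-1})$ type expansions only insofar as needed to match, but ideally exactly via the functional equation. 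A secondary point is confirming that \eqref{eq:mintau} indeed defines a \emph{single-valued} piecewise-meromorphic function on $M\times\mathbb H_\ell$ with the correct jumps across BPS loci — but since $\tau$ is determined by \eqref{eq:defining-tau} only up to a constant (given nondegeneracy of $\langle\cdot,\cdot\rangle$ and the scaling invariance noted after \eqref{eq:defining-tau}), once the derivative identity is checked the result follows, with the overall normalization fixed by comparison with the known uncoupled case of \cite{Bri19}. One may also cite \cite{AP} directly for the computation; here I would present the verification as a consistency check rather than re-deriving their formula from scratch.
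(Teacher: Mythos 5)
Your overall strategy --- verify the defining relation \eqref{eq:defining-tau} directly by differentiating $\log Y^{\rm min}_{\ell}$ in $\hbar$ and $\log\tau^{\rm min}_{\rm BPS,\ell}$ in the coordinates $z_i$, and matching the two sides via a derivative identity relating $\Upsilon$ to $\Lambda$ --- is exactly the paper's (which devotes one sentence to this and otherwise defers to \cite{AP}). However, your ``bridge identity'' is wrong, and the argument as written breaks at the matching step. The correct relation, stated in the paper immediately after the proposition, is
\[
\frac{d}{dw}\log\Upsilon(w,\eta) \;=\; w\,\frac{d}{dw}\log\Lambda(w,\eta),
\]
not ``$\partial_w\log\Upsilon(w,\eta)$ equals $\log\Lambda(w,\eta)$ up to elementary factors and a constant.'' The factor of $w$ is not cosmetic: on the left of \eqref{eq:defining-tau} the chain rule produces $\partial_\hbar\log\Lambda\bigl(\tfrac{Z(\gamma')}{2\pi i\hbar},\cdot\bigr)=-\tfrac{w}{\hbar}\,(\partial_w\log\Lambda)$ with $w=\tfrac{Z(\gamma')}{2\pi i\hbar}$, while on the right it produces $\tfrac{n_i}{2\pi i\hbar}\,(\partial_w\log\Upsilon)$; after summing against $\langle\gamma_i,\gamma_j\rangle$ and using antisymmetry of the pairing, the two sides agree precisely because $\partial_w\log\Upsilon=w\,\partial_w\log\Lambda$. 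With your identity one side carries $\partial_w\log\Lambda$ and the other $\log\Lambda$, and these do not cancel.

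The source of the error is the misquoted logarithmic derivative of the Barnes $G$-function. The correct identity is $\frac{d}{dw}\log G(w+1)=\tfrac12\log(2\pi)+\tfrac12-w+w\,\psi(w)$, obtained for instance by differentiating the integral representation $\log G(w+1)=\tfrac{w(1-w)}{2}+\tfrac{w}{2}\log 2\pi+w\log\Gamma(w)-\int_0^w\log\Gamma(t)\,dt$ quoted in the paper's final remark --- not the version with $\log\Gamma(w)$ in place of $w\,\psi(w)$. The discrepancy matters: $w\psi(w)-\log\Gamma(w)$ grows like $\tfrac12\log w+w$ and has derivative $w\,\psi'(w)$, so it is not an elementary function plus a constant and cannot be absorbed the way you propose. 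Substituting the correct $G$-identity into $\log\Upsilon=-\zeta'(-1)+\tfrac34 w^2+\log G(w+\eta+1)-\tfrac{w}{2}\log(2\pi)-\tfrac{w^2}{2}\log w-\eta\log\Gamma(w+\eta)$ gives $\partial_w\log\Upsilon=w\,\psi(w+\eta)-w\log w+\tfrac12-\eta$, which equals $w\,\partial_w\log\Lambda(w,\eta)$ exactly; with that replacement your computation goes through, and your closing remarks on single-valuedness and the normalization ambiguity are fine.
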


\noindent The formula \eqref{eq:mintau} is a consequence of the relation 
$\frac{d}{dw}\log\Upsilon(w,\eta)=w\frac{d}{dw}\log\Lambda(w,\eta)$. We will also use the following formula which describes the asymptotic expansion of \eqref{eq:upsilon}:
\begin{equation} \label{eq:asymptotic-upsilon}
    \log \Upsilon(w,\eta) \sim - \frac{B_2(\eta)}{2} \log w + \sum_{k \ge 2} \frac{B_{k+1}(1 - \eta)}{(k-1)(k+1)} w^{1-k}
\end{equation}
for $w \to \infty$, valid when $w \to \infty$ in the complement of any open sector containing ${\mathbb R}_{< 0}$. 
See \cite[Appendix B.4]{AP} for these properties of $\Upsilon$-function.  
We also note that, when $\xi(\gamma) = 1$ holds for any active class $\gamma$, \eqref{eq:mintau} is reduced to the BPS $\tau$-function associated with the holomorphic solution \eqref{eq:X-hol} which was originally studied by Bridgeland in \cite{Bri19}: 
\begin{equation} \label{eq:tau-BPS-hol}
\tau_{\rm BPS}^{\rm hol}({\bm m}, \hbar) = 
\prod_{\substack{\gamma \in \Gamma \\ Z(\gamma) \in i {\mathbb H}_{\ell}}} 
 \Upsilon \left(\frac{Z(\gamma)}{2\pi i \hbar}\right)^{\Omega(\gamma)}.
\end{equation}
Here we write $\Upsilon(w) = \Upsilon(w,0)$.

\begin{rem}
In \cite{Bri20}, Bridgeland proved that the $\tau$-function for a 
BPS Riemann-Hilbert problem arising from the Donaldson-Thomas theory 
for the resolved conifold gives a ``non-perturbative partition function''
for the resolved conifold. This means that the asymptotic expansion
of the BPS $\tau$-function coincides with the generating series of 
all genus Gromov-Witten invariants. We expect there should be an analogue of our story in that setting as well.
\end{rem}

\section{BPS structures from spectral curves of hypergeometric type}
\label{sec:BPS-from-HG-curves}

In this section, we will recall the definition and computation of BPS structures associated to spectral curves of hypergeometric type, as described in detail in our previous work \cite{IK20}. 

\subsection{Spectral curves of hypergeometric type}
\label{section:hypergeometric-curvs-structure}

Let us summarize the geometric structure (i.e., properties of cycles and paths on) the {\em spectral curves of hypergeometric type} which are collected in Table \ref{table:classical}. We sometimes borrow terminology from {exact WKB analysis}; see \cite{KT98} for background.


In the rest of this section, we denote by $Q(x)$ one of the rational functions $Q_{\bullet}(x)$ for $\bullet \in \{{\rm HG}, {\rm dHG}, {\rm Kum}, {\rm Leg}, {\rm Bes}, {\rm Whi}, {\rm Web}, {\rm dBes}, {\rm Ai} \}$, and by $\varphi = Q(x) dx^2$ the associated meromorphic quadratic differential on $X={\mathbb P}^1$. We also introduce the following.
\begin{itemize}
\item 
$P \subset X$ is the set of poles of $\varphi$. It has a decomposition ${P} = {P}_{\rm ev} \cup {P}_{\rm od}$ into the subsets of even order poles and odd order poles of $\varphi$. Note that all our examples in Table \ref{table:classical} satisfy $\infty \in P$.
\item 
$T \subset X$ is the set of zeros and simple poles of $\varphi$. Borrowing a terminology in WKB literature, we call elements in $T$ {\em turning points} since they will be turning points of the quantum curves ${\bm E} = {\bm E}_{\bullet}$ constructed in \S \ref{sec:quantumcurves} below\footnote{
In the WKB analysis, $Q(x)$ plays the role of the potential function of a Schr\"odinger-type ODE. Traditionally, zeros of $\varphi$ are called turning points in WKB literature. It was pointed by \cite{Ko2} that simple poles of $\varphi$ also play similar roles to the usual turning points (called there {\em turning points of simple pole type}). Taking this into account, in this paper, by a turning point we mean a zero or a simple pole of $\varphi$. 
}. 

\item
The set ${\rm Crit} = P \cup T$ is called the set of {\em critical points} of $\varphi$. Its subset ${\rm Crit}_{\infty} = P \setminus T$ consists of {\em infinite critical points} of $\varphi$; that is, the poles of $\varphi$ of order $\ge 2$.  The points in $T = {\rm Crit} \setminus {\rm Crit}_\infty$ are also called {\em finite critical points} in this context.

\end{itemize}

The quadratic differential $\varphi$ is parametrized by a tuple of parameters ${\bm m} = ( m_s )_{s \in {P_{\rm ev}}}$ which we call {\em mass parameters} ({\em temperatures} in TR language). We will write $\varphi = \varphi({\bm m})$ when we emphasize the dependence on the mass parameters ${\bm m}$. In what follows, we assume the following condition on the mass parameters.

\begin{ass} \label{ass:genericity}
The mass parameters lie in the set $M = M_\bullet$: 
\begin{align}
M_{\rm HG} & := \{(m_0, m_1, m_\infty) \in {\mathbb C}^3 ~|~ 
m_0 m_1 m_\infty \Delta_{\rm HG}(\bm m) \ne 0 \}. 
\\
M_{\rm dHG} & := \{(m_1, m_\infty) \in {\mathbb C}^2 ~|~ 
m_1 m_\infty (m_1 + m_\infty)(m_1 - m_\infty) \ne 0 \}. 
\\
M_{\rm Kum} & := \{(m_0, m_\infty) \in {\mathbb C}^2 ~|~ 
m_0 (m_0 + m_\infty)(m_0 - m_\infty) \ne 0 \}.
\\
M_{\rm Leg} & := \{m_\infty \in {\mathbb C} ~|~ m_\infty \ne 0 \}.
\\
M_{\rm Bes} & := \{m_0 \in {\mathbb C} ~|~ m_0 \ne 0 \}. 
\\
M_{\rm Whi} & := \{m_\infty \in {\mathbb C} ~|~ m_\infty \ne 0 \}.
\\
M_{\rm Web} & := \{m_\infty \in {\mathbb C} ~|~ m_\infty \ne 0 \}.
\end{align}
where we set 
\begin{equation} \label{eq:HG-Delta}
\Delta_{\rm HG}(\bm m) = (m_0 + m_1 + m_{\infty})(m_0 + m_1 -m_{\infty})
(m_0 - m_1 + m_{\infty})(m_0 - m_1 - m_{\infty}).
\end{equation}
For degenerate Bessel curve (${\rm dBes}$) and Airy curve (${\rm Ai}$), 
we set $M_{\rm dBes} = M_{\rm Ai} = \emptyset$ since they have no mass parameters.
\end{ass}
It is not hard to check that $\varphi$ has only simple zeros, and no collision of zeros and poles occurs under the Assumption \ref{ass:genericity}.

Let us consider the Riemann surface
\begin{equation} \label{eq:spectral-curve-x-y}
\Sigma := \{ (x, y) \in {\mathbb C}^2 ~|~ y^2 - Q(x) = 0 \}, 
\end{equation}
associated with the quadratic differential $\varphi$.
This is a branched double cover of $X \setminus P$ with the projection map $\pi : (x, y) \mapsto x$. 
We denote by $\iota : (x,y) \mapsto (x,- y)$ the covering involution of $\Sigma$.  
We will mainly use $x$ as a local coordinate\footnote{
Technically speaking, we must fix branch cuts and identify the coordinate $x$ of 
the base ${\mathbb P}^1$ with a coordinate of $\Sigma \setminus \pi^{-1}(T)$ on the first sheet 
(then, $\iota(x)$ gives a coordinate in the second sheet).} on $\Sigma \setminus \pi^{-1}(T)$.
We note that, the Riemann surface $\Sigma$ for all the examples in Table \ref{table:classical} 
are of {\em genus $0$}, and hence topologically are punctured spheres, with punctures corresponding to poles of $\varphi$. 

We denote by $\overline{\Sigma}$ the compactification of $\Sigma$, obtained by adding preimages of poles of $\varphi$, on which the square root $\sqrt{\varphi} = \sqrt{Q(x)} \, dx$ gives a meromorphic differential. 
More precisely, to obtain $\overline{\Sigma}$, we add to $\Sigma$ a single point for each $s \in P_{\rm od}$, and add a pair $s_{\pm}$ of two points for each $s \in P_{\rm ev}$:  
\begin{equation}
\overline{\Sigma} = \Sigma \cup P_{\rm od} \cup \{s_+, s_- ~|~ s \in P_{\rm ev} \}.
\end{equation}
Here we identify $s \in P_{\rm od}$ with its unique lift on $\overline{\Sigma}$. 
We keep using the same notations $\pi : \overline{\Sigma} \to X$ and $\iota : \overline{\Sigma} \to \overline{\Sigma}$ for the projection map and the covering involution. 
The label of two points $s_{\pm}$, both of which are mapped to $s \in P_{\rm ev}$ by $\pi$, is chosen so that the following residue formula holds:
\begin{equation} \label{eq:sign-convention-preimages}
\Res_{x = s_{\pm}} \sqrt{Q(x)} \, dx = \pm m_s. 
\end{equation}

For later purposes, we also introduce a partial compactification $\widetilde{\Sigma}$ of $\Sigma$ obtained by filling punctures corresponding to simple poles of $\varphi$:  
\begin{equation}
\widetilde{\Sigma} := \Sigma \cup (P_{\rm od} \cap T) 
= \overline{\Sigma} \setminus D_{\infty}.
\end{equation}
Here we set 
\begin{equation}
D_\infty := \pi^{-1} ({\rm Crit}_{\infty}) ~\subset \overline{\Sigma}.
\end{equation}
If $\varphi$ has no simple poles, then $\widetilde{\Sigma} = \Sigma$. 

\begin{exa}[Gauss hypergeometric curve {\cite[\S 2.3.1]{IKoT-II}}] \label{exa:Gauss}
The Gauss hypergeometric curve $\Sigma_{\rm HG}$ is defined from the meromorphic quadratic differential 
$\varphi_{\rm HG} = Q_{\rm HG}(x) dx^2$ where the rational function 
\begin{equation} \label{eq:Sigma-HG}
Q_{\rm HG}(x) := \frac{m_{\infty}^2 x^2 - (m_\infty^2 - m_1^2 + m_0^2) x + m_0^2}{x^2(x-1)^2}.
\end{equation}
Under the assumption ${\bm m} = (m_0, m_1, m_\infty) \in M_{\rm HG}$, the poles are $P = P_{{ \rm ev}} = \{0, 1, \infty\}$ and the turning points are $T_{} = \{b_1, b_2 \}$, where $b_1, b_2$ are the two simple zeros of $Q_{\rm HG}(x)$ (since there are no simple poles, we have $P \cap T_{} = \emptyset$). 
Topologically, $\Sigma_{\rm HG}$ is a sphere with six punctures, and the  
compactification is given by 
\begin{equation}
\overline{\Sigma}_{\rm HG} := \Sigma_{\rm HG} \cup 
\{0_+, 0_-, 1_+, 1_-, \infty_+, \infty_- \}.  
\end{equation}
{  We note that $\widetilde{\Sigma}_{\rm HG} = \Sigma_{\rm HG}$ holds, 
and we have $D_{\infty} = \{0_+, 0_-, 1_+, 1_-, \infty_+, \infty_- \}$.}
\end{exa}

\subsection{Cycles on the spectral curve}
\label{section:cycle-path}

Let us fix notation for (relative) homology classes on $\widetilde{\Sigma}$ which will be used to describe the BPS structure in the rest of this section, and to define the Voros coefficients in \S\ref{sec:quantumcurves}. 
For spectral curves of hypergeometric type, since $\widetilde{\Sigma}$ is a punctured sphere, the homology group $H_1(\widetilde{\Sigma}, {\mathbb Z})$ is generated by the {\em residue cycles} $\gamma_a$ (i.e., the class represented by a positively oriented small circle) around the puncture $a \in D_\infty$ satisfying the relation
\begin{equation} \label{eq:relation-among-cycles}
\sum_{s \in P_{\rm od} \cap D_\infty} \gamma_s + \sum_{s \in P_{\rm ev}} (\gamma_{s_+} + \gamma_{s_-}) = 0.
\end{equation}
The covering involution of $\widetilde{\Sigma}$ also acts on the homology group, and in particular, we have $\iota_\ast \gamma_{s} = \gamma_{s}$ for $s \in P_{\rm od}$ and $\iota_\ast \gamma_{s_{\pm}} = \gamma_{s_{\mp}}$ for $s \in P_{\rm ev}$. 

On the other hand, for each $s \in P_{\rm ev}$, we may associate a relative homology class $\beta_s \in H_1(\overline{\Sigma}, D_\infty, {\mathbb Z})$ which is represented by a path from $s_-$ to $s_+$ on $\overline{\Sigma}$. The action of covering involution on this class is given by $\iota_\ast \beta_s = - \beta_s$. Among various relative homology classes, we are particularly interested in these classes.

Note that the intersection pairing $\langle \cdot, \cdot \rangle$ on $H_1(\widetilde{\Sigma},{\mathbb Z})$ is trivial. 
There is another intersection pairing 
defined through Poincar\'e-Lefschetz duality:
\begin{equation} \label{eq:non-degenerate-pairing}
( \cdot,\cdot ) : 
H_1(\widetilde{\Sigma}, {\mathbb Z}) \times 
H_1(\overline{\Sigma}, D_\infty, {\mathbb Z}) \to {\mathbb Z}
\end{equation}
which is non-degenerate. This allows us to identify $\beta \in H_1(\overline{\Sigma}, D_\infty, {\mathbb Z})$ with the element $(\cdot, \beta) \in H_1(\widetilde{\Sigma}, {\mathbb Z})^\vee = 
{\rm Hom}(H_1(\widetilde{\Sigma}, {\mathbb Z}), {\mathbb Z})$.
The intersection pairing between the above (relative) homology classes is given by\footnote{
We use the convention for the intersection pairing that $( \text{$x$-axis}, \text{$y$-axis} ) = 
- ( \text{$y$-axis}, \text{$x$-axis} ) = 1$.} 
\begin{equation} \label{eq:intersetion-gamma-beta}
( \gamma_{s_\pm},  \beta_{s'} ) = \pm \delta_{s,s'} 
\quad (s, s' \in P_{\rm ev}),
\end{equation}
where $\delta$ in the right hand side is the Kronecker symbol.

\subsection{BPS structures from quadratic differentials}
In the previous work \cite{IK20}, we constructed BPS structures from the quadratic differentials in Table \ref{table:classical}. Here we briefly recall the construction based on the spectral network associated with these quadratic differentials. We note that the construction of BPS structure given here works more generally (see \cite[\S 7]{Bri19}).

\subsubsection{Lattice and central charge} 
Recall that a BPS structure requires a charge lattice $\Gamma$ and a central charge $Z$. Following \cite{GMN09, BS13}, we define:
\begin{dfn} \label{def:central-charge}
The charge lattice $\Gamma$ is the sublattice 
of $H_1(\widetilde{\Sigma}, {\mathbb Z})$ defined by
\begin{equation}
\Gamma := \{ \gamma \in H_1(\widetilde{\Sigma}, {\mathbb Z}) 
~|~ \iota_\ast \gamma = - \gamma \}
\end{equation}
equipped with the intersection pairing $\langle \cdot , \cdot \rangle : \Gamma \times \Gamma \to {\mathbb Z}$. The central charge $Z : \Gamma \to {\mathbb C}$ is the period integral of $\sqrt{\varphi}$: 
\begin{equation}
Z(\gamma) := \oint_{\gamma} \sqrt{Q(x)} \, dx.
\end{equation}
\end{dfn}

The lattice $\Gamma$ is called the hat-homology group in \cite{BS13}. It will sometimes be convenient to extend the domain of definition of $Z$ from $\Gamma$ to the whole lattice $H_1(\widetilde{\Sigma}, {\mathbb Z})$ in a natural manner. Since any element $\gamma \in \Gamma$ can be written as a sum of $\gamma_{s_\pm}$ in $H_1(\tilde{\Sigma}, {\mathbb Z})$, we may use the formula 
\begin{equation}
Z(\gamma_{s_\pm}) =\pm 2 \pi i m_{s} \quad (s \in P_{\rm ev}), 
\end{equation}
which follows from \eqref{eq:sign-convention-preimages}, to express the central charge $Z(\gamma)$ for a general element. We will write $\Gamma = \Gamma_{\bm m}$ and $Z = Z_{\bm m}$ when we discuss the dependence on the mass parameters. 

As we have seen in previous subsection, the intersection pairing $\langle \cdot , \cdot \rangle$ is trivial in our example. Thus the BPS structures we will obtain are automatically uncoupled.

\subsubsection{Spectral networks}
To define the BPS indices, we use the notion of a \emph{WKB spectral network} \cite{GMN12} associated to the quadratic differential $\varphi$ considered in the previous subsection\footnote{A more general construction can be made for arbitrary tuples of meromorphic $k$-differentials, but it is more complicated. Except briefly in \S\ref{sec:conjectures}, we only consider BPS structures arising from quadratic differentials in this paper, so we omit the details.}. 
This can also be understood as the \emph{Stokes graph} (c.f., \cite{KT98}) of the quantum curve ${\bm E}({\bm \nu})$ which will be constructed in \S \ref{sec:quantumcurves} through the topological recursion.

Let us define a \emph{trajectory} of the quadratic differential to be any maximal curve on $X$ along which
\begin{equation}
\mathrm{Im}\, e^{-i \vartheta}\int^{x}{ \sqrt{Q(x)} dx} = {\rm constant}.
\end{equation}
\noindent is satisfied. There is a well-constrained  possible local behaviour of these trajectories, but a rich global behaviour. The general structure of trajectories of quadratic differentials is described in detail in the classic book by Strebel \cite{St84} (see also \cite{BS13}). 

We are primarily interested in trajectories with at least one endpoint on a turning point --- we call these \emph{critical trajectories}. The set of (oriented) critical trajectories is called the \emph{spectral network $\mathcal{W}_\vartheta({\varphi})$ of $\varphi$ at phase $\vartheta$}. In the WKB literature, the critical trajectories are called {\em Stokes curves}, and form the locus where the Borel-resummed WKB solutions have a discontinuity.

In this paper, we will only use the end result of the previous work \cite{IK20}, so we will not give details and figures of spectral networks, but we refer the reader to that paper, and proceed to write down the BPS structure that was obtained there.

\subsubsection{Degenerate spectral networks and saddle trajectories}
\label{subsection:saddle-and-BPS-indices}

We say the spectral network $\mathcal{W}_\vartheta(\varphi)$ is \emph{degenerate} if it contains a {\em saddle trajectory} (i.e., a critical trajectory both of whose endpoints are turning points) of phase $\vartheta$. 
%
%
These degenerate spectral networks are of paramount importance in this paper and many applications. In the physics of 4d $\mathcal{N}=2$ QFTs, they correspond to BPS states in the spectrum of the theory \cite{GMN08, GMN12}. From a mathematical point of view, they correspond to stable objects in a 3-Calabi-Yau category associated with a quiver with potential determined by $\varphi$ \cite{BS13}. 

Let us summarize the possible types of saddle trajectories which can appear in degenerate spectral networks arising from our examples\footnote{Although there is a fifth type of saddle trajectory (which forms the boundary of a ``non-degenerate ring domain'' and the associated BPS cycles gives $\Omega(\gamma_{\rm BPS}) = -2$ as its BPS invariant) in general, we will not treat it here since this type never appears in our examples.}: 
\begin{itemize}
\item 
A {\em type I}\, saddle connects two distinct simple zero of $\varphi$. 
\item 
A {\em type II}\, saddle connects a simple zero and a simple pole of $\varphi$.
\item 
A {\em type III}\, saddle connects two distinct simple poles of $\varphi$.
\item 
A {\em type IV}\, saddle is a closed curve which forms the boundary of a degenerate ring domain (i.e., a maximal region foliated by closed trajectories around a second order pole of $\varphi$).
\end{itemize}

\subsubsection{Definition of BPS invariants}

Given a degenerate spectral network with a saddle trajectory of type I, II, or III, we can associate a homology class $\gamma_{\rm BPS} \in \Gamma$ represented by the closed cycle (up to its orientation) on $\widetilde{\Sigma}$ obtained as the pullback by $\pi : \widetilde{\Sigma} \to X$ of the saddle trajectory. On the other hand, for a degenerate spectral network with a type IV saddle, we define  $\gamma_{\rm BPS} = \gamma_{s_+} - \gamma_{s_-} \in \Gamma$ (up to its orientation) where $s$ is the second order pole of $\varphi$ in the degenerate ring domain surrounded by the type IV saddle. We refer to homology classes $\gamma_{\rm BPS} \in \Gamma$ defined in this way as \emph{BPS cycles}, and the collection of all BPS cycles (together with their type) as the \emph{BPS spectrum of $
\varphi$}. Degenerate spectral networks only appears when the phase $\vartheta$ coincides with the argument of the central charge $Z(\gamma_{\rm BPS})$ of a certain BPS cycle. 

To define the BPS invariants, we should avoid a locus in the parameter space $M$ of $\varphi = \varphi({\bm m})$ on which a multiple degeneration appears in the spectral network simultaneously. We say the parameter ${\bm m}$ is {\em generic} if it lies in the complement $M \setminus W$ of a set $W = W_\bullet$ defined by
\begin{equation}
W_\bullet = \Bigl\{ {\bm m} \in M_\bullet ~|~ 
\text{there exist distinct BPS cycles $\gamma, \gamma'$ satisfying 
$\frac{Z_{\bm m}(\gamma)}{Z_{\bm m}(\gamma')} \in {\mathbb R}_{> 0}$} \Bigr\}
\end{equation}

\noindent for $\bullet \ne {\rm Leg}$, and $W_{\rm Leg} = \emptyset$ for the Legendre case. 
Otherwise, we say ${\bm m}$ is \emph{non-generic}. We denote by $M' = M \setminus W$ the {\em locus of generic parameters}.

Now we recall the definition of the BPS indices introduced in \cite{GMN09, GMN12, BS13, IK20} (see also the recent work \cite{Haiden} which arrives at the same BPS indices for BPS cycles associated with Type II and III saddles via Donaldson-Thomas theory).

\begin{dfn}
For each ${\bm m} \in M'$, we define the {\em BPS indices} $\{ \Omega(\gamma) \}_{\gamma \in \Gamma}$ as a collection of integers defined as follows: 
\begin{equation} \label{eq:def-of-BPS-indices}
\Omega(\gamma)= \begin{cases} 
      +1 & \quad \text{if $\gamma$ is a BPS cycle associated with a type I saddle,} \\
      +2 & \quad \text{if $\gamma$ is a BPS cycle associated with a type II saddle,} \\
      +4 & \quad \text{if $\gamma$ is a BPS cycle associated with a type III saddle}, \\
      - 1 & \quad \text{if $\gamma$ is a BPS cycle associated with a degenerate ring domain}. 
   \end{cases}
\end{equation}
and we set $\Omega(\gamma) = 0$ for any non-BPS cycles $\gamma \in \Gamma$.
\end{dfn}

In our previous work, \cite{IK20}, given any of the examples of $\varphi$ from Table \ref{table:classical}, we determined the BPS structure following $\eqref{eq:def-of-BPS-indices}$ at any ${\bm m} \in M'$. The results are summarized in Table 3 below. We refer the reader to \cite{IK20} for additional details.

\begin{thm}[{\cite[\S 4]{IK20}}]
Let $\bullet$ be the label of any spectral curve of hypergeometric type in Table \ref{table:classical}. For each ${\bm m} \in M_\bullet'$, the BPS structure determined by $\varphi_\bullet({\bm m})=Q_{\bullet}(x)dx^{2}$ is the one given in Table \ref{table:BPS-str-HG}. In particular,  the BPS structure obtained is finite, integral and uncoupled.
\end{thm}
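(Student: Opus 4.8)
The plan is to prove this theorem \emph{example by example}, since the statement asserts that for each of the nine spectral curves of hypergeometric type the BPS spectrum (the collection of BPS cycles with their types) and the BPS indices \eqref{eq:def-of-BPS-indices} produce exactly the BPS structure recorded in Table \ref{table:BPS-str-HG}. The essential content is the identification of all degenerate spectral networks $\mathcal{W}_\vartheta(\varphi_\bullet)$ as $\vartheta$ ranges over $[0,2\pi)$, together with the computation of the associated BPS cycles $\gamma_{\rm BPS}\in\Gamma$ and their central charges $Z(\gamma_{\rm BPS})$. Once these are in hand, the statement that the resulting BPS structure is finite, integral and uncoupled is automatic: uncoupledness and integrality were already observed above (the intersection pairing on $H_1(\widetilde\Sigma,\mathbb Z)$ is trivial, and the indices in \eqref{eq:def-of-BPS-indices} are manifestly integers), and finiteness will follow from the finite enumeration of saddle trajectories carried out in each case.

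First I would fix $\bm m\in M_\bullet'$ and, for each example, analyze the trajectory structure of $\varphi_\bullet = Q_\bullet(x)\,dx^2$ using the Strebel/Gaiotto--Moore--Neitzke theory recalled in \S\ref{subsection:saddle-and-BPS-indices}: locate the turning points $T$ (simple zeros and simple poles of $Q_\bullet$) and the infinite critical points $\mathrm{Crit}_\infty$, and then determine for which phases $\vartheta$ a saddle trajectory appears. In the genus-$0$ hypergeometric cases this is a finite and explicit computation because the zeros of $Q_\bullet$ are roots of a low-degree polynomial in $x$ whose coefficients are explicit in $\bm m$; the condition for a saddle connecting two finite critical points $p_1,p_2$ is that $\mathrm{Im}\,e^{-i\vartheta}\int_{p_1}^{p_2}\sqrt{Q_\bullet}\,dx = 0$ together with a genuine connecting trajectory existing, and this is controlled by the periods, which are computable in closed form (logarithms/elementary functions for the confluent cases, and explicit combinations of the $m_s$ for the periods around even-order poles via \eqref{eq:sign-convention-preimages}). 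For each saddle found I would read off its type (I, II, III, or IV) from whether its endpoints are zeros, simple poles, or it bounds a degenerate ring domain around an even-order pole, assign $\gamma_{\rm BPS}$ as the pullback cycle (or $\gamma_{s_+}-\gamma_{s_-}$ in the type IV case) and $\Omega(\gamma_{\rm BPS})$ by \eqref{eq:def-of-BPS-indices}, and check $\gamma_{\rm BPS}\in\Gamma$, i.e.\ $\iota_\ast\gamma_{\rm BPS}=-\gamma_{\rm BPS}$, which holds because saddles lift to cycles swapped by the sheet involution (type IV uses $\iota_\ast\gamma_{s_\pm}=\gamma_{s_\mp}$).

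The genericity assumption $\bm m\in M_\bullet'=M_\bullet\setminus W_\bullet$ is used precisely to guarantee that no two BPS cycles have central charges on a common ray, so that $\mathcal W_\vartheta$ is degenerate in at most a ``minimal'' way at each critical $\vartheta$ and the index assignment \eqref{eq:def-of-BPS-indices} applies cycle-by-cycle without multi-saddle complications; this should be invoked explicitly in each example. The support property (needed for $(\Gamma,Z,\Omega)$ to be a BPS structure) follows trivially from finiteness of the active set together with $Z(\gamma_{\rm BPS})\neq 0$, which itself follows because the endpoints of a saddle are distinct critical points in the relevant cases and, for type IV, $Z(\gamma_{s_+}-\gamma_{s_-}) = 4\pi i m_s\neq 0$ on $M_\bullet'$.

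I expect the main obstacle to be the \emph{global} trajectory analysis for the less symmetric examples, above all the Gauss case (HG) and degenerate Gauss (dHG): one must show that for exactly the predicted finite set of phases there is a saddle connection, and for all other phases there is none, which requires controlling the full trajectory picture (including ring domains around the three even-order poles of $Q_{\rm HG}$) rather than just solving a period equation. This is where the bulk of \cite{IK20} lives; here, since we ``only use the end result of the previous work'' as the excerpt states, the proof is really a matter of invoking those computations and carefully matching the output with \eqref{eq:def-of-BPS-indices} and Table \ref{table:BPS-str-HG}, then assembling the trivial verifications (integrality, uncoupledness, finiteness, support property) into the conclusion. The simplest cases --- Airy and degenerate Bessel, which have no saddle trajectories at all and hence empty BPS spectrum --- serve as a sanity check that the bookkeeping is consistent.
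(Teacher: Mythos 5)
Your proposal is correct and follows essentially the same route as the source: the theorem is stated here as a citation of the case-by-case spectral network analysis carried out in \cite{IK20}, and your outline (enumerate saddle trajectories for each $\vartheta$, classify them by type, assign $\Omega$ via \eqref{eq:def-of-BPS-indices}, and note that uncoupledness, integrality, finiteness and the support property are then immediate from the trivial intersection pairing and the finite enumeration) accurately reflects both the structure of that argument and where the genuine work --- the global trajectory analysis, especially for (HG) and (dHG) --- actually resides. Nothing further is needed.
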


\begin{table}[t]
\begin{center}
\begin{tabular}{cccc}\hline
Spectral curve & $\pm\gamma_{\rm BPS}$ & $\pm Z(\gamma_{\rm BPS})/2\pi i$ & $\Omega(\gamma_{\rm BPS})$ 
\\\hline\hline
\parbox[c][2.5em][c]{0em}{}
Gauss 
& $ \pm (\gamma_{0_+} + \gamma_{1_\epsilon } + \gamma_{\infty_\epsilon' }) $ 
\hspace{+.5em} $(\epsilon, \epsilon' \in \{ \pm \})$ 
&$ \pm ( m_0+\epsilon\, m_1+ \epsilon' m_\infty)$
& $1$
\\[-.5em] 
& $\pm (\gamma_{s_+} - \gamma_{s_-})$
\quad $(s \in \{0,1,\infty \})$
&$\pm \, 2 m_s$
& $-1$
\\[+.3em]\hline
\parbox[c][2.5em][c]{0em}{}
\begin{minipage}{.15\textwidth}
\begin{center} Degenerate Gauss \end{center}
\end{minipage}
& $\pm (\gamma_{1_+} + \gamma_{\infty_\epsilon})$ 
\quad $(\epsilon \in \{ \pm \})$
&$\pm (m_1+\epsilon\, m_\infty)$
& $2$
\\[-.em]
& $\pm (\gamma_{s_+} - \gamma_{s_-})$
\quad $(s \in \{1,\infty \})$
&$\pm 2 m_s$
& $-1$
\\[+.3em]\hline
\parbox[c][2.5em][c]{0em}{}
Kummer 
& $\pm (\gamma_{0_+} +  \gamma_{\infty_\epsilon})$ 
~ $(\epsilon \in \{ \pm \})$
& $\pm (m_0 + \epsilon \, m_\infty)$
& $1$
\\[-.em]
& $\pm (\gamma_{0_+} - \gamma_{0_-})$
&$\pm 2 m_0$
& $-1$
\\[+.3em]\hline
\parbox[c][2.5em][c]{0em}{}
Legendre 
& $\pm \gamma_{\infty_+}$
&$\pm m_\infty$
& $4$ 
\\[-.em]
& $\pm (\gamma_{\infty_+} -  \gamma_{\infty_-})$
& $\pm 2 m_\infty$
& $-1$
\\[+.3em]\hline
\parbox[c][2.5em][c]{0em}{}
Bessel 
&
$ \pm (\gamma_{0_+} - \gamma_{0_-})$
& $\pm 2 m_0$
& 
$-1$
\\\hline
\parbox[c][2.5em][c]{0em}{}
Whittaker 
& $\pm \gamma_{\infty_+}$
& $\pm m_\infty$
& $2$
\\\hline
\parbox[c][2.5em][c]{0em}{}
Weber 
& $\pm \gamma_{\infty_+}$
&$\pm m_\infty$
& $1$
\\\hline
\end{tabular}
\end{center}
\caption{The BPS structures (BPS cycles, central charges, and BPS indices) associated 
to the spectral curves of hypergeometric type in Table \ref{table:classical}. 
As before, for $s \in P_{\rm ev}$, 
$\gamma_{s_{\pm}}$ is the positively oriented small circle around $s_{\pm}$.  
}
\label{table:BPS-str-HG}
\end{table}

The uncoupledness of the BPS structures in Table \ref{table:BPS-str-HG} is a consequence of the vanishing of the intersection pairing $\langle \cdot, \cdot \rangle$ on $\Gamma$. In particular, this vanishing implies that the associated BPS Riemann-Hilbert problem is trivial (i.e., all BPS automorphisms are identity maps). Therefore, for an arbitrarily chosen constant term $\xi \in {\mathbb T}_{-}$, the functions
\begin{equation}
X_{\ell, \gamma}(\hbar) =  e^{- Z(\gamma)/\hbar} \, \xi(\gamma) 
\qquad (\gamma \in \Gamma)
\end{equation}
attached to each ray $\ell \subset {\mathbb C}^\ast$ give a solution to the BPS Riemann-Hilbert problem, and it is the unique holomorphic solution which has no zeros or poles on ${\mathbb H}_{\ell}$. 
To obtain a non-trivial BPS Riemann-Hilbert problem, which will be compared with the Stokes structure arising from topological recursion and quantum curves, we will consider the almost-doubled BPS structure discussed next. 

\subsection{Doubled and almost-doubled BPS structure}
\label{sec:doubling}

Given a BPS structure, we may construct a new \emph{doubled BPS structure}  $(\Gamma_{\rm D}, Z_{\rm D}, \Omega_{\rm D})$ via the so-called doubling construction \cite[\S 2.8]{Bri19}. The doubled lattice is:

\begin{equation}
\Gamma_{\rm D}:=\Gamma \oplus \Gamma^{\vee},
\end{equation}
equipped with the new pairing 
\begin{equation}
\label{eq:doubledpairing}
\langle(\gamma_{1},\beta_{1}),(\gamma_{2},\beta_{2})\rangle 
:= \langle\gamma_{1},\gamma_{2}\rangle +\beta_{1}(\gamma_{2})-\beta_{2}(\gamma_{1})
\end{equation}
which is always nondegenerate. The central charge can be extended by an arbitrary homomorphism $Z^\vee: \Gamma^\vee \rightarrow \mathbb{C}^*$ as $Z_{\rm D}(\gamma,\beta):=Z(\gamma)+Z^\vee(\beta)$, and the BPS indices are defined to be 
\begin{equation}
\Omega_{\rm D}(\gamma,\beta) := \begin{cases} 
\Omega(\gamma) & \text{if $\beta = 0$}, \\
0 &  \text{otherwise}.
\end{cases} 
\end{equation}
Clearly the symmetry property continues to hold, and the support property is also equivalent to the support property for the original BPS structure. 

As we did in \S\ref{section:cycle-path}, we identify the dual lattice $\Gamma^\vee$ with a sublattice of $H_1(\overline{\Sigma}, D_\infty, {\mathbb Z})$ through the intersection paring $(\cdot,\cdot)$ given in \eqref{eq:non-degenerate-pairing}. 
Through the identification, the extended paring $\langle \cdot, \cdot \rangle$ on $\Gamma_{D}$ and the intersection pairing $(\cdot,\cdot)$ are related as follows:
\begin{equation} \label{eq:sign-of-pairings}
\langle \gamma, \beta \rangle = - \beta(\gamma) = - (\gamma, \beta)
\qquad (\gamma \in \Gamma, \, \beta \in \Gamma^\vee).
\end{equation}
Here and in what follows, we identify $\gamma \in \Gamma$ and $\beta \in \Gamma^\vee$ with $(\gamma,0) \in \Gamma_{\rm D}$ and $(0,\beta) \in \Gamma_{\rm D}$, respectively.

We will in fact use a slightly modified doubling procedure, which we call \emph{almost-doubling}, in which we take a full-rank sublattice 
\begin{equation}
\Gamma^{*} = \{ \beta \in H_1(\overline{\Sigma}, D_\infty; {\mathbb Z}) ~|~ \iota_\ast \beta = - \beta \}  \subset \Gamma^\vee
\end{equation}
in the second factor, and define 
\begin{equation} \label{eq:Gamma-almost-doubled-lattice}
\Gamma_{\text{\DJ}}:=\Gamma\oplus\Gamma^*    
\end{equation} 
equipped with the restricted pairing (denoted by the same symbol $\langle \cdot, \cdot \rangle$). 
 Together with the obvious restrictions $Z_{\text{\DJ}}$ and $\Omega_{\text{\DJ}}$ of $Z_{\rm D}$ and $\Omega_{\rm D}$ to the sublattice $\Gamma_{\text{\DJ}}$, we obtain a BPS structure $(\Gamma_{\text{\DJ}}, Z_{\text{\DJ}}, \Omega_{\text{\DJ}})$ which we call the  {\em almost-doubled BPS structure} associated to the original. The purpose of this restriction will be to ensure our solution to the BPS Riemann-Hilbert problem is single-valued; see Remark \ref{rem:use-of-almost-doubling} below. Note that the relative cycles $\beta_s$ defined for $s \in  P_{\rm ev}$ in \S \ref{section:cycle-path} are examples of paths satisfying $\iota_\ast \beta_s = - \beta_s$ since the covering involution exchanges $s_{+}$ and $s_{-}$. 
In fact, we can verify that $\{ \beta_s \}_{s \in P_{\rm ev}}$ generates the sublattice $\Gamma^{*}$.

The doubling and almost-doubling construction is useful since it produces nontrivial BPS Riemann-Hilbert problems from BPS structures which may have a trivial Riemann-Hilbert problem. In particular, even though the integers $\Omega(\gamma)\langle \gamma,\gamma' \rangle$ may be zero for the original BPS structure, it is easy to see that (almost-)doubling will allow the analogous expression to take on nonzero values. This is why our introduction of $\Omega(\gamma)=-1$ plays a significant role, but did not affect previous results on the subject -- any choice of $\Omega$ attributed to such BPS cycles would be killed by the intersection pairing in the original BPS structure.

\subsection{Miniversality of almost-doubled BPS structure}

In \cite{IK20} and above, the BPS structures corresponding to hypergeometric type spectral curves were considered as depending on a set of parameters ${\bm m}$, but we did not consider them as a family. However, they naturally fit together to form a variation of BPS structures:

\begin{prop}[{c.f., \cite[Claim 7.1]{Bri19}}] \label{prop:constructing-VBPS}
The family $(\Gamma_{\bm m}, Z_{\bm m}, \Omega_{\bm m})_{\,{\bm m} \in M'}$ gives a {(framed)} 
miniversal variation of finite, integral, uncoupled BPS structures over $M'$.  
Moreover, we can naturally extend the family to a miniversal variation of finite, integral, uncoupled BPS structures on the whole parameter space $M$. 
\end{prop}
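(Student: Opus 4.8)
The plan is to verify the three conditions in the definition of a variation of BPS structures (VBPS1)--(VBPS4), and the miniversality condition, directly from the explicit descriptions in Tables \ref{table:classical} and \ref{table:BPS-str-HG}, and then to argue that nothing prevents extending over all of $M$. Since the local system $\Gamma_{\bm m}$ is trivial here --- the charge lattice is spanned by the residue cycles $\gamma_{s_\pm}$, whose definition makes sense uniformly over $M'$ (and indeed over all of $M$) --- (VBPS1) is immediate, with the intersection pairing being identically zero and hence trivially covariantly constant. For (VBPS2), I would take the covariantly constant frame $\{\gamma_{s_+}-\gamma_{s_-}\}_{s \in P_{\rm ev}}$ (or, for $\Sigma_{\rm HG}$, a basis among the $\gamma_{0_+}+\gamma_{1_\epsilon}+\gamma_{\infty_{\epsilon'}}$ type classes), and observe from the formula $Z(\gamma_{s_\pm}) = \pm 2\pi i\, m_s$ that the central charges are linear (hence holomorphic) functions of ${\bm m}$ on all of $M$. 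Condition (VBPS3) on the support property: since there are only finitely many active classes and $Z_{\bm m}(\gamma)$ is a nonvanishing holomorphic function on $M'$ (nonvanishing precisely because of the genericity assumptions in Table \ref{table:classical} that forbid combinations like $m_0 \pm m_1 \pm m_\infty = 0$), the constant $C$ can be chosen uniformly on compacta; here I would just note $|Z_{\bm m}(\gamma)|$ is continuous and positive, hence bounded below on any compact set, and the finitely many $\|\gamma\|$ for active $\gamma$ are fixed. Finally (VBPS4): as already remarked in the excerpt, the BPS indices $\Omega(\gamma)$ are literally constant on $M'$ (they do not depend on ${\bm m}$ at all, as Table \ref{table:BPS-str-HG} shows), so ${\mathbb S}_{\bm m}(\Delta)$ does not vary and the wall-crossing identity is trivially satisfied wherever the boundary rays of $\Delta$ stay non-BPS.

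For miniversality, I would check that the period map ${\bm m} \mapsto Z_{\bm m} \in \Hom(\Gamma,{\mathbb C})$ is a local isomorphism. Using the basis of $\Gamma$ dual (under the pairing $(\cdot,\cdot)$) to $\{\beta_s\}$, and the explicit linear formulas $Z(\gamma_{s_\pm}) = \pm 2\pi i\, m_s$, the period map is (up to the invertible constant $2\pi i$ and a relabeling) essentially the identity ${\bm m} \mapsto {\bm m}$ on each coordinate; in particular its differential is an isomorphism everywhere on $M$. I would spell this out case by case only to the extent of noting that in every row of Table \ref{table:classical} the number of mass parameters equals the rank of $\Gamma$ (the number of even-order poles), which is exactly what is needed for the period map between equidimensional spaces to be a local iso once its Jacobian is seen to be nondegenerate.

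For the extension statement, the key point is that the only reason for restricting to $M' = M \setminus W$ was to define the BPS \emph{indices} unambiguously (avoiding the locus $W$ where two distinct BPS cycles have central charges on a common ray, so that no "multiple degeneration" occurs). But since the indices turn out to be the same constants everywhere on $M'$, and are attached intrinsically to the combinatorial type of each BPS cycle (type I/II/III/IV) rather than to the fine structure of the spectral network, there is a canonical choice of the collection $\{\Omega(\gamma)\}$ that extends continuously --- indeed constantly --- across $W$. I would define $(\Gamma_{\bm m}, Z_{\bm m}, \Omega_{\bm m})$ on all of $M$ by keeping $\Gamma$, $Z_{\bm m}$ (already defined by the linear formulas on all of $M$), and the same constant $\Omega$, and then re-run the verification of (VBPS1)--(VBPS4) and miniversality, all of which went through without ever using ${\bm m} \in M'$. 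The passage from $M'$ to $M$ for the almost-doubled structure is then automatic, since almost-doubling is a purely lattice-theoretic operation: $\Gamma_{\text{\DJ}} = \Gamma \oplus \Gamma^\ast$, with $Z_{\text{\DJ}}$, $\Omega_{\text{\DJ}}$ the trivial extensions, and the pairing \eqref{eq:doubledpairing} restricted; one checks nondegeneracy of this pairing and that the resulting variation inherits (VBPS1)--(VBPS4) and miniversality from the original (miniversality in particular is unchanged because $Z_{\text{\DJ}}$ restricted to $\Gamma$ is still $Z_{\bm m}$, and $Z^\vee$ on $\Gamma^\ast$ can be taken linear in a way compatible with $\beta_s \leftrightarrow m_s$).

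The main obstacle, such as it is, is the extension across $W$: one must be confident that the "right" definition of the variation on $W$ is the constant-index one, i.e. that the generic BPS spectrum and indices really do extend through the non-generic locus rather than jumping. This is not a wall-crossing phenomenon in the usual sense (there is no change of indices), but one should make sure the BPS cycles themselves --- as homology classes in $\Gamma$ --- are locally constant across $W$, which follows because $W$ is a real-codimension-one locus where only the \emph{phases} of the $Z(\gamma_{\rm BPS})$ align, not the cycles themselves; the classes listed in Table \ref{table:BPS-str-HG} are given by formulas valid on all of $M$. Once this is granted, everything else is the routine bookkeeping sketched above, and I would present it as a short case-free argument with Table \ref{table:BPS-str-HG} doing the heavy lifting.
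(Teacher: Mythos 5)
Your proof is correct and takes essentially the same route as the paper: both verify the axioms directly from the explicit data of Table \ref{table:BPS-str-HG} (triviality of the local system, linearity of $Z(\gamma_{s_\pm})=\pm 2\pi i\, m_s$, rank of $\Gamma_{\bm m}$ equal to the number of even-order poles checked case by case, constancy of the BPS spectrum and indices over $M'$ giving the extension across $W$ and the trivial wall-crossing identity). The paper's version is terser --- it cites \cite[Lemma 2.2]{BS13} for the rank computation and folds the support and wall-crossing properties into the remark that the structures are finite and uncoupled --- but the substance is identical.
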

\begin{proof}
Since no collision of zeros and poles of $\varphi$ occurs on the parameter space $M$, the collection $\{ \Gamma_{\bm m} \}_{{\bm m} \in M}$ of lattices forms a local system on the space. The rank of $\Gamma_{\bm m}$ are computed by using \cite[Lemma 2.2]{BS13}, and we may verify that the rank of each lattice $\Gamma_{\bm m}$ coincides with the dimension of $M$  (i.e., the number of even order poles of $\varphi$) by case-by-case checking.  {The triviality of this local system follows from the isomorphism described below in \eqref{eq:gammaiso}.} 
Moreover, the explicit expression of central charges in Table \ref{table:BPS-str-HG} shows that the period map 
is holomorphic and local isomorphism.

Although $\Omega_{{\bm m}}$ was not defined for ${\bm m} \in W$ since a multiple degeneration of spectral network can occur there, the definition of $\Omega_{{\bm m}}$ can be extended by continuity across $W$ due to our observation in \cite{IK20} which tells the set of BPS cycles and their BPS invariants are common for all ${\bm m} \in M'$ as is shown in Table \ref{table:BPS-str-HG} (that is, Table \ref{table:BPS-str-HG} can be taken as the definition of the BPS structure for all ${\bm m}\in M$). The family of BPS structure on $M$ thus obtained satisfies the support property and wall-crossing formula since our BPS structures are finite and uncoupled (all BPS automorphisms commute). Thus we have a miniversal variation of finite, integral, uncoupled BPS structures over $M$.
\end{proof}

Thanks to the miniversality of the original BPS structure, we have an identification between the tangent space $T_{\bm m}M$ and $\Gamma_{\bm m}^*\otimes\mathbb{C}$ ($\simeq\Gamma_{\bm m}^\vee\otimes\mathbb{C}$) via the derivative of the period map \eqref{eq:periodmap}. Thus the space $\Hom(\Gamma_{\bm m}^\ast, {\mathbb C})$, which parametrizes the arbitrary homomorphism $Z_{\bm m}^\vee$, is identified with the cotangent fiber $T_{\bm m}^\ast M$.  It is not hard to show then that:


\begin{coro}\label{cor:mini} For any family of BPS structures of hypergeometric type $(\Gamma_{\bm m},Z_{\bm m},\Omega_{\bm m})_{\,{\bm m} \in M}$, the family of almost-doubled BPS structures $(\Gamma_{\text{\rm \DJ}, {\bm p}}, Z_{\text{\rm \DJ}, {\bm p}}, \Omega_{\, {\text{\rm \DJ}}, \bm p})_{{ {\bm p}}\in T^*M}$ gives a miniversal variation of finite, integral, uncoupled BPS structures over $T^\ast M$. 
\end{coro}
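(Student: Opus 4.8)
The plan is to reduce the corollary to Proposition \ref{prop:constructing-VBPS} together with the general doubling formalism of \S\ref{sec:doubling}. First I would observe that, by Proposition \ref{prop:constructing-VBPS}, the original family $(\Gamma_{\bm m}, Z_{\bm m}, \Omega_{\bm m})_{{\bm m} \in M}$ is a miniversal variation of finite, integral, uncoupled BPS structures over $M$; in particular the local system $\{\Gamma_{\bm m}\}$ has constant rank $r = \dim M$ and the period map \eqref{eq:periodmap} is a local isomorphism. Since $\Gamma^{\ast}_{\bm m} \subset \Gamma^{\vee}_{\bm m}$ is a \emph{full-rank} sublattice (generated by the $\beta_s$, $s \in P_{\rm ev}$, as noted after \eqref{eq:Gamma-almost-doubled-lattice}), the lattice $\Gamma_{\text{\DJ},{\bm m}} = \Gamma_{\bm m} \oplus \Gamma^{\ast}_{\bm m}$ also forms a local system over $M$ of rank $2r$, and its restricted pairing $\langle\cdot,\cdot\rangle$ is covariantly constant and, by \eqref{eq:doubledpairing}--\eqref{eq:sign-of-pairings} (restricted to the full-rank sublattice), nondegenerate. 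The BPS indices $\Omega_{\text{\DJ}}$ are supported on $\Gamma_{\bm m}\oplus\{0\}$ and equal $\Omega_{\bm m}$ there, so finiteness, integrality, the symmetry property, and the support property for $(\Gamma_{\text{\DJ},{\bm m}}, Z_{\text{\DJ},{\bm m}}, \Omega_{\text{\DJ},{\bm m}})$ all follow immediately from the corresponding properties of the original; uncoupledness holds because $\langle\cdot,\cdot\rangle$ vanishes on any two active classes (which lie in $\Gamma_{\bm m}\oplus\{0\}$, on which the pairing is the original trivial one).

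Next I would set up the parameter space. The point is that a BPS structure in the doubled family requires, in addition to ${\bm m} \in M$, a choice of the auxiliary homomorphism $Z^{\vee}_{\bm m} \in \Hom(\Gamma^{\ast}_{\bm m}, {\mathbb C})$ entering $Z_{\text{\DJ}}(\gamma,\beta) = Z(\gamma) + Z^{\vee}(\beta)$. By miniversality of the original family, the derivative of the period map identifies $T_{\bm m}M \cong \Gamma^{\ast}_{\bm m}\otimes{\mathbb C} \ (\cong \Gamma^{\vee}_{\bm m}\otimes{\mathbb C})$, hence dually $T^{\ast}_{\bm m}M \cong \Hom(\Gamma^{\ast}_{\bm m},{\mathbb C})$, as already recorded in the paragraph preceding the corollary. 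Therefore the natural base for the doubled family is exactly $T^{\ast}M$, with a point ${\bm p} \in T^{\ast}M$ lying over ${\bm m}$ furnishing both ${\bm m}$ and the homomorphism $Z^{\vee}_{\bm m} = {\bm p}$; this fixes the notation $(\Gamma_{\text{\DJ},{\bm p}}, Z_{\text{\DJ},{\bm p}}, \Omega_{\text{\DJ},{\bm p}})_{{\bm p}\in T^{\ast}M}$ of the statement.

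It then remains to check the four axioms (VBPS1)--(VBPS4) for this family over $T^{\ast}M$ and to verify miniversality. (VBPS1): $\{\Gamma_{\text{\DJ},{\bm p}}\}$ is the pullback to $T^{\ast}M$ of the local system $\{\Gamma_{\text{\DJ},{\bm m}}\}$ on $M$ along the projection $T^{\ast}M\to M$, with covariantly constant pairing. (VBPS2): for a covariantly constant section $(\gamma_{\bm p},\beta_{\bm p})$, the function $Z_{\text{\DJ},{\bm p}}(\gamma_{\bm p},\beta_{\bm p}) = Z_{\bm m}(\gamma_{\bm p}) + \langle {\bm p}, \beta_{\bm p}\rangle$ is holomorphic in ${\bm p}$, since the first term is holomorphic in ${\bm m}$ by (VBPS2) for the original family and the second is linear in the cotangent coordinates. (VBPS3): the support constant can be chosen locally uniformly because $\Omega_{\text{\DJ}}$ is supported on the ${\bm p}$-independent set of active classes of the original family and $Z_{\text{\DJ}}$ restricted there equals $Z$, so the original estimate applies. (VBPS4): the wall-crossing formula is trivially satisfied since $\Omega_{\text{\DJ}}$ (hence each ${\mathbb S}_{\bm p}(\ell)$) does not vary at all — all BPS automorphisms commute in the uncoupled case, exactly as in the proof of Proposition \ref{prop:constructing-VBPS}. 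Finally, for miniversality one must show the period map $T^{\ast}M \to \Hom(\Gamma_{\text{\DJ}},{\mathbb C})$, ${\bm p}\mapsto Z_{\text{\DJ},{\bm p}}$, is a local isomorphism; using the splitting $\Gamma_{\text{\DJ}} = \Gamma\oplus\Gamma^{\ast}$ this map is $(Z_{\bm m}, Z^{\vee}_{\bm m}) = (Z_{\bm m}, {\bm p})$, and its differential is block-triangular with the ${\bm m}$-block equal to the (invertible, by original miniversality) period map of $\Gamma$ and the cotangent-fibre block the identity on $\Hom(\Gamma^{\ast},{\mathbb C})$; hence it is invertible and the family is miniversal over $T^{\ast}M$.

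I expect none of these steps to present a genuine obstacle — the only point needing slight care is the bookkeeping identifying $T^{\ast}M$ as the correct parameter space and checking that the doubled pairing remains nondegenerate after restricting the second factor to the full-rank sublattice $\Gamma^{\ast}$ rather than all of $\Gamma^{\vee}$; this is where the full-rank hypothesis is used, and it is immediate once one notes that $\langle\cdot,\cdot\rangle$ on $\Gamma_{\text{\DJ}}$ pairs $\Gamma$ with $\Gamma^{\ast}$ via $\beta(\gamma)$, which is a perfect pairing between these two full-rank lattices (up to the sign in \eqref{eq:sign-of-pairings}).
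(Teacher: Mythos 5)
Your proposal is correct and follows exactly the route the paper intends: the paper states this corollary with no written proof (only the remark that it follows from Proposition \ref{prop:constructing-VBPS} together with the identification $T^{\ast}_{\bm m}M \simeq \Hom(\Gamma^{\ast}_{\bm m},{\mathbb C})$), and your argument is precisely the straightforward filling-in of that sketch — checking (VBPS1)--(VBPS4) and the block-triangular invertibility of the doubled period map. The only quibble is terminological: the pairing between $\Gamma$ and $\Gamma^{\ast}$ is nondegenerate but not perfect over ${\mathbb Z}$ (note $\langle \beta_s, \gamma_{s'_+}-\gamma_{s'_-}\rangle = 2\delta_{s,s'}$), which is all that is needed here in any case.
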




Note that the same statement holds for the fully doubled family as well.


\subsection{Almost-doubled BPS Riemann-Hilbert problem and the BPS $\tau$-function}

We will be interested in the BPS Riemann-Hilbert problem (Problem \ref{prob:holRHP}) associated with the almost-doubled BPS structure constructed in the previous subsection. We will call it the {\em almost-doubled BPS Riemann-Hilbert problem} below.   

First of all, it is easy to observe that, the almost-doubled BPS structure $(\Gamma_{\text{\DJ}}, Z_{\text{\DJ}}, \Omega_{\text{\DJ}})$ constructed above is still {\em uncoupled}. This is because the original BPS structure $(\Gamma, Z, \Omega)$ is uncoupled, and the corresponding BPS indices $\Omega_{\text{\DJ}}$ are supported on the BPS cycles in the original BPS structure. Therefore, we may happily use the general formula in Theorem \ref{thm:Anna-RHP-sol} for the minimal solution even in the (almost)-doubled BPS Riemann-Hilbert problem, once we have fixed an extended constant term $\xi_{\text{\DJ}}$ which takes value in the almost-doubled twisted torus  
\begin{align}
{\mathbb{T}_{-,{\text{\DJ}}}} := \left\{\xi_{\text{\rm \DJ}} : \Gamma_{\text{\text{\DJ}}} \rightarrow \mathbb{C}^\ast ~ | ~ \xi_{\text{\rm \DJ}}{(\mu_1+\mu_2)}= (-1)^{\langle \mu_1,\mu_2\rangle}\xi_{\text{\rm \DJ}}(\mu_1)\xi_{\text{\rm \DJ}}(\mu_2)\right\}. 
\end{align}
whose elements may be written as $\xi_{\text{\rm \DJ}}=(\xi,\xi^\vee)$ through the splitting \eqref{eq:Gamma-almost-doubled-lattice}. Applying Theorem \ref{thm:Anna-RHP-sol}, we have the following expression of the minimal solution to the almost-doubled BPS Riemann-Hilbert problem associated with $(\Gamma_{\text{\DJ}}, Z_{\text{\DJ}}, \Omega_{\text{\DJ}})$ evaluated at $\mu \in \Gamma_{\text{\DJ}}$:  
\begin{equation} \label{eq:minimal-solution-almost-doubled}
X_{\ell, \mu}^{\rm min}(\hbar) = e^{- Z_{\text{\DJ}}(\mu)/\hbar} \, \xi_{\text{\DJ}}(\mu) \,  \prod_{\substack{\eta \in \Gamma_{\text{\DJ}} \\ Z_{\text{\DJ}}(\eta) \in i {\mathbb H}_{\ell} }} 
{\displaystyle\Lambda}\left(\frac{Z_{\text{\DJ}}(\eta)}{2 \pi i \hbar},\frac{\log{\xi}_{\text{\DJ}}(-\eta)}{2\pi i} \right)^{\Omega_{\text{\DJ}}(\eta) \langle \mu, \eta \rangle}
\end{equation}
or more explicitly in terms of the splitting $\Gamma_{\text{\rm \DJ}}=\Gamma\oplus\Gamma^*$: 
\begin{align}
X_{\ell, \gamma}^{\rm min}(\hbar) =&\,e^{- Z(\gamma)/\hbar} \, \xi(\gamma) \\
X_{\ell, \beta}^{\rm min}(\hbar) =& \,e^{- Z^\vee(\beta)/\hbar} \, \xi^\vee(\beta) \,  \cdot \prod_{\substack{\gamma' \in \Gamma\\ Z(\gamma') \in i {\mathbb H}_{\ell} }} 
 {\displaystyle\Lambda}\left(\frac{Z(\gamma')}{2 \pi i \hbar},\frac{\log{\xi}(-\gamma')}{2\pi i} \right)^{\Omega(\gamma') \langle \beta, \gamma' \rangle}
\end{align}
In \S \ref{sec:final}, we will compare this minimal solution to our Voros solution which arises from the topological recursion and quantum curves. Note from the expression the jumping structure is clear: $X^{\rm min}_{\ell,\gamma}$ do not jump at all and are very simple, whereas the $X^{\rm min }_{\ell,\beta}$ jump across BPS rays of the original $(\Gamma,Z,\Omega)$.

To describe the BPS $\tau$-function, let us use the mass parameters $(m_s)_{s \in P_{\rm ev}}$ which is a natural global coordinate on $M$ (as we have seen, $\dim M$ coincides with the number of even order poles of $\varphi$.)  To realize the coordinate as the central charges of a basis, it is convenient to use cycles with rational coefficients; that is, if we define a lattice 
\begin{equation}
    \Gamma_{\rm can} = \bigoplus_{s \in {P_{\rm ev}}} {\mathbb Z} \lambda_s, 
\end{equation}
together with a linear map $Z_{\rm can} : \Gamma_{\rm can} \to {\mathbb C}$ given by $Z_{\rm can}(\lambda_s) = 2 \pi i m_s$, then we have an isomorphism 
\begin{equation}\label{eq:gammaiso}
\Gamma_{\rm can} \otimes {\mathbb Q} \simeq \Gamma_{\bm m} \otimes {\mathbb Q} ~ : ~
\lambda_s \mapsto \frac{\gamma_{s_+} - \gamma_{s_-}}{2}
\end{equation} 
which identifies $Z_{\rm can}$ with $Z_{\bm m}$ (after $\otimes {\mathbb Q}$) for any ${\bm m} \in M$. We need such rational coefficients since $\{ \gamma_{s_+} - \gamma_{s_-} \}_{s \in P_{\rm ev}}$ is not a ${\mathbb Z}$-basis of $\Gamma_{\bm m}$ in general. 
In view of $\langle \beta_s , \gamma_{s'_+} - \gamma_{s'_-} \rangle = 2 \, \delta_{s, s'}$, we may also canonically identify the (almost) dual lattices $\Gamma_{\rm can}^{\vee} \simeq \Gamma^\ast_{\bm m}$ by identifying the dual element of $\lambda_s$ with $\beta_s$.

Thanks to the miniversality, we have the following identification: 
\begin{equation} \label{eq:derivative-of-period-map}
T_{\bm m} M \simeq \Gamma_{\bm m}^\vee \otimes {\mathbb C} \simeq \Gamma^\vee_{\rm can} \otimes {\mathbb C}. 
\end{equation}
Thus we may use the natural coordinate $(m_s)_{s \in P_{\rm ev}}$ as $z_i$'s in the definition of BPS $\tau$-function \eqref{eq:defining-tau}. 
More precisely, in terms of the coordinate 
$(m_s, m^\vee_s )_{s \in P_{\rm ev}}$ of $T^\ast M$ 
given by $Z^\vee_{\bm m}(\beta_s) = 2 \pi i m^\vee_s$, 
the BPS $\tau$-function for our almost-doubled BPS structure is defined by 
\begin{equation}
\label{eq:tau-concrete}
\frac{\partial}{\partial m^\vee_s} \log \tau_{\rm BPS, \ell}  = 0, 
\qquad
\frac{\partial}{\partial m_s} \log \tau_{\rm BPS, \ell}  = 
- \frac{\partial}{\partial \hbar} \log Y_{\ell, \beta_s}
\qquad (s \in P_{\rm ev}).
\end{equation}
Therefore, the BPS $\tau$-function does not depend on $m_s^\vee$. 
In particular, the BPS $\tau$-function associated with the minimal solution \eqref{eq:minimal-solution-almost-doubled} has the same expression given in Proposition \ref{prop:annatau}; that is, it is written as a product of $\Upsilon$-functions over the original lattice $\Gamma$. We will set $Z^\vee = 0$ (i.e., $m_s^\vee = 0$) in Section \ref{sec:final}, but this restriction evidently does not change the BPS $\tau$-function. 

\section{Topological recursion, quantum curves and Voros coefficients}
\label{sec:quantumcurves}

In the works \cite{IKoT-I, IKoT-II}, the first named author and collaborators studied the quantization of spectral curves of hypergeometric type via {\em topological recursion} (TR). See also \cite{BE-16} for generalities on the quantization of spectral curves, i.e. {\em quantum curves}. Moreover, an explicit expression for the Voros symbols of the associated quantum curves was obtained. 
In this section, we recall the topological recursion and the construction of the quantum curves, as well as the definition and form of their Voros symbols from those works, which we will use to solve to the Riemann-Hilbert problem in \S\ref{subsection:solving-BPS-RHP-by-Voros}.

\subsection{Definition of correlators and partition function}

Here we briefly recall several facts from Eynard-Orantin's theory of topological recursion (\cite{EO}) and define the objects playing a central role on the in the present paper.

\begin{dfn} \label{def:spectral-curve-TR}
A \emph{spectral curve} is a tuple $({\mathcal C}, x, y, B)$ of the following data:
\begin{itemize}
\item
a compact Riemann surface ${\mathcal C}$, 
\item 
a pair of non-constant meromorphic functions $x,y$ on ${\mathcal C}$ such that 
$dx$ and $dy$ never vanish simultaneously, and 
\item
a symmetric meromorphic bidifferential\footnote{
A bidifferential is a multidifferential with $n = 2$.} 
$B$ on ${\mathcal C}$ 
having a second order pole with biresidue $1$ along the diagonal, and holomorphic elsewhere.
\end{itemize}
\end{dfn}

We usually denote by $z$ a local coordinate of ${\mathcal C}$, 
and by $z_i$ a copy of the coordinate when we consider {\em multidifferentials}.  
Here, a meromorphic multidifferential is a meromorphic section of the line bundle 
$\pi_1^\ast(T^\ast {\mathcal C}) \otimes \cdots \otimes \pi_n^\ast(T^\ast {\mathcal C})$
on ${\mathcal C}^n$, where $\pi_j : {\mathcal C}^n \to {\mathcal C}$ is the $j$th projection map
\cite{DN16}. We often drop the symbol $\otimes$ when we express multidifferentials.

We denote by ${\mathcal R}$ the set of \emph{ramification points} of $x$, 
that is, ${\mathcal R}$ consists of zeros of $dx$;
here we consider $x$ as a branched covering map
${\mathcal C} \to \mathbb{P}^1$, and we call the images of ramification points
by $x$ the {\em branch points}.
We also assume that all ramification points of $x$ 
are simple so that the local conjugate map $z \mapsto \bar{z}$ 
near each ramification point is well-defined. 
Then, the \emph{topological recursion} (TR) is formulated as follows. 

\begin{dfn}[{\cite[Definition 4.2]{EO}}]
The \emph{Eynard-Orantin correlator} 
$W_{g, n}(z_1, \cdots, z_n)$ for $g \geq 0$ and $n \geq 1$
is defined as a meromorphic multidifferential on ${\mathcal C}$ 
by the recursive relation
\begin{align}
\label{eq:TR}
W_{g, n+1}(z_0, z_1, \cdots, z_n)
&:= \sum_{r \in {\mathcal R}}
\Res_{z = r} K_r(z_0, z)
\Bigg[
W_{g-1, n+2} (z, \overline{z}, z_1, \cdots, z_n)
\\
&\qquad\qquad
+
\sum_{\substack{g_1 + g_2 = g \\ I_1 \sqcup I_2 = \{1, 2, \cdots, n\}}}'
W_{g_1, |I_1| + 1} (z, z_{I_1})
W_{g_2, |I_2| + 1} (\overline{z}, z_{I_2})
\Bigg]
\notag
\end{align}
for $2g + n \geq 2$ with initial conditions given by
\begin{align}
W_{0, 1}(z_0) &:= y(z_0) dx(z_0),
\quad
W_{0, 2}(z_0, z_1) := B(z_0, z_1).
\end{align}
Here we set $W_{g,n} \equiv 0$ for a negative $g$, 
$K_r$ denotes the so-called ``recursion kernel'' given by
\begin{equation}
\label{eq:RecursionKernel}
K_r(z_0, z)
:= \frac{1}{2\big(y(z) - y(\overline{z})\big) dx(z)}
\int^{\zeta = z}_{\zeta = \overline{z}} B(z_0, \zeta)
\end{equation}
defined near a ramification point $r \in {\mathcal R}$,
$\sqcup$ denotes the disjoint union,
and the prime $'$ on the summation symbol in \eqref{eq:TR}
means that we exclude terms for
$(g_1, I_1) = (0, \emptyset)$
and
$(g_2, I_2) = (0, \emptyset)$
(so that $W_{0, 1}$ does not appear) in the sum.
We have also used the multi-index notation:
for $I = \{i_1, \cdots, i_m\} \subset \{1, 2, \cdots, n\}$
with $i_1 < i_2 < \cdots < i_m$, $z_I:= (z_{i_1}, \cdots, z_{i_m})$.
\end{dfn}

One of the central objects in Eynard-Orantin's theory is the \emph{genus $g$ free energy} $F_g$ ($g\geq 0$) associated to the spectral curve:
\begin{dfn}[{\cite[Definition 4.3]{EO}}]
For $g \geq 2$, the \emph{genus $g$ free energy} $F_g$ is defined by
\begin{equation}
\label{def:Fg2}
F_g := \frac{1}{2- 2g} \sum_{r \in {\mathcal R}} \Res_{z = r}
\big[\Phi(z) W_{g, 1}(z) \big],
\end{equation}
where $\Phi(z)$ is any primitive of $y(z) dx(z)$. The free energies $F_0$ and $F_1$ for $g=0$ and $1$ are also defined, but in a different manner (see \cite[\S 4.2.2 and \S 4.2.3]{EO} for the definition). 
\end{dfn}

We note that the right-hand side of \eqref{def:Fg2} indeed does not depend on the choice of $\Phi(z)$. 

\begin{dfn}
The generating series
\begin{equation} \label{eq:total-free-energy}
F_{\rm TR}(\hbar) := \sum_{g = 0}^{\infty} \hbar^{2g-2} F_g
\end{equation}
of $F_g$ is called the \emph{free energy} of the spectral curve. 
Its exponential 
\begin{equation}
Z_\mathrm{TR}(\hbar) := e^{F_{\rm TR}(\hbar)}
\end{equation} 
is called the \emph{(topological recursion) partition function} of the spectral curve. 
\end{dfn}

The free energy and the partition functions are regarded as a formal series of $\hbar$. It is known that usually these formal series are divergent (see \cite{Eynard-19} for the growth estimate for the coefficients). The objects arising in our examples are, nonetheless, {\em Borel summable} in all but finitely many ``singular'' directions.

\subsection{Applying TR to spectral curves of hypergeometric type} 
\label{section:applying-TR-to-HG}

In the rest of \S \ref{sec:quantumcurves}, we denote by $\Sigma$ one of the curves $\Sigma_{\bullet}$ for $\bullet \in \{{\rm HG}, {\rm dHG}, {\rm Kum}, {\rm Leg}, {\rm Bes}, {\rm Whi}, {\rm Web}, {\rm dBes}, {\rm Ai} \}$ in Table \ref{table:classical}. 
To apply TR to $\Sigma$, we should regard them as spectral curves in the sense of Definition \ref{def:spectral-curve-TR}. This was done in \cite[\S 2.3]{IKoT-II}, where an explicit rational parametrization of $\Sigma$ was given. 
That is, for each $\bullet$ in Table \ref{table:classical}, 
there exists a pair of rational functions 
$(x(z),y(z)) = (x_{\bullet}(z), y_{\bullet}(z))$ such that 
we have an isomorphism 
\begin{equation} \label{eq:meromorphic-parametrization}
\begin{array}{ccc}
{\mathcal C} \setminus {\mathcal P}
& \stackrel{\sim}{\longrightarrow} & \Sigma \\
\rotatebox{90}{$\in$} & & \rotatebox{90}{$\in$} \\
z & \longmapsto & (x(z), y(z))
\end{array}
\end{equation} 
of punctured Riemann surfaces, with the choice ${\mathcal C} = {\mathbb P}^1$ and the set ${\mathcal P}$ of poles of $x(z)$ and $y(z)$. 
Under the notations fixed in \S \ref{section:cycle-path}, the set ${\mathcal P}$ is of the form
\begin{equation}
{\mathcal P} = \{ p_s ~|~ s \in P_{\rm od} \} \cup \{ p_{s_{+}}, p_{s_{-}} ~|~ s \in P_{\rm ev} \} \subset {\mathcal C},
\end{equation}
where, for each $s \in P$, either $x(p_s) = s$ or $x(p_{s_\pm}) = s$ holds depending on the parity of pole order of $\varphi$ at $s$.  
The set ${\mathcal P}$ is in bijection with the set $\overline{\Sigma} \setminus \Sigma$, and hence, \eqref{eq:meromorphic-parametrization} can be extended to an isomorphism ${\mathcal C} \isom \overline{\Sigma}$ of compact Riemann surfaces. {We often use this isomorphism to identify the two in what follows.}

In the case of $\Sigma_{\rm HG}$, the above construction is realized as follows. (See \cite[\S 2.3]{IKoT-II} for the others examples.)

\begin{exa}
A rational parametrization \eqref{eq:meromorphic-parametrization}
of $\Sigma_{\rm HG}$ is given by the pair of explicit rational functions 
\begin{equation}
\label{eq:Gauss_parameterization}
\begin{cases}
\displaystyle
x(z) = x_{\rm HG}(z)
:= \frac{\sqrt{\Delta_{\rm HG}({\bm m})}}{4 {m_{\infty}}^2} 
\left( z + z^{-1} \right)
+ \frac{{m_{\infty}}^2 + {m_0}^2 
- {m_1}^2}{2 {m_{\infty}}^2}, \\[10pt]
\displaystyle
y(z) = y_{\rm HG}(z)
:= \frac{4 {m_{\infty}}^3 z^2 \left(z - z^{-1} \right)}
{\sqrt{\Delta_{\rm HG}({\bm m})} \, (z - p_{0_+})(z - p_{0_-})(z - p_{1_+})(z - p_{1_-})},
\end{cases}
\end{equation}
where $\Delta_{\rm HG}(\bm m)$ is given in \eqref{eq:HG-Delta} (which is non-zero under Assumption \ref{ass:genericity}), 
and the set 
${\mathcal P} = \{p_{0_+}, p_{0_-}, p_{1_+}, p_{1_-}, p_{\infty_+}, p_{\infty_-}  \}$
of poles consists of 
\begin{equation} 
p_{0_\pm} :=
- \frac{(m_{0} \pm m_\infty)^2 
- {m_1}^2}{\sqrt{\Delta_{\rm HG}({\bm m})}},
\quad
p_{1_\pm} := \frac{(m_{1} \pm m_\infty)^2 
- {m_0}^2}{\sqrt{\Delta_{\rm HG}({\bm m})}}, 
\quad
p_{\infty_+}=\infty, ~~ p_{\infty_-}=0.
\end{equation}
The points $p_{s_\pm} \in {\mathcal C}$ are the two preimages of $s \in \{0, 1, \infty \}$ by $x_{\rm HG}(z)$, and the labels are chosen so that the residue formula \eqref{eq:sign-convention-preimages} hold. The point $p_{s_\pm}$ is mapped to $s_\pm$ through the isomorphism ${\mathcal C} \simeq \overline{\Sigma}_{\rm HG}$. The set of ramification points is given by ${\mathcal R} = \{\pm 1 \} \subset {\mathcal C}$, and these two points are mapped to the turning points $b_{1}, b_{2} \in {\mathbb P}^1$ by $x_{\rm HG}(z)$.  The conjugate map is given by $\overline{z} = 1/z$, which corresponds to the covering involution of $\overline{\Sigma}_{\rm HG}$.  
\end{exa}

Together with the canonical choice 
\begin{equation} \label{eq:Bergman-P1}
B(z_1, z_2) = \frac{dz_1 dz_2}{(z_1 - z_2)^2}
\end{equation} 
of the bidifferential $B$ when ${\mathcal C} = {\mathbb P}^1$,
we obtain a spectral curve $({\mathcal C}, x, y, B)$ 
in the sense of Definition \ref{def:spectral-curve-TR}.
We will identify $\Sigma$ defined in \eqref{eq:spectral-curve-x-y}
(rather, its compactification $\overline{\Sigma}$), together with the projection on the first coordinate $\pi:\Sigma\rightarrow\mathbb{P}^1$,
with the spectral curve $({\mathcal C}, x, y, B)$ through the isomorphism and the canonical choice of $B$. 
Under the isomorphism, the set ${\mathcal R} \subset {\mathcal C}$ of ramification points 
is mapped to $\pi^{-1}(T) \subset \overline{\Sigma}$ bijectively. 

\subsection{Free energies of spectral curves of hypergeometric type and BPS structure}
We have now introduced all of the notions needed to recall the result of our previous work \cite{IK20}.
In \cite{IKoT-I, IKoT-II}, a relationship between the free energy of the spectral curves of hypergeometric type and the Voros coefficients of the associated quantum curves was found (see Remark \ref{rem:differenceeq} for the precise relation). As a corollary, there are explicit formulas of $g$th free energy $F_g = F_g^{\bullet}$, which we can write uniformly in terms of BPS structures thanks to the results in \cite{IK20} for all hypergeometric type examples:

\begin{thm}[{\cite{IKoT-II,IK20}}] \label{thm:Borel-sum-free-energy}
The $g$th free energy of the spectral curves of hypergeometric type are given explicitly as follows: 
\begin{align}\label{eq:freenergies1} F_0 \equiv& \sum_{\substack{ \gamma \in \Gamma \\ Z(\gamma)\in  \mathbb{H}_{}}} 
\Omega(\gamma)\, \dfrac{1}{2} 
\left( \frac{Z(\gamma)}{2 \pi i } \right)^2
\log \left( \frac{Z(\gamma)}{2 \pi i} \right) ,
\\ 
\label{eq:freenergies2} F_1 \equiv&  - \frac{1}{12} \sum_{\substack{ \gamma \in \Gamma \\ Z(\gamma)\in  \mathbb{H}_{}}} 
\Omega(\gamma)\, \log \left( \frac{Z(\gamma)}{2 \pi i} \right),
\\ 
\label{eq:freenergies3}  F_{g}=&\dfrac{B_{2g}}{2g(2g-2)}
\sum_{\substack{ \gamma \in \Gamma \\ Z(\gamma)\in \mathbb{H}_{}}}\Omega(\gamma) \left( 
{\dfrac{2 \pi i}{Z(\gamma)}} \right)^{2g-2} \qquad (g \ge 2).
\end{align}
Here the formula for $F_0$ (resp., $F_1$) is valid
up to quadratic polynomials of $m_i's$ 
(resp., additive constants),
$B_{k}$ is the $k$th Bernoulli number given by $B_k=B_k(0)$, and $\mathbb{H}$ is any half-plane whose boundary rays are not BPS.
\end{thm}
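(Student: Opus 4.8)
The plan is to deduce the statement in two stages, combining the explicit evaluation of the free energies carried out in \cite{IKoT-I, IKoT-II} with the identification of the BPS spectrum carried out in \cite{IK20}.

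First I would establish the closed-form expressions for $F_g$ for each of the nine families in Table \ref{table:classical}. The key input is that every spectral curve of hypergeometric type is rational, with the explicit parametrization $z \mapsto (x(z),y(z))$ recalled in \S\ref{section:applying-TR-to-HG}, so that the recursion kernel and all residues at the (two or fewer) simple ramification points ${\mathcal R}$ can be written down explicitly. Rather than evaluating \eqref{def:Fg2} by brute force for each curve, the efficient route --- the one taken in \cite{IKoT-I, IKoT-II} --- is to use the difference equation relating the generating series $\sum_{g} \hbar^{2g-2} F_g$ to the Voros coefficients $V_{\gamma}$ of the associated quantum curve (see Remark \ref{rem:differenceeq}), together with the fact that the quantum curve ${\bm E}({\bm \nu})$ is gauge-equivalent to a classical special-function ODE of (confluent) hypergeometric type. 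The Voros coefficients then admit a closed evaluation in terms of Bernoulli polynomials via the shift relations satisfied by those special functions, and feeding this back through the difference equation produces the stated formula \eqref{eq:freenergies3} for $g \ge 2$; the cases $g = 0, 1$ are handled separately using \cite[\S 4.2.2, \S 4.2.3]{EO}, which is the source of the ``equality up to quadratic polynomials of the $m_i$'s, resp.\ additive constants'' caveat.

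Next I would rewrite these formulas in BPS language. For this I invoke Table \ref{table:BPS-str-HG}, in which \cite{IK20} records, for each curve, the BPS cycles $\pm\gamma_{\rm BPS}$, their central charges $Z(\gamma_{\rm BPS})$ (computed as periods $\oint \sqrt{Q}\,dx$, hence explicit linear functions of ${\bm m}$), and the indices $\Omega(\gamma_{\rm BPS}) \in \{1,2,4,-1\}$. The observation is that the arguments and multiplicities appearing in Stage 1 match exactly: each monomial $(\,\cdot\,)^{2g-2}$ in $F_g$ has base $2\pi i / Z(\gamma)$ for some BPS cycle $\gamma$, with coefficient $\Omega(\gamma)\cdot B_{2g}/(2g(2g-2))$. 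Since $F_g$ involves only the even power $(2\pi i / Z(\gamma))^{2g-2}$ and $Z(-\gamma) = -Z(\gamma)$, summing over a single representative of each pair $\{\gamma, -\gamma\}$ with $Z(\gamma)\in{\mathbb H}$ recovers the full list without double-counting, and independence of the choice of half-plane ${\mathbb H}$ (among those with non-BPS boundary rays) is immediate since only the set of BPS rays enters. The same bookkeeping, now with $\tfrac12 w^2\log w$ and $\log w$ in place of $w^{2g-2}$, gives \eqref{eq:freenergies1} and \eqref{eq:freenergies2}.

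The main obstacle is Stage 1: the explicit evaluation of the free energies (equivalently, of the Voros coefficients) is genuinely case-by-case across the nine families, and the bookkeeping near the exceptional indices $(g,n)\in\{(0,1),(0,2),(1,1)\}$ --- where both the topological recursion data and the definitions of $F_0, F_1$ behave differently --- requires care. By contrast, Stage 2 is essentially a matching of two tables, once one accepts the saddle-trajectory analysis of \cite{IK20} that produced Table \ref{table:BPS-str-HG}.
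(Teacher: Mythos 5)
Your proposal is correct and follows essentially the same route as the paper: Theorem \ref{thm:Borel-sum-free-energy} is recalled from \cite{IKoT-II,IK20}, where the explicit $F_g$ are obtained precisely via the difference-equation/Voros-coefficient relation of Remark \ref{rem:differenceeq} (with the $g=0,1$ cases treated separately), and the BPS rewriting is exactly the matching against Table \ref{table:BPS-str-HG} from \cite{IK20}. No gaps.
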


We note that the prefactor in \eqref{eq:freenergies3} given by the Bernoulli number has an interpretation as the Euler characteristic of the moduli space of Riemann surfaces (\cite{HZ, Penner}). 
The explicit expression in each example can be recovered from the BPS spectrum computed in \cite{IK20}, summarized in Table \ref{table:BPS-str-HG} (the Airy and degenerate Bessel cases are excluded since the free energy is trivial). These expressions are effectively used to compute the Borel sum of their generating series $F_{\rm TR} = \sum_{g\geq0}\hbar^{2g-2} F_g$.  Below, we will introduce the Voros coefficients of the corresponding quantum curves, and write an analogous explicit formula for them.

\subsection{Quantum curves of hypergeometric type}

It is known since \cite{GS12, Zhou12, MS, DM13, BE-16} etc.\,\,that the Eynard-Orantin correlators are closely related to the expansion coefficients of the WKB solution of a certain Schr\"odinger-type equation, called a {\em quantum curve}, whose classical limit gives the spectral curve.  Here we review the statement for the Gauss hypergeometric spectral curve $\Sigma_{\rm HG}$ (see \cite[\S 2.3]{IKoT-II} for other examples).


First we introduce the following notations for various subsets which exclude turning  and ramification points:
\begin{equation} \label{eq:Sigma-prime}
\Sigma' := \Sigma \setminus \pi^{-1}(T), \quad
\overline{\Sigma}' := \overline{\Sigma} \setminus \pi^{-1}(T), 
\quad 
{\mathcal C}' = {\mathcal C} \setminus {\mathcal R}, 
\quad 
{\mathcal P}' = {\mathcal P} \setminus {\mathcal R}.
\end{equation}
Let us take a divisor
\begin{equation}
D(z; {\bm \nu}) := [z] - \sum_{p \in {\mathcal P}'} \nu_p [p]
\end{equation}
on ${\mathcal C}'$ which depends on $z \in {\mathcal C}'$ and a tuple ${\bm \nu} = (\nu_p)_{p \in P'}$ of complex parameters, which we call {\em quantization parameters}, satisfying 
\begin{equation} \label{eq:relation-nu}
\sum_{p \in {\mathcal P}'} \nu_p = 1. 
\end{equation}
Let $W_{g,n}(z_1, \dots, z_n)$ be the correlator defined by TR from the spectral curve $({\mathcal C}, x, y, B)$ described in \S\ref{section:applying-TR-to-HG} and \cite{IKoT-II}. 
Let us define a ``primitive'' of $W_{g,n}$ by\footnote{
For a second kind meromorphic differential $\omega(z)$ on ${\mathcal C}$ 
which is holomorphic at $p \in {\mathcal P}'$, 
we shall define its integral with the divisor $D(z; {\bm \nu})$ 
for $z \in {\mathcal C} \setminus \{\text{poles of $\omega$} \}$ by
$\int_{\zeta \in D(z; {\bm \nu})} \omega(\zeta) 
= \sum_{p \in {\mathcal P}'} \nu_p \int^{\zeta = z}_{\zeta = p} \omega(\zeta).
$ 
The condition \eqref{eq:relation-nu} implies that 
$d \int_{\zeta \in D(z; {\bm \nu})} \omega(\zeta) = \omega(z)$.
We also generalize the definition for multidifferentials in a straightforward way. 
}
\begin{equation} \label{eq:primitive-F-gn}
F_{g,n}(z_1,\dots,z_n) := 
\int_{\zeta_1 \in D(z_1 ; {\bm \nu})} 
	\cdots \int_{\zeta_n \in D(z_n ; {\bm \nu})} 
	\left( W_{g, n}(\zeta_1, \cdots, \zeta_n) 
	- \delta_{g,0}\delta_{n,2} \frac{dx(\zeta_1) \, dx(\zeta_2)}
   {(x(\zeta_1) - x(\zeta_2))^2} \right)
\end{equation}
Restricting this to a diagonal set $z_1 = \cdots = z_n = z$, we have a function on  $\mathcal{C}'$. 
Taking their generating series, we introduce 
\begin{equation}
S_{\rm TR}({z}, \hbar) =\sum_{k \ge -1} \hbar^{k} S_{{\rm TR},k}({z})
:= 
\sum_{k \ge -1} \hbar^k
	\Biggl\{ \sum_{\substack{2g - 2 + n = k \\ g \geq 0, \, n \geq 1}}
		\frac{1}{n!} F_{g, n}({z}, \cdots, {z})
	\Biggr\} 
	\label{eq:Riccati-solution}
\end{equation}
as a formal (Laurent) series in $\hbar$ whose coefficients are functions\footnote{Although the $S_{\rm TR}$ need not be meromorphic, the derivative $dS_{\rm TR}$ is.} on $\mathcal{C}'$.

Roughly speaking, the framework of quantum curves claims that the exponential of $S_{\rm TR}({z}, \hbar)$ satisfies a Schr\"odinger-type ODE ({after combining this with the inverse map of the isomorphism ${\mathcal C}' \setminus \mathcal{P}' \stackrel{\sim}{\to} \Sigma'$ (which is a restriction of \eqref{eq:meromorphic-parametrization})}), which has the original spectral curve as its classical limit.  We will not give a general statement of quantum curves in \cite{BE-16, IKoT-I} here since it requires considerable notational preparation. Instead, we focus on the main example, the quantization of the Gauss hypergeometric curve $\Sigma_{\rm HG}$.

\begin{thm}[{\cite[\S 2.3.1]{IKoT-II}}] \label{thm:quantum-Gauss-curve}
Let $S_{\rm TR}^{\rm HG}({z},  \hbar)$ be the formal series defined from $\Sigma_{\rm HG}$ by \eqref{eq:Riccati-solution}, with the integration divisor  
\begin{align*}
D(z; {\bm \nu})
&= [z]
 - \nu_{0_+} [p_{0_+}] - \nu_{0_-} [p_{0_-}]
- \nu_{1_+} [p_{1_+}] - \nu_{1_-} [p_{1_-}]
- \nu_{\infty_+} [p_{\infty_+}] - \nu_{\infty_-} [p_{\infty_-}].
\end{align*}
Then, its exponential {(for any choice of local inverse $z(x)$)}
\begin{equation}
\label{eq:wave-function}
\psi^{\rm HG}(x, \hbar) = 
\exp\left( S_{\rm TR}^{\rm HG}({z(x)},  \hbar) \right)
\end{equation}
gives a formal solution of the following differential equation:
\begin{equation}
\label{eq:Gauss_eq(d/dx)}
{{\bm E}_{\rm HG}} 
~:~ \left( \hbar^2 \frac{d^2}{dx^2} + q(x,\hbar) \hbar \frac{d}{dx} + r(x,\hbar) \right) \psi = 0,
\end{equation}
where $q(x,\hbar) = q_0(x) + \hbar q_1(x)$, $r(x,\hbar) = r_0(x) + \hbar r_1(x) + \hbar^2 r_2(x)$ and
\begin{align*}
q_0(x) & = 0, \quad
q_1(x) = \frac{1 - \nu_{0_+} - \nu_{0_-}}{x}
		+ \frac{1 - \nu_{1_+} - \nu_{1_-}}{x - 1}, \\
r_0(x) &= - \frac{{m_{\infty}}^2 x^2 - ({m_{\infty}}^2 +
 {m_0}^2 - {m_1}^2)x + {m_0}^2}{x^2 (1 - x)^2}, \\
r_1(x) &=  - \frac{(\nu_{0_+} - \nu_{0_-}) m_0}{x^2 (x - 1)}
		+ \frac{(\nu_{1_+} - \nu_{1_-}) m_1}{x(x - 1)^2}
		+ \frac{(\nu_{\infty_+} - \nu_{\infty_-}) m_{\infty}}{x(x - 1)}, \\
r_2(x) &=  - \frac{\nu_{0_+} \nu_{0_-}}{x^2 (x - 1)}
		+ \frac{\nu_{1_+} \nu_{1_-}}{x(x - 1)^2}
		+ \frac{\nu_{\infty_+} \nu_{\infty_-}}{x(x - 1)}.
\end{align*}
\end{thm}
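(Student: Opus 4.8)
The plan is to pass to the WKB (Riccati) form of ${\bm E}_{\rm HG}$ and verify it order by order in $\hbar$, handling the low orders by a direct computation with the rational parametrization and the higher orders as a consequence of the topological recursion itself. Writing $\psi=\exp\!\bigl(S^{\rm HG}_{\rm TR}(x,\hbar)\bigr)$ and $P:=\partial_x S^{\rm HG}_{\rm TR}(x,\hbar)=\sum_{k\ge-1}\hbar^{k}P_k(x)$ with $P_k:=\partial_x S_{{\rm TR},k}$, the equation ${\bm E}_{\rm HG}$ is equivalent to $\hbar^2\bigl(\partial_x P+P^2\bigr)+\hbar\,q(x,\hbar)\,P+r(x,\hbar)=0$. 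Expanding in $\hbar$ and using $q_0\equiv 0$, this becomes the family of identities of rational functions
\[
\begin{aligned}
&P_{-1}^2=Q_{\rm HG},\qquad \partial_x P_{-1}+2P_{-1}P_0+q_1P_{-1}+r_1=0,\\
&\partial_x P_0+P_0^2+2P_{-1}P_1+q_1P_0+r_2=0,\\
&\partial_x P_{m-1}+2P_{-1}P_m+\sum_{\substack{j+k=m-1\\ j,k\ge0}}P_jP_k+q_1P_{m-1}=0\qquad(m\ge2),
\end{aligned}
\]
and it suffices to check each of these for the series $P$ produced by topological recursion from $\Sigma_{\rm HG}$.

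\emph{Low orders.} Since $S_{{\rm TR},-1}$ is a primitive of $W_{0,1}=y\,dx$, we have $P_{-1}(x)=y(z(x))$, and $y(z(x))^2=Q_{\rm HG}(x)$ is exactly the defining equation of $\Sigma_{\rm HG}$; this gives the first identity and shows that the classical limit of ${\bm E}_{\rm HG}$ is the spectral curve. Next, $S_{{\rm TR},0}(x)=\tfrac12 F_{0,2}(z(x),z(x))$ is the double integral of $W_{0,2}-\tfrac{dx_1\,dx_2}{(x_1-x_2)^2}$ against $D(z;{\bm \nu})\times D(z;{\bm \nu})$, which one evaluates in closed form from the parametrization \eqref{eq:Gauss_parameterization} and $W_{0,2}=B$; reading off $P_0$ and inserting it into the next two identities pins down $q_1,r_1,r_2$ in precisely the stated rational form --- in particular the coefficients $\nu_{s_+}-\nu_{s_-}$ in $r_1$ come from the residues of $P_0$ at the punctures $p_{s_\pm}\in{\mathcal P}'$, and the products $\nu_{s_+}\nu_{s_-}$ in $r_2$ from $P_0^2$. (The $\hbar^3$ identity gives an additional consistency check against $F_{1,1}$ and $F_{0,3}$.)

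\emph{All orders.} The crux is to derive the recursions for $m\ge 2$ from the topological recursion \eqref{eq:TR}, i.e. to carry out the ``topological recursion $\Rightarrow$ quantum curve'' step (c.f. \cite{IKoT-I, BE-16}). I would introduce $\omega:=\hbar P$, which is the diagonal restriction of the divisor-anchored, principal-specialized generating differential of the one-point correlators built from $\{W_{g,n+1}\}$ integrated over $D(z;{\bm \nu})^{n}$, regarded as a function on $\Sigma_{\rm HG}$, and rewrite the Riccati equation as $\hbar\,\partial_x\omega(z)+\omega(z)^2+q(x,\hbar)\,\omega(z)+r(x,\hbar)=0$. Translating \eqref{eq:TR} into the linear and quadratic loop equations for $\omega$, the linear one yields $\omega(z)+\omega(\iota z)=-q(x,\hbar)$, while the quadratic one expresses $\hbar\,\partial_x\omega(z)$ together with the bilinear term $\omega(z)\,\omega(\iota z)$ as a rational function of $x$ (this is where the square $\omega(z)^2$ is generated, using that $x$ has degree $2$); eliminating $\omega(\iota z)$ then gives the Riccati equation. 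The regularity at the turning points (the ramification points $z=\pm1$) enforced by the loop equations ensures this rational function has poles only at $x\in\{0,1,\infty\}$, and matching its principal parts there with $-(r_0+\hbar r_1+\hbar^2 r_2)$ --- already determined at low order --- together with the prescribed behaviour of $\omega$ at $\infty$ on each sheet, forces equality of the whole $\hbar$-series, hence all the recursions above. Alternatively, one can first prove the un-specialized statement (a PDE for the full wave function) by induction on $2g-2+n$ in the style of ``reconstructing WKB from topological recursion'' and then specialize along the diagonal. The main obstacle in either route is the bookkeeping of the quantization parameters ${\bm \nu}$ through the divisor-anchored integrals: one must verify that the apparent singularities at the $p_{s_\pm}$ organize exactly into $q_1,r_1,r_2$ and that $P_m$ for $m\ge 1$ has no poles at the turning points (regularity of the WKB coefficients there), which is precisely the content of the quadratic loop equation. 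The remaining spectral curves of hypergeometric type are handled by the same argument, or deduced from the Gauss case by confluence, as in \cite{IKoT-II}.
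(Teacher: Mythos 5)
The paper itself contains no proof of this theorem: it is imported verbatim from \cite[\S 2.3.1]{IKoT-II}, so the relevant comparison is with the argument there, which proceeds by essentially the route you propose (a Bouchard--Eynard-type derivation: loop equations $\Rightarrow$ Riccati equation $\Rightarrow$ second-order ODE with coefficients read off from the pole structure). Your order-by-order Riccati expansion is set up correctly, the identification $P_{-1}=y(z(x))$ with $y^2=Q_{\rm HG}$ is right, and computing $P_0$ from the regularized $F_{0,2}$ integrated against $D(z;{\bm \nu})$ is indeed how $q_1$, $r_1$, $r_2$ acquire their $\nu$-dependence.

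There is, however, one concrete step that fails as written: the claimed linear loop equation $\omega(z)+\omega(\iota z)=-q(x,\hbar)$. For \emph{any} equation of the stated form, the two formal Riccati solutions satisfy
$\omega(z)+\omega(\iota z)=-q(x,\hbar)-\hbar\,\partial_x\log\bigl(\omega(z)-\omega(\iota z)\bigr)$,
and since $\omega(z)-\omega(\iota z)=2y+O(\hbar^2)$ is a genuinely infinite $\hbar$-series, the left-hand side is an infinite series and cannot equal the degree-one polynomial $-q(x,\hbar)$; already for the Weber curve, where $q\equiv 0$, one has $\omega+\iota^\ast\omega=-\hbar\,\partial_x\log(\cdots)\neq 0$. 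The point your sketch glosses over is that the divisor $D(z;{\bm \nu})=[z]-\sum_p\nu_p[p]$ moves with $z$, so applying $\iota^\ast$ to the principally specialized generating function is \emph{not} the same as summing $W_{g,n}(z,\cdot)+W_{g,n}(\iota z,\cdot)$ over a fixed contour (that naive sum vanishes for $2g-2+n\ge 1$); the boundary contributions produced by the $z$-dependent endpoint are exactly what assemble into the $-\hbar\,\partial_x\log(\omega(z)-\omega(\iota z))$ correction, and this term is indispensable for the elimination of $\omega(\iota z)$ to close into a second-order equation whose coefficients $q,r$ are polynomial in $\hbar$. With the linear relation corrected, the remainder of your plan (quadratic loop equation controlling $\hbar\,\partial_x\omega$ and $\omega(z)\,\omega(\iota z)$, regularity at the ramification points, and matching principal parts at $x=0,1,\infty$) does go through and reproduces the cited proof.
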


The equation ${\bm E}_{\rm HG}$ given in \eqref{eq:Gauss_eq(d/dx)} is a 2nd order linear ODE with three regular singular points at $0$, $1$ and $\infty$, and hence, it is equivalent to the Gauss hypergeometric differential equation (see {\cite[\S 2.3.1]{IKoT-II}}). We note that the Gauss hypergeometric curve \eqref{eq:Sigma-HG} is nothing but the classical limit $\hbar \to 0$ of the equation ${\bm E}_{\rm HG}$; that is, ${\bm E}_{\rm HG}$ is a quantization of the Gauss curve \eqref{eq:Sigma-HG}. We call ${\bm E}_{\rm HG}$ the {\em quantum Gauss curve} (for the divisor $D(z;{\bm \nu})$).

In \cite{IKoT-I, IKoT-II}, it was proved that this quantization procedure works for all spectral curves $\Sigma = \Sigma_{\bullet}$ shown in Table \ref{table:classical}, and the resulting quantum curves ${\bm E}$ are members of the confluent family of hypergeometric type differential equations (there is some overlap with the results of \cite{BE-16}). We refer to \cite[\S 2.3]{IKoT-II} for explicit descriptions of rational parametrizations of hypergeometric type curves, choice of divisors, and the associated quantum curves for the other eight cases. We call these equations ${\bm E}$ the {\em quantum curves of hypergeometric type} (associated to each example).

\begin{rem} \label{rem:WKB-sol}
It is well-known that a Schr\"odinger-type ODE with a small parameter $\hbar$ like \eqref{eq:Gauss_eq(d/dx)} has a formal solution of the form 
\begin{equation} 
\psi_{\rm WKB}(x,\hbar) = \exp(S_{\rm WKB}({z(x)},\hbar)), \quad
S_{\rm WKB}({z(x)},\hbar) = \sum_{k \ge -1} \hbar^{k} S_{{\rm WKB},k}({z(x)}), 
\end{equation}
called the {\em WKB solution}. For a construction of the WKB solution, we refer \cite[\S 2]{KT98} for example. The derivative $dS_{{\rm WKB},k}({z(x)})/dx$ of the coefficients in the WKB solutions are meromorphic functions on the compactified classical limit $\overline{\Sigma}$ which are determined by a certain recursion relation.  Theorem \ref{thm:quantum-Gauss-curve} claims that $S_{\rm WKB}$ coincides with the formal series $S_{\rm TR}$ in \eqref{eq:Riccati-solution} obtained from the topological recursion, up to additive constants at each order of $\hbar$. 
\end{rem}

\subsection{Voros coefficients of quantum curves of hypergeometric type}
\label{subsubsection:Voros-coefficient}

Here we recall the definition of the Voros symbols for the quantum curve $E$ constructed above. See \cite{Voros83, DDP93, IN14, IN15} for the definition for general Schr\"odinger-type ODEs, and their roles in the exact WKB analysis. 

\subsubsection{Definition of Voros coefficients}

We define two kinds of Voros coefficients; the first type are associated with homology classes $\gamma \in H_1(\widetilde{\Sigma}, {\mathbb Z})$, while the second type are associated with relative homology classes $\beta \in H_1(\overline{\Sigma}, D_\infty, {\mathbb Z})$. Following \cite{IN14}, we refer to these as \emph{cycles} and \emph{paths}, respectively. 
We will keep using notations like $\gamma$ for cycles in $H_1(\widetilde{\Sigma}, {\mathbb Z})$ and $\beta$ for paths in $H_1(\overline{\Sigma}, D_\infty, {\mathbb Z})$ throughout.

Consider the formal series $S_{\rm TR}({z},  \hbar)$
defined in \eqref{eq:Riccati-solution}, 
and consider its term-wise derivative 
$dS_{\rm TR}({z},  \hbar) = \sum_{k \ge -1} \hbar^k \,
dS_{{\rm TR},k}({z})$ $(= dS_{\rm WKB}({z},\hbar))$ 
with respect to ${z}$. 
We take its ``odd part'' $dS_{\rm TR}^{\rm odd}({z},\hbar)
= \sum_{k \ge -1} \hbar^{k} S_{{\rm TR}, k}^{\rm odd}({z})$ 
defined by\footnote{We use the term ``odd'' although in general there terms of any order in $\hbar$. $dS^{\rm odd}_{\rm TR}$ is the anti-invariant part of $dS_{\rm TR}$ with respect to the covering involution $\iota_\ast$.} 
\begin{equation}
dS_{\rm TR}^{\rm odd}({z}, \hbar) = 
\frac{dS_{\rm TR}({z},  \hbar) - \iota^\ast dS_{\rm TR}({z},  \hbar)}{2},
\end{equation}
where $\iota^\ast$ means the pullback by the covering involution $\iota$ of $\overline{\Sigma}$.
This is a formal series whose coefficients are meromorphic on $\overline{\Sigma}$ satisfying 
\begin{equation} \label{eq:anti-invariance-dSodd}
\iota^\ast dS_{\rm TR}^{\rm odd}({z}, \hbar) 
= - dS_{\rm TR}^{\rm odd}({z}, \hbar).
\end{equation}

Although the coefficients $dS_{{\rm TR}, k}^{\rm odd}$ have poles on $\pi^{-1}(T)$, the residue of $dS_{{\rm TR}, k}^{\rm odd}$ at those points vanish due to the property \eqref{eq:anti-invariance-dSodd}. Therefore, the integral of $dS_{{\rm TR}, k}^{\rm odd}$ along a cycle $\gamma \in H_1(\widetilde{\Sigma}, {\mathbb Z})$ is well-defined for any $k \ge -1$. On the other hand, among the coefficients,   $dS_{{\rm TR}, -1}^{\rm odd}$ and $dS_{{\rm TR}, 0}^{\rm odd}$ have poles on $\pi^{-1}(T) \cup {  D_\infty}$, while $dS_{{\rm TR}, k}^{\rm odd}$ for $k \ge 1$ have poles only at $\pi^{-1}(T)$. This is because the correlators $W_{g,n}$ with $2g-2+n \ge 1$ have poles only at the ramification points. Moreover, $W_{g,n}$ with $2g-2+n \ge 1$ have no residue at ramification points, and hence, the integral of $dS_{{\rm TR}, k}^{\rm odd}$ along a path $\beta \in H_1(\overline{\Sigma}, D_\infty, {\mathbb Z})$ is well-defined for $k \ge 1$. Thus, we define
\begin{equation}dS_{{\rm TR},\geq 1}^{
\rm odd}({z},\hbar):=
dS_{\rm TR}^{\rm odd}({z}, \hbar) - 
\dfrac{ dS_{{\rm TR}, -1}^{\rm odd}({z})}{\hbar} - dS_{{\rm TR}, 0}^{\rm odd}({z}).
\end{equation}
by subtracting off the $k=0,1$ terms. Then, we define the Voros coefficients for the quantum curves as follows.\bigskip

\begin{dfn}
~
\begin{itemize}
\item 
For any cycle $\gamma \in H_1(\widetilde{\Sigma}, {\mathbb Z})$, the {\em (cycle) Voros coefficient} of the quantum curve ${\bm E}$ for $\gamma$ is a formal series of $\hbar$ defined by the term-wise integral:
\begin{equation}  \label{eq:def-Voros-closed}
V_{\gamma}(\hbar)=V_{\gamma}({\bm m}, {\bm \nu}, \hbar)  
:= \oint_{\gamma} dS_{\rm TR}^{\rm odd}({z}, \hbar).
\end{equation}

\item 
For any $\beta \in H_1(\overline{\Sigma}, D_\infty, {\mathbb Z})$, the {\em (path) Voros coefficient} of the quantum curve ${\bm E}$ for $\beta$ is a formal series of $\hbar$ defined by the term-wise integral\footnote{ 
The Voros coefficients for relative cycles were denoted by $W_\beta$ in \cite{IN14, IN15}.
}:
\begin{equation}\label{eq:def-Voros-relative}
V_\beta(\hbar)=V_\beta({\bm m},{\bm \nu},\hbar):= \int_\beta dS_{{\rm TR},\geq 1}^{\rm odd}({z},\hbar)
\end{equation}
\item 
The exponentiated formal series $e^{V_{\gamma}}$ and $e^{V_{\beta}}$ are called the {\em Voros symbol} of ${\bm E}$ for $\gamma \in \Gamma$ and $\beta \in \Gamma^{*}$, respectively. We refer to these as \emph{cycle Voros symbols} and \emph{path Voros symbols}, respectively.
\end{itemize}
\end{dfn}

For our purpose, we will be mainly interested in the Voros coefficients for cycles and paths in our previously chosen sublattices; that is, for $\gamma \in \Gamma$ and $\beta \in \Gamma^{*}$. We will recall the explicit form of the Voros coefficients in these cases.

\begin{rem}
In \cite{IKoT-I, IKoT-II}, the path Voros coefficients were defined by using $dS_{\rm TR}$ instead of $dS_{\rm TR}^{\rm odd}$. Since the even part $dS_{\rm TR}^{\rm even} := dS_{\rm TR} - dS_{\rm TR}^{\rm odd}$ is invariant under the covering involution $\iota_\ast$, its integral along $\beta \in \Gamma^{*}$ vanishes. Hence, our path Voros coefficients \eqref{eq:def-Voros-relative} for $\beta \in \Gamma^{*}$ agree with those computed in \cite{IKoT-I, IKoT-II}.
\end{rem}

\subsubsection{Cycle Voros coefficients for $\gamma \in \Gamma$}
In what follows, we will use the same notation for cycles and relative cycles on the spectral curve defined in \S \ref{section:cycle-path}. Recall that the homology group $H_1(\widetilde{\Sigma}, {\mathbb Z})$ is generated by the residue cycles around the punctures $D_\infty$. In particular, any element $\gamma \in \Gamma$ can be decomposed as a sum of residue cycles around $s_{+}$ or $s_{-}$ for $s \in P_{\rm ev}$. 
Through the residue computation, we have the following. 

\begin{prop} \label{prop:closed-Voros-coeffs}
For any quantum curve of hypergeometric type 
and any $\gamma \in \Gamma$, we have
\begin{equation} \label{eq:closed-Voros-coeff}
V_{\gamma}( \hbar) = \frac{Z(\gamma)}{\hbar} - \pi i \nu(\gamma)
\end{equation}

\noindent with $\nu : H_1(\widetilde{\Sigma}, {\mathbb Z}) \to {\mathbb C}$ the linear map given on generators by $\nu(\gamma_{s_{\pm}}) = \pm {\nu_{s}}$, 
where we set $\nu_s := \nu_{p_{s_+}} - \nu_{p_{s_-}}$.
\end{prop}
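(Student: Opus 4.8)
The plan is to evaluate the integral $\oint_\gamma dS_{\rm TR}^{\rm odd}(x,\hbar)$ term-by-term in $\hbar$ by computing residues. Since any $\gamma \in \Gamma$ is an integer combination of the residue cycles $\gamma_{s_\pm}$ around punctures $s_\pm \in D_\infty$ for $s \in P_{\rm ev}$, and both sides of \eqref{eq:closed-Voros-coeff} are linear in $\gamma$ (the right-hand side because $Z$ and $\nu$ are linear maps), it suffices to prove the identity for $\gamma = \gamma_{s_+}$ for each $s \in P_{\rm ev}$. For such a residue cycle, $\oint_{\gamma_{s_+}}$ picks out $2\pi i$ times the residue of $dS_{\rm TR}^{\rm odd}$ at the point $p_{s_+} \in \mathcal{C}$ (using the rational parametrization $\mathcal C \simeq \overline\Sigma$ from \S\ref{section:applying-TR-to-HG}). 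So the whole computation reduces to: identify $\operatorname{Res}_{p_{s_+}} dS_{{\rm TR},k}^{\rm odd}$ for each order $k$.

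The key structural input is Remark \ref{rem:WKB-sol} (via Theorem \ref{thm:quantum-Gauss-curve} and its analogues for the other curves): $dS_{\rm TR}^{\rm odd}$ coincides with the odd part of the WKB solution of the quantum curve ${\bm E}({\bm \nu})$. First I would recall the standard recursive construction of the WKB solution of a Schrödinger-type ODE $\bigl(\hbar^2 \partial_x^2 + \hbar q\, \partial_x + r\bigr)\psi = 0$: writing $\psi = \exp\int S\,dx$, the function $S = S_{-1}/\hbar + S_0 + \hbar S_1 + \cdots$ satisfies a Riccati equation order-by-order, and the odd part $S^{\rm odd}$ satisfies $S^{\rm odd} = \tfrac12\bigl(-q - \partial_x \log S^{\rm odd}_{\rm leading} - \cdots\bigr)$ — more precisely, the standard identity $S^{\rm odd} = -\tfrac{q}{2} - \tfrac12 \partial_x \log\bigl(\text{something}\bigr)$ allows one to extract the residues cheaply. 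The crucial point is that near a regular singular point $x = s$ of ${\bm E}$, the ODE has characteristic exponents determined by the indicial equation, and the coefficients of $q$ and $r$ at $x = s$ are given explicitly in Theorem \ref{thm:quantum-Gauss-curve} in terms of $m_s$ and the $\nu$'s; in particular $\operatorname{Res}_{x=s}\bigl(x^2(\cdots)\bigr)$-type quantities recover $\pm m_s$ at order $\hbar^{-1}$ and combinations of $\nu_{s_\pm}$ at order $\hbar^0$. The leading term gives $\operatorname{Res}_{p_{s_+}} dS^{\rm odd}_{{\rm TR},-1} = Z(\gamma_{s_+})/(2\pi i) = m_s$ (from $Z(\gamma_{s_\pm}) = \pm 2\pi i\, m_s$), and the order-$\hbar^0$ term gives $\operatorname{Res}_{p_{s_+}} dS^{\rm odd}_{{\rm TR},0} = -\tfrac12 \nu_{s_+} \cdot (\text{residue of a logarithmic derivative})$, which after the residue computation yields $-\tfrac12 \nu_s$. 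All higher coefficients $dS^{\rm odd}_{{\rm TR},k}$ for $k \ge 1$ have poles only at ramification points $\pi^{-1}(T)$ (since the correlators $W_{g,n}$ with $2g-2+n\ge 1$ do), hence are holomorphic at $p_{s_+}$ and contribute zero residue. Assembling: $V_{\gamma_{s_+}}(\hbar) = 2\pi i\bigl(\tfrac{1}{\hbar}\operatorname{Res}_{p_{s_+}} dS^{\rm odd}_{{\rm TR},-1} + \operatorname{Res}_{p_{s_+}} dS^{\rm odd}_{{\rm TR},0}\bigr) = \tfrac{2\pi i\, m_s}{\hbar} - \pi i\, \nu_s = \tfrac{Z(\gamma_{s_+})}{\hbar} - \pi i\,\nu(\gamma_{s_+})$, as claimed.

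The main obstacle I expect is bookkeeping the sign and normalization conventions: getting the labeling of $p_{s_\pm}$ consistent with the residue convention \eqref{eq:sign-convention-preimages}, confirming that the orientation of the residue cycle $\gamma_{s_+}$ matches the ``positively oriented small circle,'' and checking that the odd projection $\tfrac12(dS_{\rm TR} - \iota^* dS_{\rm TR})$ produces exactly the half-integer $-\pi i\,\nu_s$ rather than $-2\pi i\,\nu_{s_+}$ or similar. This is best handled by carrying out the computation explicitly in one example — say $\Sigma_{\rm HG}$, using the coefficients $q_1, r_1, r_2$ from Theorem \ref{thm:quantum-Gauss-curve} — and observing that the general case follows the identical pattern since all quantum curves of hypergeometric type are confluent degenerations with the same local structure at each (possibly irregular) singular point. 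For irregular singular points one argues similarly: the relevant residues still localize the $\hbar^{-1}$ and $\hbar^0$ data to $m_s$ and $\nu_s$ respectively, and no essential new phenomenon arises because we only integrate over a small loop. I would also remark that this proposition is already essentially contained in the computations of \cite{IKoT-I, IKoT-II}; the content here is just repackaging it in the BPS-structure notation ($Z$, $\nu$ as linear maps on $H_1$).
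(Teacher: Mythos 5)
Your proposal is correct and its skeleton is the paper's: reduce by linearity to the residue cycles $\gamma_{s_\pm}$, observe that only the orders $\hbar^{-1}$ and $\hbar^{0}$ contribute a residue at $p_{s_\pm}$ (since $W_{g,n}$ with $2g-2+n\ge 1$ have poles only at ramification points), compute those two residues, and antisymmetrize. The one place you diverge is the order-$\hbar^0$ residue: you propose to read it off from the indicial equation of the quantum curve ${\bm E}({\bm \nu})$ at the regular singular point $x=s$, whereas the paper stays entirely on the TR side and computes $\Res_{z=p_{s_\pm}}\int_{\zeta\in D(z;{\bm\nu})}\bigl(W_{0,2}-\tfrac{dx\,dx}{(x-x)^2}\bigr)$ directly from the explicit primitive $\sum_p \nu_p\bigl(-\tfrac12 x''/x' - \tfrac{1}{z-p} + \tfrac{x'}{x(z)-x(p)}\bigr)$. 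Both work; your route buys a conceptual link to the local exponents of the ODE but presupposes Theorem \ref{thm:quantum-Gauss-curve}, while the paper's is self-contained within TR. The bookkeeping point you flagged is real and worth making precise: the residue of $dS_{{\rm TR},0}$ at $p_{s_+}$ is $\nu_{p_{s_-}}$ (the parameter of the \emph{conjugate} preimage, coming from the summand $p=p_{s_-}$ whose term $x'(z)/(x(z)-x(p))$ acquires a pole at the other preimage of $s$, while the two poles in the $p=p_{s_+}$ summand cancel), and only after the odd projection $\tfrac12(\nu_{p_{s_-}}-\nu_{p_{s_+}})$ does $-\tfrac12\nu_s$ appear --- your intermediate expression ``$-\tfrac12\nu_{s_+}\cdot(\text{residue of a log derivative})$'' is not quite what happens, though your final answer is right.
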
 

\begin{proof}
It is enough to show the formula for $\gamma_{s_\pm}$ with any $s \in P_{\rm ev}$.
Since $W_{g,n}$ with $2g-2+n \ge 1$ have poles only at ramification points, we can verify that the residue of $dS_{\rm TR}$ at $s_{\pm}$ only comes from those of $dS_{{\rm TR},-1}$ and $dS_{{\rm TR},0}$. More explicitly, we have 
\begin{equation}
\Res_{{z=p_{s_\pm}}} dS_{\rm TR}({z}, \hbar) = 
\frac{1}{\hbar} \Res_{z=p_{s_\pm}} W_{0,1}(z) 
+ \Res_{z=p_{s_\pm}}  \int_{\zeta \in D(z;{\bm \nu})} 
\left(W_{0,2}(z, \zeta) - \frac{dx(z) dx(\zeta)}{(x(z) - x(\zeta))^2} \right).
\end{equation}
The residue of the first term is given by
\begin{equation}
\Res_{z=p_{s_\pm}} W_{0,1}(z) = {\Res_{x=s_{\pm}} \sqrt{Q(x)}  dx} = \pm m_s
\end{equation}
due to the residue formula \eqref{eq:sign-convention-preimages}. 
The integrand in the second term is expressed as 
\begin{align}
& \int_{\zeta \in D(z;{\bm \nu})} 
\left(W_{0,2}(z, \zeta) - \frac{dx(z) dx(\zeta)}{(x(z) - x(\zeta))^2} \right) \notag \\
& \quad =  
\sum_{p \in {\mathcal P}'} \nu_p \int^{\zeta=z}_{\zeta=p} 
\left(W_{0,2}(z, \zeta) - \frac{dx(z) dx(\zeta)}{(x(z) - x(\zeta))^2} \right) \notag \\ 
& \quad = 
\sum_{p \in {\mathcal P}'} \nu_p 
\left( - \frac{1}{2} \frac{x''(z)}{x'(z)} - \frac{1}{z - p} + \frac{x'(z)}{x(z) - x(p)}  \right).
\label{eq:dS0-summand}
\end{align}
The summands for $p \ne p_{s_{\mp}}$ in \eqref{eq:dS0-summand} are holomorphic at $z = p_{s_\pm}$, 
but the summand for $p = p_{s_\mp}$ has a simple pole at $z = p_{s_\pm}$. 
Consequently, we have 
\begin{equation}
\Res_{z=p_{s_\pm}}  \int_{\zeta \in D(z;{\bm \nu})} 
\left(W_{0,2}(z, \zeta) - \frac{dx(z) dx(\zeta)}{(x(z) - x(\zeta))^2} \right) 
= \nu_{p_{s_\mp}}.
\end{equation}
{Since {$\Res_{z=p_{s_\pm}} dS_{\rm TR}^{\rm odd}(z,  \hbar) = 
\frac{1}{2}
(\Res_{z=p_{s_\pm}} dS_{\rm TR}(z,  \hbar) - 
\Res_{z=p_{s_\mp}} dS_{\rm TR}(z,  \hbar))$
}holds}, 
we have the desired formula for $\gamma_{s_\pm}$:
\begin{equation} \label{eq:closed-voros-at-s}
V_{\gamma_{s_\pm}}( \hbar) = \pm 2 \pi i
\, \left(\frac{m_s}{\hbar} - \frac{\nu_s}{2} \right).
\end{equation}
The general formula follows immediately from \eqref{eq:closed-voros-at-s}.
\end{proof}
\noindent In particular, we note $V_\gamma$ is just a finite series in $\hbar$ with two terms.

\subsubsection{Path Voros coefficients for $\beta \in \Gamma^{*}$}

As we mentioned in \S \ref{sec:doubling}, $\Gamma^{*}$ is generated by $\beta_s$, $ s \in P_{\rm ev}$, which is a class represented by a path from $s_{-}$ to $s_{+}$. The path Voros coefficients of the hypergeometric differential equations for those paths were studied in \cite{SS, Takei08, KoT11, Aoki-Tanda, KKT14, ATT, AIT, IKoT-I, IKoT-II} etc. Let us recall and rewrite the formula for their explicit expression, following \cite{IKoT-II, IK20}. 

\begin{exa}[Weber and Bessel]
Let us consider the two fundamental examples: 
\begin{align}
\Sigma^{\rm Web} &~:~ y^2 - \left( \frac{x^2}{4} - m_\infty \right) = 0, \quad m_\infty \ne 0. \\
\Sigma^{\rm Bes} &~:~ y^2 - \dfrac{x + 4m_0^2}{4x^2} = 0, \quad m_0 \ne 0.  
\end{align}
The corresponding quadratic differentials $Q_\bullet(x)dx^2$ have a unique even order pole $s$; thus 
${\bm m} = (m_s)$, and we have the Voros coefficient 
\begin{equation}
V_{\beta_s}(\hbar) = 
\sum_{k = 1}^{\infty} \hbar^{k} V_{\beta_s, k} 
\end{equation}
for the canonical relative homology class 
$\beta_s \in \Gamma^{\ast}_{\bullet}$ associated with $s$.
According to \cite[Theorem 3.1 (iv)]{IKoT-II}, the coefficients $V_{\beta_s, k} = V_{\beta_s, k}(m_s, \nu_s)$ are explicitly given as follows: 
\begin{equation} \label{eq:Voros-key-examples}
V^{\bullet}_{\beta_s, k} = 
\begin{cases}
\dfrac{B_{k+1}\bigl( \frac{1+\nu_\infty}{2} \bigr)}{k(k+1) }\cdot  \dfrac{1}{m_\infty^k} & 
\text{for $\bullet = {\rm Web}$ and $s = \infty$} \\[+1.em]
\dfrac{B_{k+1}(\nu_0) + B_{k+1}(1+\nu_0)}
{k(k+1)} \cdot \dfrac{-1}{(2m_0)^{k}} & 
\text{for $\bullet = {\rm Bes}$ and $s = 0$}.
\end{cases}
\end{equation}
Here $B_k(t)$ is the $k$-th Bernoulli polynomial defined through its generating series
\begin{equation}
\label{def:BernoulliPoly}
\frac{w e^{t w}}{e^w - 1} = \sum_{k = 0}^{\infty} B_k(t) \frac{w^k}{k!}.
\end{equation}
These explicit formulas are derived by finding and solving the difference equations (with respect to the $\hbar$-shift of parameters ${\bm m}$) satisfied by the Voros coefficients (see \cite{IKoT-II} and the above references for details).
\end{exa}

These two cases are of fundamental importance, since the Voros coefficients for other examples are obtained as superpositions. In fact, following our interpretation of the topological recursion free energies via BPS structures \cite{IK20} we may write the Voros coefficients computed in \cite{IKoT-II} uniformly as follows:

\begin{prop}\label{prop:voroscoeffs} 
For any quantum curve of hypergeometric type, the coefficient of $\hbar^{k}$ in the Voros coefficient for any relative homology class $\beta \in \Gamma^*$ is expressed as:

\begin{align}
\label{eq:voros-coeffs-general}
V_{{\beta},k} & 
=  \sum_{\substack{\gamma \in \Gamma \\ Z(\gamma) \in {\mathbb H}}}
  \Omega(\gamma) \, \langle {\beta}, \gamma \rangle \, \dfrac{{\mathcal B}_{k+1}(\gamma)}{k(k+1)} \,
\left( \frac{2 \pi i}{Z(\gamma)} \right)^{k},
\end{align}
where 
\begin{equation}
      {\mathcal B}_k(\gamma) = \begin{cases}  
    {B_{k}\Bigl( \frac{1+\nu(\gamma)}{2} \Bigr)},\quad& \Omega(\gamma)\neq -1  \\[+.8em]
    \dfrac{1}{2} 
    \left( {B_{k}\Bigl( \frac{\nu(\gamma)}{2} \Bigr) + B_{k}\Bigl( 1+\frac{\nu(\gamma)}{2} \Bigr)} \right)
    ,\quad& \Omega(\gamma)=-1 \end{cases}
\end{equation}
where $\mathbb{H}$ is any half-plane whose boundary rays are not BPS.
\end{prop}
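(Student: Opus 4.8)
The plan is to reduce \eqref{eq:voros-coeffs-general} to the explicit formulas for the path Voros coefficients already computed in \cite{IKoT-II}, and then to recognize its right-hand side as nothing but a rewriting of those formulas in the language of Table \ref{table:BPS-str-HG}. Since both sides of \eqref{eq:voros-coeffs-general} are $\mathbb{Z}$-linear in $\beta$, and $\{\beta_s\}_{s \in P_{\rm ev}}$ generates $\Gamma^{*}$ (see \S\ref{sec:doubling}), it suffices to prove the identity for $\beta = \beta_s$ with $s \in P_{\rm ev}$.

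First I would treat the two fundamental cases directly. For the Weber curve the only pair of BPS cycles is $\pm \gamma_{\infty_+}$, with $\Omega(\gamma_{\infty_+}) = 1$ and $Z(\gamma_{\infty_+}) = 2\pi i\, m_\infty$ (Table \ref{table:BPS-str-HG}); by \eqref{eq:intersetion-gamma-beta}--\eqref{eq:sign-of-pairings} we have $\langle \beta_\infty, \gamma_{\infty_+} \rangle = (\gamma_{\infty_+}, \beta_\infty) = 1$, and $\nu(\gamma_{\infty_+}) = \nu_\infty$ for the linear map $\nu$ of Proposition \ref{prop:closed-Voros-coeffs}. Substituting these into \eqref{eq:voros-coeffs-general}, with $\mathbb{H}$ chosen so that $Z(\gamma_{\infty_+}) \in \mathbb{H}$, returns exactly the first line of \eqref{eq:Voros-key-examples}. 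For the Bessel curve the only pair of BPS cycles is $\pm(\gamma_{0_+} - \gamma_{0_-})$, with $\Omega = -1$ and $Z = 4\pi i\, m_0$; here $\langle \beta_0, \gamma_{0_+} - \gamma_{0_-} \rangle = 2$ and $\nu(\gamma_{0_+} - \gamma_{0_-}) = 2\nu_0$, so $\nu(\gamma)/2 = \nu_0$, and the $\Omega = -1$ branch of $\mathcal{B}_{k+1}$ contributes the factor $\tfrac12 (B_{k+1}(\nu_0) + B_{k+1}(1+\nu_0))$, whose $\tfrac12$ cancels against $\langle \beta_0, \gamma \rangle = 2$ to reproduce the second line of \eqref{eq:Voros-key-examples}. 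These two computations simultaneously explain why the Bernoulli argument is $\tfrac{1+\nu(\gamma)}{2}$ in general, and why the degenerate-ring-domain cycles, inheriting the structure of the Bessel example, carry the symmetrized combination instead.

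For the remaining seven examples I would argue case by case: read off $(\gamma_{\rm BPS}, Z(\gamma_{\rm BPS}), \Omega(\gamma_{\rm BPS}))$ from Table \ref{table:BPS-str-HG} and the explicit $V_{\beta_s, k}$ from \cite[Theorem 3.1]{IKoT-II}, and check the equality term by term. The intersection numbers are computed from \eqref{eq:intersetion-gamma-beta}: for a type I, II or III BPS cycle --- a sum of residue cycles $\gamma_{t_\pm}$ --- only the summand attached to $s$ contributes, so $\langle \beta_s, \gamma \rangle \in \{0, \pm 1\}$, while for a degenerate-ring-domain cycle $\gamma_{s'_+} - \gamma_{s'_-}$ one gets $\langle \beta_s, \gamma \rangle = 2\delta_{s,s'}$. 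The Bernoulli argument is $\nu(\gamma)$ via $\nu(\gamma_{t_\pm}) = \pm \nu_t$; e.g. a type I Gauss cycle $\gamma_{0_+} + \gamma_{1_\epsilon} + \gamma_{\infty_{\epsilon'}}$ gives $\tfrac{1 + \nu_0 + \epsilon \nu_1 + \epsilon' \nu_\infty}{2}$, matching the quoted formula. It is convenient to record once that \eqref{eq:voros-coeffs-general} is independent of the choice of $\mathbb{H}$: replacing $\gamma$ by $-\gamma$ flips $\langle \beta, \gamma \rangle$ and multiplies $(2\pi i / Z(\gamma))^k$ by $(-1)^k$, while the reflection identity $B_n(1-w) = (-1)^n B_n(w)$ multiplies $\mathcal{B}_{k+1}(\gamma)$ by $(-1)^{k+1}$, so the net contribution of each BPS pair $\pm\gamma$ is unchanged.

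The main obstacle is organizational rather than conceptual: the statement is a finite collection of elementary identities between explicit Bernoulli-polynomial expressions, and the real work lies in faithfully tracking the signs built into the conventions $\gamma_{s_\pm}$, $Z(\gamma_{s_\pm}) = \pm 2\pi i\, m_s$, $\nu(\gamma_{s_\pm}) = \pm \nu_s$, and \eqref{eq:intersetion-gamma-beta}--\eqref{eq:sign-of-pairings}, so that each BPS cycle of Table \ref{table:BPS-str-HG} is matched with the correct term of \cite[Theorem 3.1]{IKoT-II}. A more uniform alternative, which I would only sketch, is to note that the formulas of \cite{IKoT-II} present $V_{\beta_s, k}$ for every hypergeometric-type curve as a finite superposition of ``Weber-type'' and ``Bessel-type'' building blocks --- one per saddle configuration --- and that under the confluences relating the curves of Table \ref{table:classical} each such block is carried to exactly one BPS cycle; this would reduce the general case to the two fundamental ones above, at the cost of making the confluence bookkeeping of \cite{IKoT-II} explicit.
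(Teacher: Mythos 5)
Your proposal is correct and takes essentially the same route as the paper: the paper's proof is precisely a case-by-case comparison of the explicit formulas of \cite[Theorem 3.1 (iv)]{IKoT-II} (reproduced in Appendix \ref{sec:explicit-formulas}) against Table \ref{table:BPS-str-HG}, with the sign convention \eqref{eq:sign-of-pairings} and the reflection identity $B_{k+1}(x)=(-1)^{k+1}B_{k+1}(1-x)$ handling the choice of ${\mathbb H}$, exactly as you do. Your write-up is in fact more detailed than the paper's one-line argument, and your verification of the Weber and Bessel building blocks together with the ${\mathbb H}$-independence check is sound.
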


\begin{proof}The symmetry $\Omega(\gamma)=\Omega(-\gamma)$ together with the identity $B_{k+1}(x) = (-1)^{k+1} B_{k+1}(1-x)$ for Bernoulli polynomials shows the expression \eqref{eq:voros-coeffs-general} is independent of the choice of half-plane. Taking the sign convention \eqref{eq:sign-of-pairings} into account, the claim is proved immediately by comparing with the expression of Voros coefficients obtained in \cite[Theorem 3.1 (iv)]{IKoT-II} (written explicitly in Appendix \ref{sec:explicit-formulas}), together with Table \ref{table:BPS-str-HG}.
\end{proof}

The expression \eqref{eq:voros-coeffs-general} is crucially important for our purpose. 
\subsection{Borel sum of Voros coefficients}
\label{section:Borel-sum-Voros}

Since the values of Bernoulli polynomials grow factorially as $k$ tends to infinity, the Voros coefficients for relative homology classes are divergent series of $\hbar$. 
However, we can see that the Voros coefficients in our examples are {\em Borel summable} as a formal series of $\hbar$ along all but finitely many rays.  
Here we briefly recall the notion of the Borel sum for our specific examples --- for a general theory of the Borel summation, we refer \cite{Costin, Sauzin}. 
We note that our computations presented here were essentially done in previous works \cite{Takei08, KoT11, Aoki-Tanda, ATT, AIT} etc.\ which discussed the Voros coefficients for hypergeometric-type differential equations in slightly different presentations (with or without quantization parameters ${\bm \nu}$).

First, for the Voros coefficient $V_{\beta_s}(\hbar) = \sum_{k \ge 1} \hbar^k V_{\beta_s, k}$, let us define its {\em Borel transform} $\widehat{V}_{\beta_s}(\zeta)$ as follows:
\begin{equation}
\widehat{V}_{\beta_s}(  \zeta) := \sum_{k = 1}^{\infty} \frac{V_{\beta_s, k} }{(k-1)!} \zeta^{k-1}
\end{equation}
Namely, $\widehat{V}_{\beta_s}(  \zeta)$ is the term-wise inverse Laplace transform, where $\zeta$ is the variable which is Laplace dual to $1/\hbar$. 
Although the Voros coefficient is divergent series of $\hbar$, $\widehat{V}_{\beta_s}$ has a finite radius of convergence. For a given ray of the form $\ell =  e^{i \vartheta} {\mathbb R}_{> 0}$, the Voros coefficient $V_{\beta_s}$ is said to be {\em Borel summable along $\ell$} if the Laplace integral 
\begin{equation} \label{eq:Borel-sum}
{\mathcal S}_\ell V_{\beta_s}( \hbar) := 
\int^{\infty e^{i \vartheta}}_{0} 
 \widehat{V}_{\beta_s}(  \zeta)e^{- \zeta/ \hbar}  \,d\zeta
\end{equation}
along $\ell$ converges for $\hbar \in {\mathbb H}_{\ell} \cap \{\hbar \in {\mathbb C}^\ast ~|~ |\hbar| \ll 1  \}$\footnote{
The rays for the Borel summation is regarded as a half-line on the $\zeta$-plane (Borel plane), and hence, we should distinguish them from rays in $\hbar$-plane. However, to save notations, we identify them and use the same symbol $\ell$ for both of the rays. 
}. 
This requires that $\widehat{V}_{\beta_s}(\zeta)$ has an analytic continuation to a sectorial neighborhood of the integration contour $e^{i \vartheta}{\mathbb R}_{\ge 0}$, and growth at most exponentially when $|\zeta| \to \infty$. The resulting integral \eqref{eq:Borel-sum} is called the {\em Borel sum} of the Voros coefficient, if it is well-defined. 
Watson's lemma (e.g., \cite[Chapter4]{BH}) implies that the asymptotic expansion of the Borel sum recovers the original divergent series when $\hbar$ approaches $0$:
\begin{equation}
{\mathcal S}_\ell V_{\beta_s}(  \hbar) 
\sim V_{\beta_s}(  \hbar), \qquad \hbar \to 0, \, \hbar \in {\mathbb H}_{\ell}.
\end{equation}

For the Borel sum of the Voros coefficients of our examples, the following is known: 
\begin{prop} \label{prop:Borel-sum-Voros-relative}
For any example of hypergeometric type and any $s \in P_{{\rm ev}}$, the Voros coefficient $V_{\beta_s}( \hbar)$ 
with the expression \eqref{eq:voros-coeffs-general} and the corresponding Voros symbol are Borel summable along any ray which is not BPS in the corresponding BPS structure in Table \ref{table:BPS-str-HG}. 
Moreover, the Borel sum ${\mathcal S}_\ell e^{V_{\beta_s}(\hbar)}$ of the Voros symbol can be extended as a meromorphic function of $\hbar$ on the whole half-plane ${\mathbb H}_{\ell}$.

\end{prop}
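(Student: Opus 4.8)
The plan is to reduce the general statement to the two fundamental cases (Weber and Bessel) via Proposition \ref{prop:voroscoeffs}, since the expression \eqref{eq:voros-coeffs-general} exhibits $V_{\beta_s}$ as a finite $\mathbb{Z}$-linear combination of the ``building-block'' Voros coefficients $V^{\rm Web}_{\beta,k}$ and $V^{\rm Bes}_{\beta,k}$ (with rescaled $\hbar$), each term attached to a BPS cycle $\gamma$ with central charge $Z(\gamma)$. Borel summability and the analytic continuation of the Laplace integral are compatible with finite sums and with the rescaling $\hbar \mapsto \frac{Z(\gamma)}{2\pi i}\hbar$, so it suffices to: (i) compute the Borel transform $\widehat{V}_{\beta_s}(\zeta)$ in the Weber and Bessel cases in closed form; (ii) identify its singularities in the Borel plane; (iii) check the singularities lie precisely on the rays $\mathbb{R}_{>0}\cdot Z(\gamma_{\rm BPS})$ (equivalently $2\pi i m_s \cdot \mathbb{R}$ after the rescaling), so that Borel summation succeeds along every non-BPS ray; and (iv) observe that in these two cases the Borel-summed Voros coefficient — hence its exponential — extends meromorphically across the whole half-plane.

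First I would recall the classical closed-form expressions. Using the generating function \eqref{def:BernoulliPoly} for Bernoulli polynomials, the Borel transform of $V_{\beta_s}(\hbar) = \sum_{k\ge 1}\hbar^k V_{\beta_s,k}$ with the coefficients in \eqref{eq:Voros-key-examples} can be summed: for the Weber curve one gets, up to elementary factors,
\begin{equation}
\widehat{V}^{\rm Web}_{\beta_\infty}(\zeta) \;=\; \frac{d}{d\zeta}\!\left[\frac{1}{e^{\zeta/m_\infty}-1} - \frac{m_\infty}{\zeta} + \frac{\nu_\infty}{2}\left(\cdots\right)\right],
\end{equation}
i.e. a function built from $1/(e^{\zeta/m_\infty}-1)$, which has simple poles exactly at $\zeta \in 2\pi i\, m_\infty\,\mathbb{Z}\setminus\{0\}$ (the apparent pole at $\zeta=0$ is cancelled by the subtracted terms, reflecting the absence of $k\le 0$ terms); and for the Bessel curve one obtains an analogous expression built from $1/(e^{\zeta/(2m_0)}-1)$ plus a reflected copy, with singularities on $\zeta \in 2\pi i\, m_0\,\mathbb{Z}\setminus\{0\}$. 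In both cases these are exactly the points $\zeta = \tfrac{1}{n}Z(\gamma_{\rm BPS})$-type locations dictated by the central charges in Table \ref{table:BPS-str-HG}, so the Borel transform is holomorphic along, and of at most exponential growth on, any ray $\ell$ not passing through these points — which is precisely any non-BPS ray. This gives Borel summability along all non-BPS rays. These computations are exactly those carried out in \cite{Takei08, KoT11, Aoki-Tanda, ATT, AIT}, which I would cite rather than redo in full.

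Next I would handle the Voros \emph{symbol} $e^{V_{\beta_s}(\hbar)}$ and its meromorphic extension. Here the standard trick (again going back to the works cited above, and to the general framework of \cite{DDP93, IN14}) is that the Borel sum $\mathcal{S}_\ell V_{\beta_s}(\hbar)$ can be evaluated explicitly: applying Laplace transform to the closed-form $\widehat{V}$ and using the integral representation of $\log\Gamma$ (Binet/Malmstén-type formula), one finds $\mathcal{S}_\ell V_{\beta_s}(\hbar)$ is, up to the polar part, a logarithm of a ratio of Gamma functions with arguments linear in $1/\hbar$ — concretely of the shape $\log\Lambda\bigl(\tfrac{Z(\gamma)}{2\pi i\hbar},\ast\bigr)$ as in \eqref{eq:Lambda-function-original}. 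Exponentiating, $\mathcal{S}_\ell e^{V_{\beta_s}(\hbar)}$ becomes (a product of) ratios of Gamma functions, which are manifestly meromorphic in $\hbar$ on all of $\mathbb{H}_\ell$ (the poles/zeros coming from poles of $\Gamma$), even though the Laplace integral \eqref{eq:Borel-sum} a priori only converges near $\hbar=0$. This is the mechanism by which a \emph{meromorphic} (not merely holomorphic) solution to the RHP emerges.

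The main obstacle is step (iii)–(iv) done uniformly: one must be careful that when assembling $V_{\beta}$ for a general $\beta\in\Gamma^*$ out of several BPS cycles (relevant for HG, dHG, Kummer, Legendre, Whittaker), the Borel singularities of the different summands sit on different rays $\mathbb{R}_{>0} Z(\gamma_{\rm BPS})$, so that a ray $\ell$ is non-singular for the whole sum iff it avoids \emph{all} BPS rays — which matches the definition of ``non-BPS ray'' for the BPS structure in Table \ref{table:BPS-str-HG} — and that no spurious coalescence of singularities occurs (this is exactly where the genericity Assumption \ref{ass:genericity} and the structure of $M'$ enter, though for a \emph{fixed} generic $\bm m$ the distinct BPS rays are automatically distinct). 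The meromorphic continuation of the symbol then follows summand-by-summand since each factor $\mathcal{S}_\ell e^{V^{\rm block}}$ is a ratio of Gamma functions; the product of finitely many meromorphic functions is meromorphic. I would organize the write-up as: (a) closed form of $\widehat{V}$ in the Weber/Bessel cases and location of singularities; (b) Borel summability along non-BPS rays by exponential-growth estimates on the continued $\widehat{V}$; (c) explicit evaluation of $\mathcal{S}_\ell V_{\beta_s}$ in terms of $\log\Lambda$ and hence meromorphy of $\mathcal{S}_\ell e^{V_{\beta_s}}$ on $\mathbb{H}_\ell$; (d) reduction of the general $\beta\in\Gamma^*$ case to these via \eqref{eq:voros-coeffs-general}, with attention to distinctness of BPS rays.
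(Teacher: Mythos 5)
Your proposal is correct and follows essentially the same route as the paper: reduce to the Weber and Bessel building blocks via the superposition structure of \eqref{eq:voros-coeffs-general}, compute the Borel transform in closed form from the Bernoulli generating function to locate its simple poles on $2\pi i m_s\,\mathbb{Z}_{\neq 0}$ (the BPS rays), evaluate the Borel sum via Binet's formula as $\log\Lambda$, and exponentiate to obtain a manifestly meromorphic ratio of Gamma functions. The only cosmetic difference is that the paper writes the Weber Borel transform as $\frac{1}{\zeta}\bigl(\frac{e^{tw}}{e^w-1}-\frac{1}{w}-t+\frac{1}{2}\bigr)\big|_{w=\zeta/m_\infty,\,t=(1+\nu_\infty)/2}$ rather than as a $\zeta$-derivative, but the content is the same.
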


\begin{proof}
The statement was essentially shown (without the language of BPS structures) in the previous works \cite{Takei08, KoT11, Aoki-Tanda, KKT14, ATT, AIT}, but we include a proof here for the convenience of reader. 

Note that, since the Voros coefficients in our examples are obtained by superposing $V^{\rm Web}_{\beta_\infty}$ and $V^{\rm Bes}_{\beta_0}$, it suffices to prove the statement for the two examples. 

For the Weber case, using the expression \eqref{eq:Voros-key-examples} of the coefficients, 
we have
\begin{equation} \label{eq:Weber-Voros-Borel}
\widehat{V}^{{\rm Web}_{\beta_\infty}}(\zeta)
= \frac{1}{\zeta} \, 
\left.\left(\dfrac{e^{t w}}{e^w-1}- \frac{1}{w} - t + \frac{1}{2} \right)\right
\vert_{w=\frac{\zeta}{m_\infty},\, t=\frac{1 + \nu_\infty}{2}}.
\end{equation}
Hence, $V^{\rm Web}_{\beta_\infty}$ is not Borel summable along the rays $\pm 2 \pi i m_\infty {\mathbb R}_{\ge 0}$, which are exactly the BPS rays for the BPS structure constructed from  $\Sigma_{\rm Web}$, since the Borel transform $\widehat{V}^{\rm Web}_{\beta_\infty}$ has a sequence of simple poles on the set $2 \pi i m_\infty  \mathbb{Z}_{\neq 0}$.  
The Borel sum is well-defined for any other ray $\ell = e^{i \vartheta} {\mathbb R}_{> 0}$, and explicitly computed as follows:
\begin{equation}
{\mathcal S}_\ell V^{\rm Web}_{\beta_\infty}( \hbar) = 
\begin{cases}
\displaystyle 
\log \Lambda\left(\frac{m_\infty}{\hbar}, \frac{1 - \nu_\infty}{2} \right), \quad & 
\text{$\arg (2 \pi i m_\infty) - \pi < \vartheta < \arg (2 \pi i m_\infty)$}, 
\\[+1.em]
\displaystyle 
- \log \Lambda\left(- \frac{m_\infty}{\hbar}, \frac{1 + \nu_\infty}{2} \right), & 
\text{$\arg (2 \pi i m_\infty) < \vartheta < \arg (2 \pi i m_\infty) + \pi$},
\end{cases}
\label{eq:Borel-sum-Weber-Voros}
\end{equation}
where $\Lambda(z,\eta)$ is the explicit function given in \eqref{eq:Lambda-function-original}. The formula \eqref{eq:Borel-sum-Weber-Voros} is derived with the aid of 
the following Binet's integral formula for the logarithm of the $\Gamma$-function
(see \cite[p. 249]{WW} for example):  
\[
\log \frac{\Gamma(z)}{\sqrt{2 \pi}} = \left(z-\frac{1}{2} \right) \log z - z 
+ \int^{\infty}_{0} e^{- z w} \left(\frac{1}{e^{w} - 1} - \frac{1}{w} + \frac{1}{2} \right) \frac{dw}{w}
\qquad\quad~ ({\rm Re}\, z > 0).
\]
Since the Borel summation commutes with taking exponential (c.f., \cite[\S 13]{Sauzin}), we have 
\begin{equation}
{\mathcal S}_{\ell} e^{V^{\rm Web}_{\beta_\infty}( \hbar)} = 
\begin{cases}
\displaystyle 
\Lambda\left(\frac{m_\infty}{\hbar}, \frac{1 - \nu_\infty}{2} \right), & 
\text{$\arg (2 \pi i m_\infty) - \pi < \vartheta < \arg (2 \pi i m_\infty)$}, 
\\[+1.em]
\displaystyle 
\Lambda\left(- \frac{m_\infty}{\hbar}, \frac{1 + \nu_\infty}{2} \right)^{-1}, & 
\text{$\arg (2 \pi i m_\infty) < \vartheta < \arg (2 \pi i m_\infty) + \pi$}.
\end{cases}
\label{eq:Borel-sum-Weber-Voros-symbol}
\end{equation}

Thus we have proved the claim in the Weber case. 

We can perform a similar computation in the Bessel case. We can verify that $V^{\rm Bes}_{\beta_0}$ is Borel summable except for the BPS rays $\pm 2 \pi i m_0 {\mathbb R}_{>0}$, and obtain:
\begin{align}
 {\mathcal S}_\ell e^{V^{\rm Bes}_{\beta_0}( \hbar)} = 
\begin{cases}
\displaystyle 
\Lambda\left(\frac{2m_0}{\hbar}, 1 - \nu_0 \right)^{-1}
\Lambda\left(\frac{2m_0}{\hbar}, -\nu_0 \right)^{-1}, & 
\text{$\arg (2 \pi i m_0) - \pi < \vartheta < \arg (2 \pi i m_0)$}, 
\\[+1.5em]
\displaystyle 
\Lambda\left(- \frac{2m_0}{\hbar}, \nu_0 \right) \,
\Lambda\left(- \frac{2m_0}{\hbar}, 1+\nu_0 \right),
& 
\text{$\arg (2 \pi i m_0) < \vartheta < \arg (2 \pi i m_0) + \pi$}.
\end{cases}
\label{eq:Borel-sum-Bessel-Voros}
\end{align}

The other cases follow from the formula \eqref{eq:voros-coeffs-general}
\end{proof}

On the other hand, since the Voros coefficient $V_\gamma = V_\gamma^\bullet$ for $\gamma \in \Gamma$ is a finite series of $\hbar$ (c.f., Proposition \ref{prop:closed-Voros-coeffs}), its Borel sum along any ray $\ell$ is identical to itself in all examples:
\begin{equation} \label{eq:Borel-sum-of-closed-Voros-coefficient}
{\mathcal S}_\ell V_{\gamma}( \hbar) 
= V_{\gamma}(  \hbar) 
\qquad  (\gamma \in \Gamma).
\end{equation}

\subsection{Stokes phenomenon for Voros symbols}
\label{section:Stokes-Voros}

The equality \eqref{eq:Borel-sum-of-closed-Voros-coefficient} shows that the cycle Voros coefficients do not exhibit any jumping behaviour when we vary the ray $\ell$ for the Borel summation. On the other hand, as we have seen in the proof of Proposition \ref{prop:Borel-sum-Voros-relative}, the Borel sum of the path Voros coefficients have a discontinuity across the non-Borel summable directions when we vary $\ell$. The jump of the Borel sum can be understood as the {\em Stokes phenomenon} for the divergent series $V_{\beta_s}$ around $\hbar = 0$. The discontinuity follows from the presence of singular points of the Borel transform $\widehat{V}_{\beta_s}( \zeta)$ in the $\zeta$-plane. We have the following behaviour for the two fundamental cases of Weber and Bessel:

\begin{thm}[c.f., {\cite{Takei08, AIT}}] \label{thm:Stokes-jump-Voros}
Let $\bullet \in \{{\rm Web}, {\rm Bes} \}$, and take a BPS ray $\ell_{\rm BPS} = \pm 2 \pi i m_s {\mathbb R}_{>0}$ of the corresponding BPS structure. 
We denote by $\ell_1$ and $\ell_2$ the rays obtained by perturbing $\ell_{\rm BPS}$ in counter-clockwise and clockwise direction, respectively (i.e., $\ell_k = e^{i \vartheta_k} {\mathbb R}_{> 0}$ with $\vartheta_{1} = \arg(\pm 2 \pi i m_s) + \delta$ and $\vartheta_{2} = \arg(\pm 2 \pi i m_s) - \delta$ with a sufficiently small $\delta > 0$). 
Then, we have the following relation (as meromorphic functions) between the Borel-resummed Voros symbols on ${\mathbb H}_{\ell_1} \cap {\mathbb H}_{\ell_2}$:
{
\begin{align}
& {\mathcal S}_{\ell_2}e^{{V}^{\bullet}_{\beta_s}(  \hbar)} 
= 
\begin{cases}
\displaystyle
{\mathcal S}_{\ell_1} e^{V^{\rm Web}_{\beta_\infty}( \hbar)} 
\Bigl(1 + 
e^{-V^{\rm Web}_{\gamma_{\infty_\pm}}( \hbar)}
\Bigr)^{- (\gamma_{\infty_\pm}, \beta_\infty)}
& \text{for $\bullet = {\rm Web}$ and $s = \infty$},
\\[+1.em]
\displaystyle
{\mathcal S}_{\ell_1} e^{V^{\rm Bes}_{\beta_0}( \hbar)} 
\Bigl( 1 - 
e^{- V^{\rm Bes}_{\gamma_{0_\pm}- \gamma_{0_\mp}}(  \hbar)}
\Bigr)^{(\gamma_{0_\pm} - \gamma_{0_\mp}, \beta_0)}
& \text{for $\bullet = {\rm Bes}$ and $s = 0$}.
\end{cases} 
\label{eq:Stokes-jump-Voros}
\end{align}
}
\end{thm}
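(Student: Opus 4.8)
The plan is to reduce the general jump formula to the two fundamental cases of Weber and Bessel via the superposition structure already established, and to prove those two cases by direct analytic continuation of the explicit Borel sums computed in Proposition \ref{prop:Borel-sum-Voros-relative}. First I would fix $\bullet \in \{\mathrm{Web}, \mathrm{Bes}\}$ and a BPS ray $\ell_{\rm BPS} = \pm 2\pi i m_s \mathbb{R}_{>0}$, and recall from \eqref{eq:Weber-Voros-Borel} (and its Bessel analogue) that the Borel transform $\widehat{V}^{\bullet}_{\beta_s}(\zeta)$ is a meromorphic function of $\zeta$ with simple poles exactly on $2\pi i m_s \mathbb{Z}_{\neq 0}$ (resp.\ $2 \pi i m_0 \mathbb{Z}_{\neq 0}$, with the appropriate multiplicities coming from the two $\Lambda$-factors). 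The difference ${\mathcal S}_{\ell_2} V^{\bullet}_{\beta_s} - {\mathcal S}_{\ell_1} V^{\bullet}_{\beta_s}$ is then, by the standard deformation-of-contour argument for Laplace integrals, given by $2\pi i$ times the sum of residues of $\widehat{V}^{\bullet}_{\beta_s}(\zeta) e^{-\zeta/\hbar}$ at the poles lying between the two rays, i.e.\ at $\zeta = 2\pi i m_s n$, $n \in \mathbb{Z}_{>0}$ (for one orientation; $n<0$ for the other). Summing the resulting geometric-type series in $e^{-2\pi i m_s / \hbar} = e^{-Z(\gamma_{s_\pm})/\hbar}\cdot(\text{sign})$ produces a logarithm: explicitly $\pm\log(1 \mp e^{-V^{\bullet}_{\gamma}(\hbar)})$ up to the sign conventions, using the closed-cycle formula $V_{\gamma_{s_\pm}} = Z(\gamma_{s_\pm})/\hbar - \pi i \nu(\gamma_{s_\pm})$ from Proposition \ref{prop:closed-Voros-coeffs}. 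Exponentiating and invoking that Borel summation commutes with $\exp$ (as used in the proof of Proposition \ref{prop:Borel-sum-Voros-relative}) gives the multiplicative jump in \eqref{eq:Stokes-jump-Voros} in these two cases; the sign $(\gamma_{s_\pm} \mp \gamma_{s_\mp}, \beta_s) = \pm 1$ is read off from \eqref{eq:intersetion-gamma-beta}, and the $+$ vs.\ $-$ inside the parenthesis reflects the parity of the residue structure ($\widehat{V}^{\rm Web}$ has a single pole, $\widehat{V}^{\rm Bes}$ has effectively a double one whose residue carries the alternating sign). Alternatively, and perhaps more cleanly, one can bypass residue bookkeeping entirely: the two branches of ${\mathcal S}_\ell e^{V^{\rm Web}_{\beta_\infty}}$ in \eqref{eq:Borel-sum-Weber-Voros-symbol} are $\Lambda(m_\infty/\hbar, \tfrac{1-\nu_\infty}{2})$ and $\Lambda(-m_\infty/\hbar, \tfrac{1+\nu_\infty}{2})^{-1}$; the connection formula for the Gamma function, $\Gamma(w)\Gamma(1-w) = \pi/\sin(\pi w)$, translates directly into a functional equation $\Lambda(w,\eta)\,\Lambda(-w,1-\eta) = (1 - e^{2\pi i w}\, e^{\cdots})^{\pm 1}$ type identity relating the two branches, and matching this to $1 + e^{-V^{\rm Web}_{\gamma_{\infty_\pm}}}$ via $V_{\gamma_{\infty_\pm}} = \pm 2\pi i(m_\infty/\hbar - \nu_\infty/2)$ gives the claim; similarly in the Bessel case using the product of two $\Lambda$'s.

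Once the Weber and Bessel cases are in hand, the general case follows formally. Every hypergeometric-type Voros coefficient $V^{\bullet}_\beta$ is, by \eqref{eq:voros-coeffs-general} together with Table \ref{table:BPS-str-HG}, a finite sum over BPS cycles $\gamma$ of terms each of which is (up to the intersection number $\langle \beta, \gamma\rangle$) either a Weber-type summand (when $\Omega(\gamma) \neq -1$) or a Bessel-type summand (when $\Omega(\gamma) = -1$), evaluated with $m_s$ replaced by $Z(\gamma)/2\pi i$ and $\nu_s$ by $\nu(\gamma)$. Since the Borel transform is linear, $\widehat{V}^{\bullet}_\beta$ is the corresponding sum of (rescaled) Weber/Bessel Borel transforms, whose singularities in the $\zeta$-plane sit precisely on the BPS rays $\ell_\gamma = Z(\gamma)\mathbb{R}_{\neq 0}$, pairwise distinct for generic ${\bm m}$. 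Crossing a single BPS ray $\ell_{\rm BPS}$ therefore only affects the summand(s) attached to the BPS cycle(s) $\gamma$ with $Z(\gamma) \in \ell_{\rm BPS}$, and the jump of that summand is exactly the Weber or Bessel jump computed above, raised to the power $\langle\beta,\gamma\rangle$; all other summands are analytic across $\ell_{\rm BPS}$ and pass through unchanged. Exponentiating the additive jump of $V_\beta$ and again using that Borel summation commutes with $\exp$ yields the multiplicative form, which is precisely \eqref{eq:Stokes-jump-Voros} with the product over active classes understood (the finitely many BPS cycles on $\ell_{\rm BPS}$).

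I expect the main obstacle to be pinning down all the signs and half-integer shifts consistently: the $\pm$ in the exponent $(\gamma_{s_\pm}, \beta_s) = \pm\delta_{s,s'}$, the convention \eqref{eq:sign-of-pairings} relating $\langle\cdot,\cdot\rangle$ to $(\cdot,\cdot)$, the $-\pi i\nu(\gamma)$ shift in $V_\gamma$, and the branch choice $\mathrm{Im}\log\xi(\gamma)\in[0,2\pi)$, all of which must conspire so that the $(1 + \cdots)$ in the Weber case and $(1 - \cdots)$ in the Bessel case come out with the signs stated — and so that these match, under the dictionary $\Omega = +1$ (Weber) versus $\Omega = -1$ (Bessel via the degenerate-ring-domain cycle $\gamma_{s_+} - \gamma_{s_-}$), the BPS automorphism $(1 - x_\gamma)^{\Omega(\gamma)\langle\gamma,\beta\rangle}$ of \eqref{eq:BPS-auto}. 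A secondary technical point is justifying the contour-deformation/residue-summation rigorously: one must check the Laplace integrals along the two perturbed rays indeed differ by the sum of residues (controlling convergence of the tail as $|\zeta|\to\infty$, which follows from the explicit geometric decay of $\widehat{V}^{\bullet}_{\beta_s}$ visible in \eqref{eq:Weber-Voros-Borel}), and that the resulting series in $e^{-Z(\gamma)/\hbar}$ converges to the claimed logarithm on ${\mathbb H}_{\ell_1}\cap{\mathbb H}_{\ell_2}$ for $|\hbar|$ small, then extends meromorphically by Proposition \ref{prop:Borel-sum-Voros-relative}. Given that the explicit $\Lambda$-function formulas are already available, I would lean on the functional-equation route to sidestep most of this analysis, treating the residue argument as the conceptual explanation rather than the formal proof.
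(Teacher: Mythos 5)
Your proposal is correct, and its first half is essentially the paper's own proof: the paper establishes \eqref{eq:Stokes-jump-Voros} for Weber by exactly the contour-deformation argument you describe, writing ${\mathcal S}_{\ell_2} V^{\rm Web}_{\beta_\infty} - {\mathcal S}_{\ell_1} V^{\rm Web}_{\beta_\infty}$ as $2\pi i$ times the sum of residues of $e^{-\zeta/\hbar}\widehat{V}^{\rm Web}_{\beta_\infty}(\zeta)$ at $\zeta = \pm 2\pi i m_\infty n$, summing the series $\pm\sum_n (-1)^n e^{\mp 2\pi i m_\infty n/\hbar} e^{\pm\pi i \nu_\infty n}/n$ to $\mp\log\bigl(1+e^{-V^{\rm Web}_{\gamma_{\infty_\pm}}}\bigr)$, matching the prefactor $\mp 1$ with $-(\gamma_{\infty_\pm},\beta_\infty)$, and exponentiating; Bessel is handled by the same method (the absence of the $(-1)^n$ there, since $t=\nu_0$ and $1+\nu_0$ rather than $(1+\nu_\infty)/2$, is what produces the $1-\cdots$ versus $1+\cdots$, as you correctly diagnose — though the Bessel Borel transform still has only simple poles, with two summands contributing to each residue, not a double pole). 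The alternative route you say you would actually lean on — deducing the jump from the explicit $\Lambda$-function branches in \eqref{eq:Borel-sum-Weber-Voros-symbol} via the Euler reflection formula — is also valid, but the paper deliberately does \emph{not} take it: the remark following the theorem attributes that derivation to Bridgeland and Barbieri and presents the residue computation as a resurgence-theoretic (alien-calculus) proof, which is the point of the statement in this context. So if you bypass the residue bookkeeping you lose nothing logically, but you would be reproving a known identity rather than exhibiting the Stokes phenomenon of the divergent series directly. Your second paragraph, on superposing Weber and Bessel to get the general hypergeometric case, is not needed for this theorem (which concerns only $\bullet\in\{\mathrm{Web},\mathrm{Bes}\}$); it is the content of Proposition \ref{prop:jump-of-Voros-symbols-in-BPS-form}, where the paper carries it out just as you sketch.
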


\begin{proof}
We only give a proof for the Weber case based on the idea of \cite{Takei08} where the case $\nu_\infty = 0$ was discussed. 
It follows from the expression \eqref{eq:Weber-Voros-Borel} that the Borel transform has simple poles only on BPS rays. 
Through the residue calculus, we have 
\begin{align} \label{eq:alien-calculas}
& {\mathcal S}_{\ell_2} V^{\rm Web}_{\beta_\infty}(  \hbar) - 
{\mathcal S}_{\ell_1} V^{\rm Web}_{\beta_\infty}(  \hbar) \notag \\
& \quad = 2 \pi i \sum_{n=1}^{\infty} \Res_{\zeta = \pm 2 \pi i m_\infty n} 
\left( e^{- \frac{\zeta}{\hbar}} \widehat{V}^{\rm Web}_{\beta_\infty}(\zeta)
\right) d\zeta  \notag \\
& \quad = \pm \sum_{n=1}^{\infty} e^{\mp \frac{2 \pi i m_\infty n}{\hbar}} 
\frac{(-1)^n e^{\pm \pi i \nu_\infty n}}{n} 
= \mp \log \Bigl( 1 + e^{- V^{\rm Web}_{\gamma_{\infty_\pm}}} \Bigr)
\end{align}
Since the overall factor $\mp 1$ is identical to minus the intersection pairing \eqref{eq:intersetion-gamma-beta}, we obtain \eqref{eq:Stokes-jump-Voros} for 
the Weber case by taking the exponential. The same method can be applied to Bessel.
\end{proof}


The jump formula \eqref{eq:Stokes-jump-Voros} for the Borel-resummed Voros symbols is closely related to the one imposed in the BPS Riemann-Hilbert problem in \S \ref{section:BPS-RHP}.
Indeed, when $\nu_\infty = 1$, the Borel resumed Voros symbol 
${\mathcal S}_\vartheta e^{V^{\rm Web}}$ for the Weber curve, 
which is expressed by the $\Lambda$ function \eqref{eq:Lambda-function-original},
coincides with the solution of the doubled $A_1$ BPS Riemann-Hilbert problem discussed 
in \cite[\S 5]{Bri19} after identifying the mass parameter with the central charge (recall that $\Lambda(w,0) = \Lambda(w,1) = \Lambda(w)$).
For generic $\nu_\infty$, the Voros symbol also gives the solution to 
the ``meromorphic'' BPS Riemann-Hilbert problem discussed by Barbieri
\cite[\S 4]{Bar}. 
{  
We will make this more precise in \S \ref{subsection:solving-BPS-RHP-by-Voros}. 
}

\begin{rem}
In the language of the {\em resurgent analysis} (c.f., \cite{Sauzin}), 
the Voros coefficients for our examples are {\em simple resurgent series}, 
and the computation \eqref{eq:alien-calculas} in the above proof is closely related 
to the so-called {\em alien calculus}. See \cite{Takei08, KoT11, KKKT11, KKT14, AIT} for more discussion. 
Note that \eqref{eq:Stokes-jump-Voros} is a consequence of the Euler reflection formula of the gamma function, as shown in \cite{Bri19, Bar}. The above computation gives a resurgence-theoretic proof of the formula. 
\end{rem}



\subsection{The Voros potential}
\label{sec:vorospotential} Since Voros coefficients are integrals over homology classes, we may consider $V_{\cdot,k}$ as a map $V_{\cdot,k}: (\Gamma\oplus \Gamma^*)\otimes \mathbb{C} \rightarrow \mathbb{C}$, where we complexify in the obvious way. Recall that we are interested in the variation of BPS structures $(\Gamma_{\text{\rm \DJ}},Z_{\text{\rm \DJ}},\Omega_{\text{\rm \DJ}})$, over $T^*M_\bullet$ which by Corollary \ref{cor:mini} is in fact miniversal. The miniversality of the original $(\Gamma,Z,\Omega)$ provides an isomorphism:
\begin{equation}
    D\pi: T_{\bm m}M_\bullet \rightarrow \Gamma^*\otimes \mathbb{C}\,\,\left(\simeq\Gamma^\vee\otimes \mathbb{C}\right)
\end{equation}
and dually we may identify $T_{\bm m}^*M$ and $\Gamma\otimes\mathbb{C}$. In particular, this means the ``interesting part'' (namely, the path Voros symbols) $V_{\cdot, k}: \Gamma^\vee\otimes\mathbb{C} \rightarrow \mathbb{C}$ (abusing notation) of the maps $V_{\cdot,k}$ may be interpreted as a one-form on $M_\bullet$.

Fixing generators $\{\gamma_s-\iota_*\gamma_s\}$ for $\Gamma\otimes\mathbb{C}$ and taking as local coordinates $m_s=Z(
\gamma_s)/4\pi i$ on $M_\bullet$ we may write these one-forms as
\begin{equation}
    \omega_k := \sum_{s\in P_{\rm ev}}{2 \pi i\, V_{\beta_s,k}dm_s},\qquad (k\geq 1) 
\end{equation}
It turns out these forms are closed. Thus, there is a formal series in $\hbar$ of functions $\phi = \sum_{k\geq1}\phi_k \hbar^k$ such that
\begin{equation}
\label{eq:vorpot}
    \omega =  2\pi i \, d_{M_\bullet}\phi
\end{equation}
where $d_{M_\bullet}$ denotes the exterior derivative on $M_\bullet$ only, and $\omega:=\sum_{k\geq1}\omega_k\hbar^k$ is a formal series of one-forms. Each term in $\phi$ is a function on the space $M_\bullet$ depending on the quantization parameters ${\bm \nu}$. More precisely:

\begin{prop}
\label{prop:vorospot}
The Voros coefficients of quantum curves of hypergeometric type are generated by a potential $2\pi i \phi$, where $\phi = \sum_{k\geq1}\phi_k\hbar^k$ is called the \emph{Voros potential}. That is, the forms $\omega_k$, $k\geq 1$ are closed, and the Voros potential is given by
\begin{align}
\phi_k & 
=\begin{cases} \displaystyle \frac{1}{2} \sum_{\substack{ \gamma \in \Gamma \\ Z(\gamma)\in  \mathbb{H}}}  {{\mathcal B}_{2}(\gamma)}  \cdot \Omega(\gamma) 
\log{\left(\frac{Z(\gamma)}{2\pi i} \right)}, \quad k=1 \\ 
\displaystyle
\dfrac{-1}{(k-1)k(k+1)}\left( \sum_{\substack{\gamma \in \Gamma \\ Z(\gamma) \in {\mathbb H}}}
  {{\mathcal B}_{k+1}(\gamma)}  \cdot \Omega(\gamma)  
\left( \frac{2 \pi i}{Z(\gamma)} \right)^{k-1} \right), \quad k\geq 2 
\end{cases}
\end{align}
where $\mathbb{H}$ is any half-plane whose boundary rays are not BPS. Whenever $\ell$ is not a BPS ray, $\phi$ is Borel summable along $\ell$, and $2\pi i \, \mathcal{S}_\ell\phi$ is a potential for $\mathcal{S}_\ell \omega$.
\end{prop}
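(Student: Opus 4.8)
The plan is to establish closedness of the forms $\omega_k$ first, then integrate them explicitly using the formula \eqref{eq:voros-coeffs-general} for the Voros coefficients, and finally deal with Borel summability. For closedness: using the expression \eqref{eq:voros-coeffs-general}, each $V_{\beta_s,k}$ is a finite sum, over BPS cycles $\gamma$, of terms proportional to $\langle\beta_s,\gamma\rangle\,\mathcal{B}_{k+1}(\gamma)\,(2\pi i/Z(\gamma))^k$. The dependence on ${\bm m}$ enters only through $Z(\gamma)$ (and possibly $\nu(\gamma)$, but the quantization parameters are independent of ${\bm m}$, so this is a constant for the purposes of $d_{M_\bullet}$). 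Writing each $\gamma$ in terms of the basis dual to $\{\beta_s\}$ — i.e. $\gamma=\sum_{s'}\langle\beta_{s'},\gamma\rangle(\gamma_{s'_+}-\gamma_{s'_-})/2$ roughly speaking — one sees $Z(\gamma)=2\pi i\sum_{s'}\langle\beta_{s'},\gamma\rangle m_{s'}$ is linear in the $m_{s'}$. Then $\partial_{m_{s'}}V_{\beta_s,k}$ and $\partial_{m_s}V_{\beta_{s'},k}$ each become a sum over $\gamma$ with the symmetric coefficient $\langle\beta_s,\gamma\rangle\langle\beta_{s'},\gamma\rangle$ times a function of $Z(\gamma)$ alone; the symmetry in $(s,s')$ gives $d_{M_\bullet}\omega_k=0$. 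Equivalently, and more cleanly, one simply exhibits the primitive $\phi_k$ directly and checks $2\pi i\,\partial_{m_s}\phi_k=2\pi i\,V_{\beta_s,k}$, which both proves closedness and identifies the potential.

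For the identification of $\phi_k$: differentiate the claimed formula. For $k\ge 2$, $\phi_k$ is proportional to $\sum_\gamma \mathcal{B}_{k+1}(\gamma)\Omega(\gamma)(2\pi i/Z(\gamma))^{k-1}$, and since $Z(\gamma)=2\pi i\sum_{s'}\langle\beta_{s'},\gamma\rangle m_{s'}$ we compute
\[
\partial_{m_s}\left(\frac{2\pi i}{Z(\gamma)}\right)^{k-1}
= -(k-1)\,\langle\beta_s,\gamma\rangle\,\frac{(2\pi i)^{k-1}}{Z(\gamma)^k}\cdot 2\pi i\cdot\frac{1}{2\pi i}\cdot(2\pi i)
\]
— more carefully, $\partial_{m_s}Z(\gamma)=2\pi i\langle\beta_s,\gamma\rangle$, so $\partial_{m_s}(2\pi i/Z(\gamma))^{k-1}=-(k-1)(2\pi i/Z(\gamma))^{k-1}\cdot 2\pi i\langle\beta_s,\gamma\rangle/Z(\gamma)=-(k-1)\langle\beta_s,\gamma\rangle(2\pi i/Z(\gamma))^{k}\cdot(2\pi i)^{-1}\cdot 2\pi i$. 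Collecting the constants, the prefactor $-1/((k-1)k(k+1))$ combines with the $-(k-1)$ to give exactly $1/(k(k+1))$, matching \eqref{eq:voros-coeffs-general}: so $\partial_{m_s}\phi_k=V_{\beta_s,k}$. For $k=1$ one uses $\partial_{m_s}\log(Z(\gamma)/2\pi i)=\langle\beta_s,\gamma\rangle/Z(\gamma)$, and the factor $\tfrac12\mathcal{B}_2(\gamma)$ together with this reproduces $V_{\beta_s,1}$ from \eqref{eq:voros-coeffs-general} with $k=1$ (noting $\mathcal{B}_2(\gamma)/(1\cdot 2)=\mathcal{B}_2(\gamma)/2$). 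These are routine but need the bookkeeping of constants; I would present them compactly.

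For the Borel summation statement: since $\phi=\sum_{k\ge 1}\phi_k\hbar^k$ and the $\phi_k$ grow factorially (the $\mathcal{B}_{k+1}(\gamma)$ are values of Bernoulli polynomials), $\phi$ is a divergent series of the same resurgent type as the $V_{\beta_s}$. The key point is that Borel summation along a fixed non-BPS ray $\ell$ is a continuous algebra homomorphism commuting with differentiation in the parameters; hence applying $\mathcal{S}_\ell$ to the relation $2\pi i\,d_{M_\bullet}\phi=\omega$ term-by-term in $\hbar$ gives $2\pi i\,d_{M_\bullet}(\mathcal{S}_\ell\phi)=\mathcal{S}_\ell\omega$, provided $\mathcal{S}_\ell\phi$ is itself well-defined. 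For the latter, I would argue exactly as in Proposition \ref{prop:Borel-sum-Voros-relative}: it suffices to check Borel summability of $\phi$ in the two fundamental cases (Weber and Bessel), where the Borel transform of $\phi$ is, up to an elementary factor of $\zeta$, the same type of meromorphic function whose only singularities lie on the BPS rays, so the Laplace integral converges for $\ell$ away from those rays. The main obstacle — or rather the only place requiring genuine care — is making the closedness argument and the constant-chasing fully rigorous in all nine cases uniformly; once the uniform BPS-cycle description \eqref{eq:voros-coeffs-general} is in hand, this reduces to the linear-algebra computation above, with the worst-case subtlety being the $\Omega(\gamma)=-1$ cycles, for which one must check that the ``symmetrized Bernoulli'' expression $\mathcal{B}_{k+1}(\gamma)$ behaves identically under differentiation (it does, since it depends on ${\bm m}$ only through $Z(\gamma)$).
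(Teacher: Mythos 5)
Your proposal is correct and follows essentially the same route as the paper: exhibit the claimed $\phi_k$ as a primitive and verify $\partial_{m_s}\phi_k = V_{\beta_s,k}$ by direct differentiation of the uniform BPS-sum formula \eqref{eq:voros-coeffs-general} (using $\partial_{m_s}Z(\gamma)=2\pi i\langle\beta_s,\gamma\rangle$ and the ${\bm m}$-independence of $\mathcal{B}_{k+1}(\gamma)$), which is exactly the ``calculate directly from \eqref{eq:vorpot}'' computation the paper performs, with the Bernoulli reflection identity already absorbed into Proposition \ref{prop:voroscoeffs}. The only cosmetic difference is in the Borel summability step, where the paper cites the asymptotic expansion of the Barnes $G$-function while you reduce to the Weber and Bessel Borel transforms as in Proposition \ref{prop:Borel-sum-Voros-relative}; both are standard and equivalent (though note the Borel transform of $\phi$ has logarithmic rather than simple-pole singularities on the BPS rays, which does not affect summability along non-BPS rays).
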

\begin{proof}
To determine closedness and compute $\phi_k$, we calculate directly from \eqref{eq:vorpot}, case-by-case using the explicit formula \eqref{eq:voros-coeffs-general} or Appendix \ref{sec:explicit-formulas}. In the Gauss hypergeometric and degenerate hypergeometric cases, we use the identity for Bernoulli polynomials $(-1)^{k+1}B_k(x)+B_k(1-x)=0$, ($k\geq0$).
The Borel summability follows from the known asymptotic expansion (see e.g. \cite{BBS}) of the Barnes double gamma function in a sufficiently large sector.
\end{proof}

We will provide an interpretation of the Voros potential in the next section.

\section{Solution to BPS Riemann-Hilbert problem and the $\tau$-function}
\label{sec:final}
We may now formulate and prove our main results.

\subsection{BPS indices and the Stokes structure of the Voros coefficients}
\label{subsection:solving-BPS-RHP-by-Voros}

In the work of Gaiotto-Moore-Neitzke \cite{GMN09}, the BPS indices $\{ \Omega(\gamma) \}_{\gamma \in \Gamma}$ appeared in a formula which describes the jump of the Fock-Goncharov cluster coordinate on the moduli space of the framed local systems. In fact, it is easy to see this jump formula is in fact nothing but the required property \eqref{eq:RHP-jump-explicit} in the BPS Riemann-Hilbert problem. 
The purpose of this subsection is to solve the BPS Riemann-Hilbert problem, and thus show that our definition \eqref{eq:def-of-BPS-indices} of BPS indices proposed in \cite{IK20} is consistent with this point of view if we identify the Fock-Goncharov coordinates with the Borel-resummed Voros symbols (c.f., \cite{I16, All19, Kuw20}). 

We consider the almost-doubled BPS structure $(\Gamma_{\text{\DJ}}, Z_{\text{\DJ}}, \Omega_{\text{\DJ}})$ with the almost-doubled lattice 
$\Gamma_{\text{\DJ}} = \Gamma \oplus \Gamma^*$. 
For convenience, in what follows we take $Z^\vee = 0$ so when we consider the variation of our almost-doubled BPS structures, they are regarded as a family over $M$ (i.e., the zero section of $T^\ast M$).

\begin{dfn} Let $(\Gamma_{\text{\DJ}},Z_{\text{\DJ}},\Omega_{\text{\DJ}})$ denote any of the almost-doubled BPS structures of hypergeometric type.  We define a meromorphic map, the \emph{Voros solution}, 
\begin{equation}
    X_\ell^{{\rm Vor}}: \mathbb{H}_\ell \rightarrow \mathbb{T}_{-,{\text{\DJ}}}
\end{equation}
as follows. For any non-BPS ray $\ell$, set\footnote{
Indeed, when $\mu=(\gamma,0)$ for a cycle $\gamma \in \Gamma$, the Borel summation in \eqref{eq:X-gamma-as-Voros-symbols} is not necessary because of \eqref{eq:Borel-sum-of-closed-Voros-coefficient}.} 
\begin{equation} \label{eq:X-gamma-as-Voros-symbols}
X^{{\rm Vor}}_{\ell, \gamma_{}}(\hbar) := 
\displaystyle \sigma(\gamma)   \cdot 
{\mathcal S}_{\ell}  e^{- V_{\gamma_{}}(\hbar)}
\end{equation}
for any cycles $\gamma \in \Gamma$, and 
\begin{equation} \label{eq:X-beta-as-Voros-symbol}
X^{{\rm Vor}}_{\ell, \beta}( \hbar) := 
\sigma(\beta)\cdot{\mathcal S}_{\ell} e^{V_{\beta}( \hbar)}
\end{equation}
for any paths $\beta\in \Gamma^{*}$. 
\end{dfn}

Here, $\sigma : \Gamma_{\text{\DJ}} \to \{ \pm 1 \}$ is the quadratic refinement characterized by the following property: 
\begin{equation}
\label{eq:quadref}
\sigma_{}(\gamma_{\rm BPS})=\begin{cases}
-1, \qquad \Omega(\gamma_{\rm BPS}) \neq -1 \\
+1, \qquad \Omega(\gamma_{\rm BPS})=-1
\end{cases}
\end{equation}
and $\sigma(\beta) = + 1$ for $\beta \in \Gamma^\ast$. Observe that some choice of quadratic refinement was necessary in order to convert the $\mathbb{T}_{{\text{\rm \DJ}},+}$-valued $(e^{-V_{\gamma}},e^{V_{\beta}})$ into something $\mathbb{T}_{{\text{\rm \DJ}},-}$-valued. It is shown in \cite[Lemma 3.2]{BS13} that an appropriate subset of BPS cycles generates the lattice $\Gamma$, and it is not difficult to see that such a quadratic refinement exists. The collection $\{ X^{\rm Vor}_{\ell, \mu} \}_{\mu  \in \Gamma_{\text{\rm \DJ}}}$ gives a map $X^{\rm Vor}_{\ell}$ from ${\mathbb H}_{\ell}$  to the almost-doubled twisted torus ${\mathbb T}_{-,{\text{\rm \DJ}}}$.

We justify the terminology ``solution'' in the remainder of the section. First we show the jumping behaviour:

\begin{prop} ~ \label{prop:jump-of-Voros-symbols-in-BPS-form}
Let $({\Gamma}_{\text{\rm \DJ}}, Z_{\text{\rm \DJ}}, \Omega_{\text{\rm \DJ}})$ be an almost-doubled BPS structure of hypergeometric type, and $\Delta \subset {\mathbb C}^\ast$ an acute sector whose boundary rays $\ell_1, \ell_2$, taken in the clockwise order, are non-BPS rays. Then we have the following relation on $\mathbb{H}_{\ell_1} \cap \mathbb{H}_{\ell_2}$:
\begin{equation} \label{eq:RHP-jump-in-HG}
X^{{\rm Vor}}_{\ell_2, \mu}(\hbar) = 
X^{{\rm Vor}}_{\ell_1, \mu}( \hbar) \,
\prod_{\substack{\eta \in \Gamma_{\text{\rm \DJ}} \\ Z_{\text{\rm \DJ}}(\eta) \in \Delta}} 
{(}1 - X^{{\rm Vor}}_{\ell_1, \eta}(\hbar){)}^{\Omega(\eta) \langle \eta, \mu \rangle} 
\quad (\mu \in \Gamma_{\text{\rm \DJ}}).
\end{equation}
\end{prop}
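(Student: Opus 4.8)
\textbf{Proof proposal for Proposition \ref{prop:jump-of-Voros-symbols-in-BPS-form}.}

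The plan is to reduce the general jump formula to the two fundamental cases, Weber and Bessel, for which the jump is essentially Theorem \ref{thm:Stokes-jump-Voros}, and then bookkeep the signs coming from the quadratic refinement $\sigma$. First I would observe that the left-hand and right-hand sides are both characters on $\Gamma_{\text{\DJ}}$ evaluated at $\mu$, so by multiplicativity it suffices to check \eqref{eq:RHP-jump-in-HG} for $\mu$ running over a generating set of $\Gamma_{\text{\DJ}} = \Gamma \oplus \Gamma^\ast$; concretely for $\mu = \gamma$ a cycle and for $\mu = \beta_s$ a path. When $\mu = \gamma \in \Gamma$, the cycle Voros symbol does not jump (by \eqref{eq:Borel-sum-of-closed-Voros-coefficient}), so the left side equals the right side of \eqref{eq:X-gamma-as-Voros-symbols} with no correction; on the right-hand side of \eqref{eq:RHP-jump-in-HG} every exponent $\Omega(\eta)\langle\eta,\gamma\rangle$ vanishes because $\Omega_{\text{\DJ}}$ is supported on $\Gamma\subset\Gamma_{\text{\DJ}}$ and the pairing restricted to $\Gamma\times\Gamma$ is trivial (the original BPS structure is uncoupled), so both sides agree trivially. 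The content is therefore entirely in the case $\mu = \beta \in \Gamma^\ast$.

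For $\mu = \beta$, I would argue ray by ray: as $\ell$ sweeps from $\ell_1$ to $\ell_2$ through the acute sector $\Delta$, it may cross several BPS rays $\ell_{\rm BPS}$; since the BPS automorphisms all commute (uncoupledness), it is enough to treat the crossing of a single BPS ray $\ell_{\rm BPS} = \pm 2\pi i m_s\,{\mathbb R}_{>0}$ and then multiply. Because the Voros coefficients of every hypergeometric-type example are superpositions of the Weber and Bessel building blocks (Proposition \ref{prop:voroscoeffs}, i.e. the expression \eqref{eq:voros-coeffs-general}), the Borel transform $\widehat{V}_\beta$ has its singularities exactly on the BPS rays, with the local singularity structure near $\ell_{\rm BPS}$ identical to that of the corresponding Weber (for $\Omega(\gamma_{\rm BPS})\ne -1$) or Bessel (for $\Omega(\gamma_{\rm BPS})=-1$) model. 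Applying Theorem \ref{thm:Stokes-jump-Voros} to that local model, the jump of ${\mathcal S}_\ell e^{V_\beta}$ across $\ell_{\rm BPS}$ is
\[
{\mathcal S}_{\ell_2} e^{V_\beta} = {\mathcal S}_{\ell_1} e^{V_\beta}\cdot
\bigl(1 - e^{-V_{\gamma_{\rm BPS}}}\bigr)^{\Omega(\gamma_{\rm BPS})\,(\gamma_{\rm BPS},\beta)}
\quad\text{if }\Omega(\gamma_{\rm BPS})=-1,
\]
and
\[
{\mathcal S}_{\ell_2} e^{V_\beta} = {\mathcal S}_{\ell_1} e^{V_\beta}\cdot
\bigl(1 + e^{-V_{\gamma_{\rm BPS}}}\bigr)^{-(\gamma_{\rm BPS},\beta)}
\quad\text{if }\Omega(\gamma_{\rm BPS})=+1,
\]
with the obvious modifications for $\Omega(\gamma_{\rm BPS}) \in \{2,4\}$, where the exponent should still come out as $\Omega(\gamma_{\rm BPS})\langle\gamma_{\rm BPS},\beta\rangle$; here I would use the sign convention \eqref{eq:sign-of-pairings}, $\langle\gamma,\beta\rangle = -(\gamma,\beta)$, to match orientations. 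The cases $\Omega(\gamma_{\rm BPS}) \in \{2,4\}$ (type II and III saddles) require checking that the Bernoulli-polynomial residue sum still exponentiates to the $\Omega$-th power of the $A_1$-type factor; this follows from the explicit formulas since those BPS cycles enter \eqref{eq:voros-coeffs-general} with the Weber-type Bernoulli polynomial ${\mathcal B}_{k+1}(\gamma) = B_{k+1}(\tfrac{1+\nu(\gamma)}{2})$, merely multiplied by $\Omega(\gamma)$.

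It remains to insert the quadratic refinement. By definition \eqref{eq:X-gamma-as-Voros-symbols}, $X^{\rm Vor}_{\ell,\gamma_{\rm BPS}} = \sigma(\gamma_{\rm BPS})\,{\mathcal S}_\ell e^{-V_{\gamma_{\rm BPS}}}$, and by \eqref{eq:quadref} we have $\sigma(\gamma_{\rm BPS}) = -1$ when $\Omega(\gamma_{\rm BPS})\ne -1$ and $\sigma(\gamma_{\rm BPS})=+1$ when $\Omega(\gamma_{\rm BPS})=-1$. Thus $1 - X^{\rm Vor}_{\ell,\gamma_{\rm BPS}} = 1 + {\mathcal S}_\ell e^{-V_{\gamma_{\rm BPS}}}$ in the first case and $= 1 - {\mathcal S}_\ell e^{-V_{\gamma_{\rm BPS}}}$ in the second, which is precisely what is needed to turn the two displayed jump formulas above into the single uniform expression $\prod(1 - X^{\rm Vor}_{\ell_1,\eta})^{\Omega(\eta)\langle\eta,\mu\rangle}$ appearing in \eqref{eq:RHP-jump-in-HG}; since $\sigma(\beta)=+1$ for paths, the prefactor $\sigma$ is common to $X^{\rm Vor}_{\ell_1,\beta}$ and $X^{\rm Vor}_{\ell_2,\beta}$ and cancels. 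Taking the product over all BPS rays $\ell_{\rm BPS}\subset\Delta$ gives the full jump $\prod_{Z_{\text{\DJ}}(\eta)\in\Delta}$, completing the argument. The main obstacle I anticipate is purely combinatorial rather than analytic: getting every sign right — the sign in the intersection pairing convention \eqref{eq:intersetion-gamma-beta}, the orientation ambiguity $\pm\gamma_{\rm BPS}$, the sign built into $\sigma$, and the sign in the exponent of $e^{-V_{\gamma_{\rm BPS}}}$ versus $e^{V_{\gamma_{\rm BPS}}}$ — so that the Weber/Bessel jumps of Theorem \ref{thm:Stokes-jump-Voros} line up exactly with the Kontsevich–Soibelman factor $(1-x_\gamma)^{\Omega(\gamma)\langle\gamma,\beta\rangle}$; this is where I would be most careful, checking each against the two explicit examples.
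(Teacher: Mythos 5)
Your proposal is correct and follows essentially the same route as the paper: reduce to generators of $\Gamma_{\text{\rm \DJ}}$, note the cycle case is trivial by uncoupledness, and for paths reduce to a single BPS ray and invoke the Weber/Bessel jump formulas of Theorem \ref{thm:Stokes-jump-Voros} via the superposition structure \eqref{eq:voros-coeffs-general}. Your explicit tracking of how the quadratic refinement $\sigma$ converts the $(1\pm e^{-V_{\gamma_{\rm BPS}}})$ factors into the uniform $(1-X^{\rm Vor}_{\ell_1,\eta})$ is in fact spelled out more carefully than in the paper's own proof.
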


\begin{proof}

The BPS cycles and BPS indices for hypergeometric type spectral curves are summarized in Table \ref{table:BPS-str-HG}. Since our BPS structure is finite, it suffices to prove the claim when $\Delta$ contains only one BPS ray. 
Also, it is enough to verify \eqref{eq:RHP-jump-in-HG} for the generators of the almost doubled lattice; that is, for $\mu = (\gamma,0)$ with $\gamma \in \Gamma$ and for $\mu = (0,\beta)$ with $\beta \in \Gamma^{*}$.

We first note that the cycle Voros symbols, for homology classes $\mu = (\gamma,0)$ with $\gamma \in \Gamma$, do not jump as we have seen in \S \ref{section:Stokes-Voros}. hence, they automatically satisfy the formula \eqref{eq:RHP-jump-in-HG} since $\langle\gamma,\gamma'\rangle=0$ for all $\gamma'\in\Gamma$, but $\Omega(\mu)=0$ for $\mu = (0,\beta)$. 

Since $\Gamma^{*}$ is generated by $\beta_{s}$ for $s \in P_{\rm ev}$, the remaining task is to verify \eqref{eq:RHP-jump-in-HG} for $\mu = (0, \beta_s)$. It is easily see that this follows from the expression \eqref{eq:voros-coeffs-general} of the Voros coefficients for those paths and the jump property \eqref{eq:Stokes-jump-Voros} for the Weber and Bessel case. 
Indeed, taking $\Omega_{\rm Web}(\gamma_{\infty_\pm}) = 1$, $\Omega_{\rm Bes}(\gamma_{0_\pm} - \gamma_{0_\mp}) = -1$ (c.f., Table \ref{table:BPS-str-HG}) and the sign convention \eqref{eq:sign-of-pairings} into account, the formula \eqref{eq:Stokes-jump-Voros} can be written as 
\begin{equation}
{\mathcal S}_{\ell_2}e^{{V}^{\bullet}_{\beta_s}(\hbar)} 
= 
{\mathcal S}_{\ell_1} e^{V^{\bullet}_{\beta_s}( \hbar)} 
\Bigl(1 + e^{-V^{\bullet}_{\gamma_{\rm BPS}}( \hbar)}
\Bigr)^{\Omega(\gamma_{\rm BPS}) \langle \gamma_{\rm BPS}, \beta_s \rangle},
\end{equation}
where $\gamma_{\rm BPS}$ is the unique BPS cycles whose central charge lies on $\ell_{\rm BPS} = \pm 2 \pi i m_s \,  {\mathbb R}_{>0}$ (i.e., $\gamma_{\rm BPS} = \gamma_{\infty_{\pm}}$ for $\bullet = {\rm Web}$, and $\gamma_{\rm BPS} = \gamma_{0_{\pm}} - \gamma_{0_{\mp}}$ for $\bullet = {\rm Bes}$). Thus we have verified \eqref{eq:RHP-jump-in-HG} for $\bullet = {\rm Web}$ and ${\rm Bes}$. Since the path Voros coefficients in the general case \eqref{eq:voros-coeffs-general} is a superposition of those of Weber and Bessel, we can conclude that \eqref{eq:RHP-jump-in-HG} is valid for all other cases as well.
\end{proof}

\begin{rem}From the explicit expressions \eqref{eq:voros-coeffs-general} for the Voros coefficients, we have the explicit formula of their Borel sum for any non-BPS ray $\ell$ and any $\mu \in \Gamma_{\text{\rm \DJ}}$:
\begin{align}
\label{eq:vorosexplicit}
X^{\rm Vor}_{\ell, \mu}(\hbar) & 
= e^{- {Z_{\text{\rm \DJ}}(\mu)}/{\hbar}} \, \xi_{{\text{\DJ}},\bm \nu }(\mu)  
\prod_{\substack{\gamma \in \Gamma \\ Z(\gamma) \in i {\mathbb H}_{\ell} \\ \Omega(\gamma) \ne -1}} 
\Lambda \left( \frac{Z(\gamma)}{2 \pi i \hbar} 
, \frac{1-\nu(\gamma)}{2}\right)^{\Omega(\gamma) \langle \mu, \gamma \rangle} 
\notag \\
& \qquad \times
\prod_{\substack{\gamma \in \Gamma \\ Z(\gamma) \in i {\mathbb H}_{\ell} \\ \Omega(\gamma) = -1}} 
\Lambda \left( \frac{Z(\gamma)}{2 \pi i \hbar}
,1-\frac{\nu(\gamma)}{2} \right)^{\Omega(\gamma) \frac{\langle \mu, \gamma \rangle}{2}} \Lambda \left( \frac{Z(\gamma)}{2 \pi i \hbar}, -\frac{\nu(\gamma)}{2}\right)^{\Omega(\gamma) \frac{\langle \mu, \gamma \rangle}{2}}.
\end{align} 
The factor $\xi_{\text{\rm \DJ}, \bm \nu} \in {\mathbb T}_{\text{\rm \DJ}, -}$ is defined by 
\begin{equation}
\label{eq:nuxi}
\xi_{\text{\DJ},\bm \nu}(\mu) = \sigma(\mu) \,  e^{\pi i \nu(\mu)}
\end{equation}
where $\nu(\gamma)$ was defined in Proposition \ref{prop:closed-Voros-coeffs} for $\gamma \in \Gamma$, 
and we extend it to an element in $\Hom(\Gamma_{\text{\DJ}}, \mathbb C)$ by setting $\nu(\beta)=0$ for $\beta \in \Gamma^{\ast}$. 
\end{rem}

\begin{rem}
When ${\nu_s} = {\nu_{p_{s_+}}} - {\nu_{p_{s_-}}} = 0$ for all $s \in P_{\rm ev}$, $\xi(\gamma)=\sigma(\gamma)$ for all $\gamma\in\Gamma$. This choice $\xi=\sigma\big|_{\Gamma}$ turns out to be exactly the value of the constant term for which Allegretti \cite{All19} solves the (non-doubled, coupled) BPS Riemann-Hilbert problem, also using Voros symbols.
\end{rem}
\begin{rem} \label{rem:use-of-almost-doubling}
Although a fractional power appears in the second line of  \eqref{eq:vorosexplicit}, we can verify that \eqref{eq:vorosexplicit} is indeed a meromorphic function of $\hbar$ since the pairing of $\gamma_{s_{\pm}} - \gamma_{s_{\mp}} \in \Gamma$ (only such cycles can give $-1$ as BPS invariant) with $\mu \in \Gamma_{\text{\rm \DJ}}$ is always even. This shows clearly why we had to consider the almost-doubled lattice rather than the (fully) doubled one. Otherwise, the powers of $\frac{1}{2}$ appearing in the presence of loop-type BPS cycles give rise to non-meromorphic $X^{\rm Vor}$, since the intersection pairing on the doubled lattice is not always even.
\end{rem}



Having shown that $X^{{\rm Vor}}_{\ell}$ satisfies the jump property, we may  conclude
 \begin{thm}
 \label{thm:voros-soln} 
 Let $(\Gamma,Z,\Omega)$ denote a BPS structure obtained from any spectral curve of hypergeometric type. Then $X^{{\rm Vor}}_\ell$ is a meromorphic solution to the BPS Riemann-Hilbert problem associated to the almost-doubled BPS structure $(\Gamma_{\text{\rm \DJ}},Z_{\rm \text{\rm \DJ}},\Omega_{\text{\rm \DJ}})$, with the constant term $\xi_{{\text{\rm \DJ}},\bm \nu}$ given by \eqref{eq:nuxi}.
  \end{thm}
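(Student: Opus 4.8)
The plan is to verify the three defining conditions (RH1), (RH2), (RH3) of Problem~\ref{prob:holRHP} for the map $X^{\rm Vor}_\ell$, using the explicit description of the Voros coefficients and their Borel sums assembled in \S\ref{sec:quantumcurves}. The jump condition (RH1) is already established in Proposition~\ref{prop:jump-of-Voros-symbols-in-BPS-form}, so the remaining work is to confirm (RH2) the prescribed small-$\hbar$ asymptotics with constant term $\xi_{\text{\DJ},\bm\nu}$, (RH3) the polynomial growth at $\infty$, and, crucially, that $X^{\rm Vor}_\ell$ genuinely takes values in the \emph{twisted} torus $\mathbb{T}_{-,\text{\DJ}}$ and is \emph{meromorphic} (not merely holomorphic on a sector with branch cuts).

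First I would check the twisted-torus condition: one must show that $\mu\mapsto X^{\rm Vor}_{\ell,\mu}(\hbar)$ satisfies $X^{\rm Vor}_{\ell,\mu_1+\mu_2}=(-1)^{\langle\mu_1,\mu_2\rangle}X^{\rm Vor}_{\ell,\mu_1}X^{\rm Vor}_{\ell,\mu_2}$. Since $\gamma\mapsto Z(\gamma)/\hbar$, $\gamma\mapsto\nu(\gamma)$ and $\beta\mapsto V_\beta$ are all additive, and $e^{-V_\gamma}$, $e^{V_\beta}$ are genuine homomorphisms into $\mathbb{C}^\ast$, the only source of a sign is the quadratic refinement $\sigma$, and by construction (a quadratic refinement satisfies $\sigma(\mu_1+\mu_2)=(-1)^{\langle\mu_1,\mu_2\rangle}\sigma(\mu_1)\sigma(\mu_2)$) this is exactly the required twisting. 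Existence of such $\sigma$ with the normalization \eqref{eq:quadref} follows from \cite[Lemma 3.2]{BS13} together with a direct check on the generating BPS cycles in Table~\ref{table:BPS-str-HG}. Next, for (RH2), I would combine Proposition~\ref{prop:closed-Voros-coeffs} (which gives $V_\gamma=Z(\gamma)/\hbar-\pi i\nu(\gamma)$ exactly) with the asymptotic expansion $\mathcal{S}_\ell V_{\beta_s}\sim V_{\beta_s}=\sum_{k\ge1}\hbar^k V_{\beta_s,k}\to 0$ as $\hbar\to0$ (Watson's lemma, as recalled in \S\ref{section:Borel-sum-Voros}); this yields $X^{\rm Vor}_{\ell,\gamma}\sim\sigma(\gamma)e^{-Z(\gamma)/\hbar}e^{\pi i\nu(\gamma)}$ and $X^{\rm Vor}_{\ell,\beta}\sim\sigma(\beta)e^{V_\beta(0)}=1$, i.e. exactly $e^{-Z_{\text{\DJ}}(\mu)/\hbar}\xi_{\text{\DJ},\bm\nu}(\mu)$ with $\xi_{\text{\DJ},\bm\nu}$ as in \eqref{eq:nuxi} (recall $Z^\vee=0$ so $Z_{\text{\DJ}}(\beta)=0$).

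For (RH3), I would use the explicit evaluation \eqref{eq:vorosexplicit} of the Borel sum as a product of $\Lambda$-functions: since $\Lambda(w,\eta)=e^{w}\Gamma(w+\eta)/(\sqrt{2\pi}\,w^{w+\eta-1/2})$ is of at most polynomial growth in $|w|$ as $w\to\infty$ away from a sector around ${\mathbb R}_{<0}$ (by Stirling), and the exponential prefactor $e^{-Z_{\text{\DJ}}(\mu)/\hbar}$ is bounded on ${\mathbb H}_\ell$ as $|\hbar|\to\infty$, each $X^{\rm Vor}_{\ell,\mu}(\hbar)$ and its inverse grow at most polynomially in $|\hbar|$, giving (RH3) with an appropriate $k=k(\mu)$. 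The meromorphy on all of ${\mathbb H}_\ell$ is exactly the content of Proposition~\ref{prop:Borel-sum-Voros-relative} (the Borel sum of each Voros symbol extends meromorphically), and the half-integer powers appearing for loop-type BPS cycles with $\Omega=-1$ are harmless because, as noted in Remark~\ref{rem:use-of-almost-doubling}, the pairing $\langle\mu,\gamma_{s_+}-\gamma_{s_-}\rangle$ is always even on $\Gamma_{\text{\DJ}}$ — this is precisely the reason the \emph{almost}-doubled lattice $\Gamma^\ast\subset\Gamma^\vee$ was used instead of the full $\Gamma^\vee$. I expect the main obstacle to be bookkeeping rather than conceptual: one must carefully match the branch choices in \eqref{eq:Lambda-function-original}–\eqref{eq:vorosexplicit} (domain ${\rm Re}\,w>0$, principal branch of $\log w$, and the normalization ${\rm Im}\log\xi\in[0,2\pi)$) against the two cases of the Borel sum computed in \eqref{eq:Borel-sum-Weber-Voros-symbol} and \eqref{eq:Borel-sum-Bessel-Voros}, verifying that on the overlap region the different $\ell$-dependent expressions for $X^{\rm Vor}_\ell$ glue consistently with the sign conventions \eqref{eq:sign-of-pairings} and the quadratic refinement, and that the superposition principle (general case $=$ superposition of Weber and Bessel) propagates all three properties without introducing new singularities.
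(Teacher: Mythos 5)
Your proposal is correct and follows essentially the same route as the paper: (RH1) is delegated to Proposition \ref{prop:jump-of-Voros-symbols-in-BPS-form}, while (RH2), (RH3) and meromorphy are read off from the explicit product formula \eqref{eq:vorosexplicit} together with the asymptotics of the $\Lambda$-function and the evenness of the pairing on the almost-doubled lattice. The only difference is that you spell out the twisted-torus/quadratic-refinement check and the Watson's-lemma step explicitly, which the paper leaves implicit.
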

  \begin{proof}
  We have already shown (RH1), and (RH2) follows from the explicit formula \eqref{eq:vorosexplicit}. The asymptotics at infinity follow from the explicit expression and the properties of the $\Lambda$ function (see, e.g. \cite[\S 3]{Bar}.
  \end{proof}
  


\subsection{Relation to the minimal and holomorphic solutions}
\label{sec:relationtominimal}
We note that while we have given a solution to the BPS Riemann-Hilbert problem, it does \emph{not} always agree with the ones given by \cite{Bar,Bri19} which we wrote down in \S\ref{sec:BPS}. However, we may relate them explicitly (for simplicity, we restrict ${\rm Re}\,\nu(\gamma_{\rm BPS})$), noting in each case that the difference is at worst multiplication by a simple meromorphic function:

 \begin{coro} \label{cor:vormin} 

 Fix a non-BPS ray $\ell$. For all BPS cycles $\gamma_{\rm BPS} \in \Gamma$ with $Z(\gamma_{\rm BPS})\in i\mathbb{H}_\ell$,
  suppose ${\rm Re}\,\nu(\gamma_{\rm BPS})\in (-1,1]$ whenever $\Omega(\gamma_{\rm BPS})\neq -1$, and ${\rm Re}\,\nu(\gamma_{\rm BPS})\in { (-2,0]}$ whenever $\Omega(\gamma_{\rm BPS})=-1$.
Then, in the half-plane $\mathbb{H}_\ell$, the Voros solution \eqref{eq:vorosexplicit} and the minimal solution \eqref{eq:Anna-solution} to the BPS Riemann-Hilbert problem associated to $(\Gamma_{\text{\rm\DJ}}, Z_{\text{\rm\DJ}}, \Omega_{\text{\rm\DJ}})$ with constant term $\xi_{{\text{\rm \DJ}}, \bm \nu}$ are related via
 \begin{align}
 \label{eq:vormin1}
  X^{{\rm Vor}}_{\ell,\gamma} &= \;\; \; 1 \, \cdot \, X^{{\rm min}}_{\ell,\gamma},  \qquad \qquad \text{$\gamma \in \Gamma$} \\[+.5em]
 \label{eq:vormin2} 
 X^{{\rm Vor}}_{\ell,\beta} & = \rho_{\ell, \beta} \cdot X^{{\rm min}}_{\ell,\beta}, \qquad \qquad \text{$\beta \in \Gamma^{*}$}
\end{align}
where $\rho_{\ell,\beta} : {\mathbb{H}_\ell} \to {\mathbb C}$ is meromorphic, and given explicitly by 
\begin{align}
\label{eq:vorminfactor}
\rho_{\ell, \beta}(\hbar)= \prod_{\substack{\gamma \in \Gamma \\  Z(\gamma) \in i{\mathbb H}_\ell\\\Omega(\gamma)=-1 }}
\left(1-\frac{\pi i \nu(\gamma)}{Z(\gamma)}\hbar\right)^{\Omega(\gamma) \, \frac{\langle \beta, \gamma \rangle}{2}}.
\end{align}
In particular, when $\nu_s=0$ for all $s \in P_{\rm ev}$, the Voros solution agrees with the minimal solution --- that is, $X^{{\rm Vor}}_\ell =X^{{\rm \min}}_\ell$ holds for any non-BPS $\ell$.
 \end{coro}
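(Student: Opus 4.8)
The plan is to compare the two explicit product formulas term by term, using the known asymptotic expansions of the function $\Lambda(w,\eta)$ to identify the discrepancy as an explicit rational (in $\hbar$) prefactor. First I would recall that both $X^{\rm Vor}_{\ell,\mu}$ and $X^{\rm min}_{\ell,\mu}$ have the common leading factor $e^{-Z_{\text{\DJ}}(\mu)/\hbar}\,\xi_{\text{\DJ},\bm\nu}(\mu)$ out front (for $X^{\rm min}$ this uses that the constant term is $\xi_{\text{\DJ},\bm\nu}$, with $\log\xi_{\text{\DJ},\bm\nu}(-\gamma)/2\pi i$ computed from \eqref{eq:nuxi}), so the ratio $X^{\rm Vor}_{\ell,\mu}/X^{\rm min}_{\ell,\mu}$ is a product over active $\gamma\in\Gamma$ with $Z(\gamma)\in i\mathbb{H}_\ell$ of ratios of $\Lambda$-factors raised to the appropriate power $\Omega(\gamma)\langle\mu,\gamma\rangle$ (resp. $\Omega(\gamma)\langle\mu,\gamma\rangle/2$ for the loop-type cycles).

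For cycles $\mu=\gamma'\in\Gamma$, the exponents $\langle\gamma',\gamma\rangle$ all vanish since the original BPS structure is uncoupled, so the product is empty and \eqref{eq:vormin1} is immediate. For paths $\mu=\beta\in\Gamma^*$ the exponents $\langle\beta,\gamma\rangle$ need not vanish, so I must actually compare the $\Lambda$-factors. For the non-loop cycles ($\Omega(\gamma)\neq-1$), the Voros formula \eqref{eq:vorosexplicit} contributes $\Lambda\!\left(\frac{Z(\gamma)}{2\pi i\hbar},\frac{1-\nu(\gamma)}{2}\right)$ while the minimal formula \eqref{eq:Anna-solution} contributes $\Lambda\!\left(\frac{Z(\gamma)}{2\pi i\hbar},\frac{\log\xi_{\text{\DJ},\bm\nu}(-\gamma)}{2\pi i}\right)$; plugging $\log\xi_{\text{\DJ},\bm\nu}(-\gamma)/2\pi i = \nu(-\gamma)/2 + (\text{sign term}) = -\nu(\gamma)/2 \pmod 1$ and using the branch convention (the hypothesis ${\rm Re}\,\nu(\gamma_{\rm BPS})\in(-1,1]$ pins down $\frac{\log\xi(-\gamma)}{2\pi i}$ to be exactly $\frac{1-\nu(\gamma)}{2}$ in the allowed strip, modulo a careful check with the stated $[0,2\pi)$ branch of ${\rm Im}\log\xi$) shows the two $\Lambda$-arguments agree, so these factors cancel exactly and contribute nothing to $\rho_{\ell,\beta}$. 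For the loop-type cycles ($\Omega(\gamma)=-1$), the Voros formula contributes $\Lambda\!\left(w,1-\tfrac{\nu(\gamma)}{2}\right)\Lambda\!\left(w,-\tfrac{\nu(\gamma)}{2}\right)$ with $w=\frac{Z(\gamma)}{2\pi i\hbar}$, raised to $-\tfrac12\langle\beta,\gamma\rangle$, whereas the minimal formula contributes a single $\Lambda\!\left(w,\eta\right)$ factor; here I need the identity $\Lambda(w,\eta)\Lambda(w,\eta+1) = (\text{rational in }w)\cdot\Lambda(w,\eta')$ coming from the functional equation $\Gamma(w+\eta+1)=(w+\eta)\Gamma(w+\eta)$ together with the relation defining $\Lambda$. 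Carrying this out, the ratio collapses to a power of $\left(1-\frac{\pi i\nu(\gamma)}{Z(\gamma)}\hbar\right)$, matching \eqref{eq:vorminfactor}; the branch hypothesis ${\rm Re}\,\nu(\gamma_{\rm BPS})\in(-2,0]$ is exactly what is needed to choose $\eta$ consistently with the $[0,2\pi)$ branch convention and to make the rational factor $1-\frac{\pi i\nu(\gamma)}{Z(\gamma)}\hbar$ (as opposed to $w+\eta$ for some other representative) appear.

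I would then observe that $\rho_{\ell,\beta}$ as defined is manifestly meromorphic on $\mathbb{H}_\ell$ (a finite product of linear factors in $\hbar$ to integer powers, integrality again using that $\langle\beta,\gamma\rangle$ is even for loop-type $\gamma$, exactly as in Remark \ref{rem:use-of-almost-doubling}), with $\rho_{\ell,\beta}(0)=1$, consistent with the fact from \cite{Bar} that two solutions differ by such a function; and finally, when all $\nu_s=0$ we have $\nu(\gamma)=0$ for every $\gamma$, so every factor in \eqref{eq:vorminfactor} is $1$ and $X^{\rm Vor}_\ell=X^{\rm min}_\ell$. The main obstacle I anticipate is purely bookkeeping around branches: matching the half-integer/fractional shifts $\frac{1\pm\nu(\gamma)}{2}$ and $1-\frac{\nu(\gamma)}{2}$, $-\frac{\nu(\gamma)}{2}$ in the Voros formula against the single argument $\frac{\log\xi_{\text{\DJ},\bm\nu}(-\gamma)}{2\pi i}$ in the minimal formula under the prescribed branch of the logarithm ($\mathrm{Im}\log\xi(\gamma)\in[0,2\pi)$), which is precisely why the hypotheses ${\rm Re}\,\nu(\gamma_{\rm BPS})\in(-1,1]$ and $(-2,0]$ are imposed — getting these ranges to line up with the multivaluedness of $\Lambda$ and with the sign function $\sigma$ is the only delicate point, and it should be handled by a short case analysis on the sign of ${\rm Im}\log\xi$ separating $\Omega(\gamma)\neq-1$ from $\Omega(\gamma)=-1$.
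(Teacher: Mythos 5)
Your proposal is correct and follows essentially the same route as the paper's proof: a factor-by-factor comparison of the two product formulas, with the cycle case trivial by uncoupledness, the $\Omega\neq-1$ factors cancelling once the branch hypothesis pins $\frac{\log\xi_{\text{\DJ},\bm\nu}(-\gamma)}{2\pi i}$ to $\frac{1-\nu(\gamma)}{2}$, and the loop-type factors collapsing via $\Lambda(w,\eta+1)=\Lambda(w,\eta)\bigl(1+\tfrac{\eta}{w}\bigr)$ to a single $\Lambda\bigl(w,-\tfrac{\nu(\gamma)}{2}\bigr)$ times the rational factor \eqref{eq:vorminfactor}, with evenness of $\langle\beta,\gamma\rangle$ ensuring meromorphy. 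The paper merely organizes the same computation by first reducing to the Weber and Bessel building blocks through the superposition formula \eqref{eq:voros-coeffs-general}; the only blemish in your write-up is the intermediate line ``$=-\nu(\gamma)/2\pmod 1$'' for the $\Omega\neq-1$ case, which drops the $\tfrac12$ contributed by $\sigma(\gamma)=-1$, though your final identification of the argument as $\frac{1-\nu(\gamma)}{2}$ is the correct one.
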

 \begin{rem}
 Note, the relation for additional values of ${\bm \nu}$ can be obtained by shifting the second argument in the $\Lambda$ function.
 \end{rem}
 \begin{proof}
 The first line of \eqref{eq:vormin1} is immediate. In order to prove the second line, we may restrict our attention to the Weber and Bessel cases thanks to the superposition structure \eqref{eq:voros-coeffs-general}. In view of \eqref{eq:Borel-sum-Weber-Voros-symbol}, we have the following in the Weber case:
\begin{equation}
X^{{\rm Vor}, {\rm Web}}_{\ell, \beta_{\infty}} = 
\mathcal{S}_\ell e^{V_{\beta_\infty}^{\rm Web}}=\Lambda \left( \frac{Z(\gamma)}{2 \pi i \hbar} 
, \frac{1-\nu(\gamma_{})}{2}\right)^{\Omega(\gamma_{}) \langle \beta_\infty, \gamma \rangle},
\end{equation}
where $\gamma = \gamma_{\infty_\pm}$ is chosen so that $Z(\gamma)\in i \mathbb{H}_\ell$. 
Since $\Omega(\gamma) = 1$, our assumption ensures that  ${\rm Im}\, (\pi i \nu(-\gamma)+\pi i)\in [0,2\pi)$, so that $\frac{\log\xi_{\nu_\infty}(-\gamma)}{2\pi i}= \frac{1-\nu(\gamma)}{2}$. Thus the Voros solution agrees with the minimal solution 
\begin{equation}
    X^{{\rm min},{\rm Web}}_{\ell,\beta_\infty} = \Lambda \left( \frac{Z(\gamma_{})}{2 \pi i \hbar} 
, \frac{\log(\xi_{\nu_\infty}(-\gamma))}{2\pi i }\right)^{\Omega(\gamma) \langle \beta_\infty, \gamma \rangle}.
\end{equation}

{

On the other hand, in the Bessel case, we may rewrite the expression for the Voros coefficient as follows:
\begin{align}
X^{\rm Vor, Bes}_{\ell, \beta_0} 
& = \Lambda \left( \frac{Z(\gamma)}{2 \pi i \hbar}
,1-\frac{\nu(\gamma)}{2} \right)^{\Omega(\gamma) \frac{\langle \beta_0, \gamma \rangle}{2}} \Lambda \left( \frac{Z(\gamma)}{2 \pi i \hbar}, -\frac{\nu(\gamma)}{2}\right)^{\Omega(\gamma) \frac{\langle \beta_0, \gamma \rangle}{2}} 
\notag \\[+.5em]
& = 
\Lambda \left( \frac{Z(\gamma)}{2 \pi i \hbar}, -\frac{\nu(\gamma)}{2} \right)^{\Omega(\gamma) {\langle \beta_0, \gamma \rangle}} \cdot \left( 1 - \frac{\pi i \nu(\gamma)}{Z(\gamma)} \hbar \right)^{\Omega(\gamma) \, \frac{\langle \beta_0, \gamma \rangle}{2}}. 
\label{eq:Bessel-Voros-alternative-expression}
\end{align}
where $\gamma = \gamma_{0_{\pm}} - \gamma_{0_{\mp}}$ is chosen to satisfy $Z(\gamma) \in i \, {\mathbb H}_{\ell}$, and we used the property 
$\Lambda(w,\eta+1) = \Lambda(w,\eta) \, (1 + \frac{\eta}{w})$
to obtain the second line (recall the pairing is even on the almost-doubled lattice, so there is no issue of branching). Under the assumption on $\nu(-\gamma)$, the first factor in \eqref{eq:Bessel-Voros-alternative-expression} agrees with 
the minimal solution 
\begin{equation}
X^{{\rm min}, {\rm Bes}}_{\ell,\beta_0} = \Lambda \left( \frac{Z(\gamma_{})}{2 \pi i \hbar} 
, \frac{\log(\xi_{\nu_0}(-\gamma))}{2\pi i }\right)^{\Omega(\gamma_{}) \langle \beta_0, \gamma \rangle}, 
\end{equation}
and hence, the second factor in \eqref{eq:Bessel-Voros-alternative-expression} is
$\rho_{\ell, \beta_0}$. By the evennness of the pairing on the almost-doubled lattice,
this is a meromorphic function of $\hbar$. 
}

The general expression follows from the superposition structure, formula \eqref{eq:voros-coeffs-general}. 
\end{proof}

{ 

\begin{coro}
\label{coro:vorminhol}

For any spectral curve of hypergeometric type, there exists a choice of parameter ${\bm \nu}_*$ so that 
{
\begin{equation}
\label{eq:xi-specialized}
\xi_{\text{\rm \DJ},{\bm \nu}_{\ast}}(\gamma) = 1
\end{equation} 
holds for all active $\gamma$}, and the Voros and holomorphic solutions for the associated BPS Riemann-Hilbert problem are related, for any non-BPS ray $\ell$, as:
\begin{align}
 X_{\ell, \gamma}^{\rm Vor {}}\,\Big|_{{\bm\nu} = {\bm\nu}_*} & =  \,\,1\,\cdot\, X_{\ell, \gamma}^{\rm  hol}, \qquad \gamma\in 
   \Gamma \\
 X_{\ell, \beta}^{\rm Vor {}}\,\Big|_{{\bm\nu} = {\bm\nu}_*} & = \varrho_\beta \cdot X_{\ell, \beta}^{\rm  hol}, \qquad \beta\in 
   \Gamma^\ast
\end{align}
where $\varrho_\beta=\varrho_\beta(\hbar)$ is either $1$ or a simple meromorphic function independent of $\ell$.
\end{coro}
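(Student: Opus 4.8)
The plan is to reduce the statement to the two fundamental cases (Weber and Bessel) using the superposition structure \eqref{eq:voros-coeffs-general}, exactly as in the proof of Corollary \ref{cor:vormin}, and then to make a case-by-case choice of $\bm\nu_*$ so that the constant term \eqref{eq:nuxi} satisfies \eqref{eq:xi-specialized}. First I would recall that $\xi_{\text{\DJ},\bm\nu}(\gamma) = \sigma(\gamma)\,e^{\pi i \nu(\gamma)}$, and that $\sigma(\gamma_{\rm BPS}) = -1$ when $\Omega(\gamma_{\rm BPS})\neq -1$ and $+1$ when $\Omega(\gamma_{\rm BPS}) = -1$. Hence $\xi_{\text{\DJ},\bm\nu}(\gamma_{\rm BPS}) = 1$ requires $e^{\pi i \nu(\gamma_{\rm BPS})} = -1$ in the first case and $e^{\pi i \nu(\gamma_{\rm BPS})} = +1$ in the second. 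Reading off the BPS cycles from Table \ref{table:BPS-str-HG}, the type I/II/III saddle cycles are of the form $\gamma_{0_+} + \gamma_{1_\epsilon} + \gamma_{\infty_{\epsilon'}}$ etc., on which $\nu$ evaluates to $\nu_0 + \epsilon\nu_1 + \epsilon'\nu_\infty$ (with appropriate signs), while the loop cycles $\gamma_{s_+} - \gamma_{s_-}$ have $\nu = 2\nu_s$. Thus I must exhibit, for each example, a choice of the $\nu_s$ (subject only to $\sum_p \nu_p = 1$) making each $\nu$ of the first kind an odd integer and each $\nu$ of the second kind an even integer; the simplest uniform choice turns out to be $\nu_{s_+} = \nu_{s_-} = \tfrac12$ for all even-order poles $s$ appearing, so that $\nu_s = 0$ for all $s$, giving $\nu(\gamma_{\rm BPS}) = 0$ on loop cycles (even, good) but $\nu(\gamma_{\rm BPS}) = 0$ on type I cycles as well, which is \emph{even}, not odd — so this naive choice fails for type I,II,III. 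Instead I would take $\nu_{0_\pm} = \tfrac12$, $\nu_{1_\pm} = \tfrac12$, $\nu_{\infty_+} = \tfrac12$, $\nu_{\infty_-} = -\tfrac12$ (or permutations), arranging that the combination on each type I cycle is an odd integer; this is a finite, explicit verification per table row, and I would present it as a short lemma listing ${\bm\nu}_*$ for each curve.

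Having fixed ${\bm\nu}_*$, the relation between solutions is then a direct specialization of Corollary \ref{cor:vormin}. Since $\xi_{\text{\DJ},{\bm\nu}_*}(\gamma) = 1$ for all active $\gamma$, the minimal solution \eqref{eq:Anna-solution} reduces by Theorem \ref{thm:Anna-RHP-sol} to the holomorphic solution \eqref{eq:X-hol}, i.e. $X^{\rm min}_\ell|_{{\bm\nu}={\bm\nu}_*} = X^{\rm hol}_\ell$. Combining this with \eqref{eq:vormin1}--\eqref{eq:vorminfactor} gives $X^{\rm Vor}_{\ell,\gamma}|_{{\bm\nu}={\bm\nu}_*} = X^{\rm hol}_{\ell,\gamma}$ for $\gamma\in\Gamma$ and $X^{\rm Vor}_{\ell,\beta}|_{{\bm\nu}={\bm\nu}_*} = \rho_{\ell,\beta}|_{{\bm\nu}={\bm\nu}_*}\cdot X^{\rm hol}_{\ell,\beta}$ for $\beta\in\Gamma^*$, so I would set $\varrho_\beta := \rho_{\ell,\beta}|_{{\bm\nu}={\bm\nu}_*}$. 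The content of the corollary's final clause is that this $\varrho_\beta$ is either $1$ or a simple meromorphic function independent of $\ell$: from \eqref{eq:vorminfactor}, $\rho_{\ell,\beta}$ is a product over $\Omega(\gamma)=-1$ BPS cycles of $(1 - \tfrac{\pi i\nu(\gamma)}{Z(\gamma)}\hbar)^{\Omega(\gamma)\langle\beta,\gamma\rangle/2}$, so if $\nu(\gamma) = 2\nu_s$ vanishes at ${\bm\nu}_*$ for every loop cycle then $\varrho_\beta\equiv 1$, and otherwise it is the displayed rational function of $\hbar$, whose only dependence on $\ell$ is through which cycles satisfy $Z(\gamma)\in i\mathbb{H}_\ell$ — but $\varrho_\beta$ and $\varrho_{-\beta}$ together account for both choices of orientation, and the factor $(1-\tfrac{\pi i\nu(\gamma)}{Z(\gamma)}\hbar)$ is invariant under $\gamma\mapsto -\gamma$, so the product over $i\mathbb{H}_\ell$ is $\ell$-independent (this uses that the intersection pairing kills any $\ell$-dependence, exactly as in Remark \ref{rem:use-of-almost-doubling}). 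I should check that ${\rm Re}\,\nu(\gamma_{\rm BPS})$ at ${\bm\nu}_*$ lands in the ranges assumed in Corollary \ref{cor:vormin} (shifting ${\bm\nu}_*$ by integers if necessary, which only reindexes via $\Lambda(w,\eta+1) = \Lambda(w,\eta)(1+\eta/w)$ and does not change the conclusion up to absorbing extra rational factors into $\varrho_\beta$).

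The main obstacle I anticipate is not conceptual but bookkeeping: verifying that a single coherent choice ${\bm\nu}_*$ exists that \emph{simultaneously} makes $\xi_{\text{\DJ},{\bm\nu}_*}(\gamma_{\rm BPS}) = 1$ for \emph{all} BPS cycles of a given curve — for Gauss hypergeometric there are eight type I cycles $\gamma_{0_+}+\gamma_{1_\epsilon}+\gamma_{\infty_{\epsilon'}}$ (up to sign) plus three loop cycles, and one must ensure $\nu_0 \pm \nu_1 \pm \nu_\infty$ is an odd integer for all four sign patterns while $2\nu_0, 2\nu_1, 2\nu_\infty$ are even integers, under $\nu_{0_+}+\nu_{0_-}+\nu_{1_+}+\nu_{1_-}+\nu_{\infty_+}+\nu_{\infty_-} = 1$. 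The subtlety is that requiring all four of $\nu_0\pm\nu_1\pm\nu_\infty$ odd while $\nu_s$ are integers (forced by the loop condition, if we want $\nu(\gamma)\in 2\mathbb Z$) forces $\nu_0+\nu_1+\nu_\infty$ and $\nu_0+\nu_1-\nu_\infty$ to differ by $2\nu_\infty\in 2\mathbb Z$, consistent; so e.g. $\nu_0 = \nu_1 = 1$, $\nu_\infty = -1$ with the splitting $\nu_{0_\pm} = (1\pm?)/2$ chosen to satisfy $\sum = 1$ — I would tabulate the explicit valid ${\bm\nu}_*$ for each of the nine curves and note this is where non-uniqueness enters (``not unique'' in the statement of Corollary \ref{coro:maintheorem-1}). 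Everything else then follows mechanically from Corollary \ref{cor:vormin} and Theorem \ref{thm:Anna-RHP-sol}.
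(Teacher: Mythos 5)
Your architecture --- choose ${\bm\nu}_*$ so that $\xi_{\text{\DJ},{\bm\nu}_*}=1$ on active classes, then pass through the minimal solution via Corollary \ref{cor:vormin} and $X^{\rm min}\big|_{\xi=1}=X^{\rm hol}$ --- is a different decomposition from the paper's proof, which compares the asymptotic expansion \eqref{eq:asymptotic-log-holomorphic-solution} of $\log X^{\rm hol}_{\ell,\beta}$ directly with the Voros coefficients \eqref{eq:voros-coeffs-general} using $B_{k+1}(0)=B_{k+1}(1)=B_{k+1}$ and $B_{k+1}(x+1)-B_{k+1}(x)=(k+1)x^{k}$. Your parity analysis of the condition $\xi_{\text{\DJ},{\bm\nu}}(\gamma_{\rm BPS})=1$ (odd $\nu(\gamma_{\rm BPS})$ on saddle-type cycles, even on loop-type) is correct and matches the choices the paper makes. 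But the route through Corollary \ref{cor:vormin} has a genuine obstruction that your ``shift by integers'' remark does not resolve: for Legendre, degenerate Gauss and Gauss, the condition $\xi=1$ forces some $\nu_s$ to be a nonzero odd integer, so the loop cycle $\gamma_{s_+}-\gamma_{s_-}$ has $\nu=2\nu_s\notin(-2,0]$ for \emph{either} orientation and Corollary \ref{cor:vormin} simply does not apply at ${\bm\nu}_*$; even in the remaining cases the saddle cycles have $\nu(\gamma_{\rm BPS})=\pm1$, so for roughly half the rays $\ell$ the orientation lying in $i\mathbb{H}_\ell$ violates ${\rm Re}\,\nu\in(-1,1]$. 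For saddle cycles this is harmless since $\Lambda(w,0)=\Lambda(w,1)$, but for loop cycles you must recompute the ratio $X^{\rm Vor}_{\ell,\beta}/X^{\rm hol}_{\ell,\beta}$ from \eqref{eq:vorosexplicit} rather than quote \eqref{eq:vorminfactor}.

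Second, your $\ell$-independence argument for $\varrho_\beta$ is wrong as stated: the base $1-\tfrac{\pi i\nu(\gamma)}{Z(\gamma)}\hbar$ is indeed invariant under $\gamma\mapsto-\gamma$, but the exponent $\Omega(\gamma)\langle\beta,\gamma\rangle/2$ changes sign, so the product in \eqref{eq:vorminfactor} over cycles with $Z(\gamma)\in i\mathbb{H}_\ell$ is replaced by its \emph{inverse} as $\ell$ rotates past $\ell_{\pm\gamma}$. The correct mechanism is a cancellation of two sign flips. For Legendre at $\nu_\infty=1$, write the loop-cycle factor of $X^{\rm Vor}_{\ell,\beta}/X^{\rm hol}_{\ell,\beta}$ as $\bigl[\Lambda(w,1-\tfrac{\nu}{2})\Lambda(w,-\tfrac{\nu}{2})/\Lambda(w,0)^2\bigr]^{\Omega\langle\beta,\gamma\rangle/2}$ and use $\Lambda(w,\eta+1)=\Lambda(w,\eta)(1+\eta/w)$: the bracket equals $(1-\tfrac{\hbar}{2m_\infty})^{-1}$ for the orientation with $\nu=2$ and $(1-\tfrac{\hbar}{2m_\infty})^{+1}$ for the orientation with $\nu=-2$, and this flip exactly compensates the sign flip of the exponent, yielding the same $\varrho_{\beta_\infty}$ for both. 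You need to carry out this cancellation explicitly (or argue, as the paper effectively does, at the level of the $\ell$-independent formal series \eqref{eq:voros-coeffs-general} and \eqref{eq:asymptotic-log-holomorphic-solution}). With these two repairs your proof goes through and reproduces the paper's $\varrho_\beta$.
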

\begin{proof}
We need only check at the level of asymptotic expansions. The asymptotic expansion of $X^{\rm hol}_{\ell, \beta}$ is given by \cite[\S 5.3]{Bri19}:
\begin{equation} \label{eq:asymptotic-log-holomorphic-solution}
    {\rm log} X_{\ell, \beta}^{\rm hol}  \sim  \sum_{k\geq1} \sum_{\substack{ \gamma\in\Gamma \\ Z(\gamma) \in i \, {\mathbb H}_\ell }} \,   \dfrac{\Omega(\gamma) \, \langle\beta,\gamma\rangle \, B_{k+1}}{k(k+1)} \,  \left(\dfrac{2 \pi i}{Z(\gamma)}\right)^k \hbar^k.
\end{equation}
Comparing this with equation \eqref{eq:voros-coeffs-general}, and recalling that $B_{k+1}(0)=B_{k+1}(1)=B_{k+1}$, we can proceed case-by-case. For the Weber and Whittaker cases there is only one term in the sum, so choosing $\nu_\infty=1$ (or $\nu_\infty=-1$) gives exact agreement, and for Bessel $\nu_0=0$ similarly gives
\begin{equation}
   \mathcal{B}_{k+1}({\gamma}) \, \Bigl|_{\nu_0=0} 
   = \frac{ B_{k+1}(0)+B_{k+1}(1)}{2} = B_{k+1}
\end{equation}
which also agrees after accounting for the intersection number. 
In the case of Kummer we may demand $\mathcal{B}_{k+1}(\gamma_{\rm BPS})= B_{k+1}$ whenever $\gamma_{\rm BPS}$ has $\Omega(\gamma_{\rm BPS})\neq -1$ which forces $\nu_0, \nu_\infty = 0 \text{ or } 1$ (but not both). Setting $\nu_0=0 , \nu_\infty=1$, we obtain the agreement.
{It is not hard to see that this choice satisfies \eqref{eq:xi-specialized}.}

The simplest case in which a discrepancy between $X^{\rm Vor}$ and $X^{\rm hol}$ appears is Legendre. For concreteness, let us write out \eqref{eq:voros-coeffs-general} in this case:
\begin{equation}
V^{\rm Leg}_{{\beta_\infty},k} = 
V^{\rm Leg}_{{\beta_\infty},k}(m_\infty, \nu_\infty) =
\dfrac{1}{k(k+1)}\left( \frac{4 B_{k+1}\left(\frac{\nu_\infty+1}{2}\right)}{m_\infty^k} - \frac{B_{k+1}(\nu_\infty)+B_{k+1}(1 + \nu_\infty)}{(2m_\infty)^k}\right)
\end{equation}
Setting $\nu_\infty =1$ we obtain 
\begin{align}
    V^{\rm Leg}_{{\beta_\infty},k}({m_\infty},1)
    & =\dfrac{1}{k(k+1)} 
    \left( \frac{ 4 B_{k+1} }{ m_\infty^k } - \frac{B_{k+1}+B_{k+1}(2)}{(2m_\infty)^k} \right) \notag \\
    & =\dfrac{1}{k(k+1)}\left( \frac{4B_{k+1}}{m_\infty^k} - \frac{2B_{k+1}}{(2m_\infty)^k} \right) - \frac{1}{k (2m_\infty)^k} ,
\end{align}
where we used the identity $B_{k}(1) = B_k$ and  $B_k(\nu+1)=B_k(\nu)+k\nu^k$. 
Through the comparison to \eqref{eq:asymptotic-log-holomorphic-solution}, 
we can see that the Borel sum of $e^{V^{\rm Leg}_{\beta_\infty}}$ along $\ell$ differs from $X_{\ell, \beta_\infty}^{\rm hol}$ by an additional term $-\log(1-\frac{\hbar}{2m_\infty})$, which is independent of $\ell$ since the formal series itself is convergent. Its exponential $(1-\frac{\hbar}{2m_\infty})^{-1}$ is meromorphic function of $\hbar$ and gives the factor ${\varrho}_{\beta_\infty}={\varrho}^{\rm Leg}_{\beta_\infty}$.

The degenerate Gauss and Gauss hypergeometric cases follow similarly, first choosing $\nu_s=0 \text{ or } 1$ in order to obtain agreement from terms with $\Omega(\gamma)\neq -1$, and computing the resulting discrepancy appearing for terms with $\Omega(\gamma)=-1$. In both cases, we may make an arbitrary choice of one of the second order poles and set the corresponding $\nu_s=1$, with the remaining $\nu_{\tilde{s}}=0$. Thus we obtain a discrepancy of 
\begin{equation}\varrho^{\rm HG,dHG}_{\beta_s}=
    \left(1-\frac{\hbar}{2m_s}\right)^{-1}    
\end{equation}
For the other poles $\tilde{s}$, we have $\varrho_{\beta_{\tilde{s}}} = 1$. 
{Again, we can verify that \eqref{eq:xi-specialized} is satisfied even in the cases where there is a non-trivial discrepancy $\varrho_\beta$. 
}
\end{proof}

}

\subsection{BPS $\tau$-function and the Voros potential}
Having established the Voros solution to the almost-doubled BPS Riemann-Hilbert problem, we can ask for a corresponding BPS $\tau$-function. Such a function exists, and it turns out there is a very natural interpretation on the TR side in terms of the Voros potential:


\begin{thm}\label{thm:taupotential}
For any non-BPS ray $\ell$, the Borel-resummed Voros potential provides a BPS $\tau$-function for the Voros solution as:
\begin{equation} \label{eq:def-tau-Voros}
\tau_{{\rm BPS}, \ell}^{\rm Vor} = 
c_\ell \, 
\mathcal{S}_{\ell} \, e^{ - \partial_\hbar \phi}.
\end{equation}
Here, the prefactor $c_\ell =  c_\ell(\hbar)$ is independent of ${\bm m}$ and chosen explicitly as follows: 
\begin{equation} \label{eq:c-ell}
    c_\ell(\hbar) = \prod_{\substack{ \gamma \in \Gamma \\ 
    Z(\gamma) \in i \, {\mathbb H}_\ell }} \hbar^{\frac{1}{2} {\mathcal B}_2(\gamma) \Omega(\gamma)}. 
\end{equation}
\end{thm}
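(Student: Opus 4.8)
The plan is to verify directly that $\tau_{{\rm BPS},\ell}^{\rm Vor} := c_\ell\,\mathcal{S}_\ell e^{-\partial_\hbar\phi}$ satisfies the defining equations \eqref{eq:tau-concrete} for the BPS $\tau$-function associated with the Voros solution $X^{\rm Vor}_\ell$. Since $\phi$ depends only on ${\bm m}$ (not on the dual coordinates $m_s^\vee$), and $c_\ell$ is independent of ${\bm m}$, the first equation $\partial_{m_s^\vee}\log\tau_{{\rm BPS},\ell}^{\rm Vor}=0$ is immediate. So the whole content is the second equation
\begin{equation}
\frac{\partial}{\partial m_s}\log\tau_{{\rm BPS},\ell}^{\rm Vor} = -\frac{\partial}{\partial\hbar}\log Y_{\ell,\beta_s},
\end{equation}
where, by definition, $Y_{\ell,\beta_s} = e^{Z^\vee(\beta_s)/\hbar}\,\xi_{\text{\DJ},{\bm\nu}}(\beta_s)^{-1}\,X^{\rm Vor}_{\ell,\beta_s}$; since we have set $Z^\vee=0$ and $\xi_{\text{\DJ},{\bm\nu}}(\beta_s)=\sigma(\beta_s)=1$, this is just $Y_{\ell,\beta_s}=X^{\rm Vor}_{\ell,\beta_s}=\mathcal{S}_\ell e^{V_{\beta_s}}$, so $\log Y_{\ell,\beta_s}=\mathcal{S}_\ell V_{\beta_s}$.

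Working at the level of formal power series first (Borel summation commuting with $\partial_\hbar$, $\partial_{m_s}$, and with the relevant products, by the resurgence input behind Proposition~\ref{prop:Borel-sum-Voros-relative}), I would write $\log\tau_{{\rm BPS},\ell}^{\rm Vor} = \log c_\ell - \partial_\hbar\phi$ and compute both sides order by order in $\hbar$. From Proposition~\ref{prop:vorospot}, $\partial_\hbar\phi = \sum_{k\ge 1} k\,\phi_k\,\hbar^{k-1}$, so
\begin{equation}
\frac{\partial}{\partial m_s}\bigl(-\partial_\hbar\phi\bigr) = -\sum_{k\ge 1} k\,\frac{\partial\phi_k}{\partial m_s}\,\hbar^{k-1}.
\end{equation}
On the other side, $-\partial_\hbar\log Y_{\ell,\beta_s} = -\partial_\hbar V_{\beta_s} = -\sum_{k\ge 1} k\,V_{\beta_s,k}\,\hbar^{k-1}$. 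Thus the claim reduces to the termwise identity $k\,\partial_{m_s}\phi_k = k\,V_{\beta_s,k}$ for $k\ge 1$, i.e. $\partial_{m_s}\phi_k = V_{\beta_s,k}$, which is precisely the statement that $2\pi i\,d_{M_\bullet}\phi = \omega = \sum_k\bigl(\sum_s 2\pi i\,V_{\beta_s,k}\,dm_s\bigr)\hbar^k$ established in Proposition~\ref{prop:vorospot}. For $k\ge 2$ this is a direct differentiation of the explicit formula for $\phi_k$ using $\partial_{m_s}(Z(\gamma)/2\pi i) = \tfrac12\langle\beta_s,\gamma\rangle$ (which follows from $Z(\gamma_{s_\pm})=\pm 2\pi i m_s$ and \eqref{eq:intersetion-gamma-beta}), matched against \eqref{eq:voros-coeffs-general}; the combinatorial prefactors $\tfrac{-1}{(k-1)k(k+1)}$ in $\phi_k$ are chosen exactly so that one power of $Z(\gamma)$ is released by the derivative and the residual constant is $\tfrac{1}{k(k+1)}$. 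The $k=1$ term requires the derivative of $\log(Z(\gamma)/2\pi i)$ and reproduces $V_{\beta_s,1}$ with the same $\tfrac12\langle\beta_s,\gamma\rangle$ bookkeeping. The constant $c_\ell$ contributes nothing since it is ${\bm m}$-independent; its role is purely to fix the $\hbar\to 0$ scaling normalization (so that the homogeneity/rescaling condition after \eqref{eq:defining-tau} holds and $\tau$ is the one uniquely pinned down), and one checks its exponent $\tfrac12\mathcal{B}_2(\gamma)\Omega(\gamma)$ matches the $\log w$ term in the asymptotics \eqref{eq:asymptotic-upsilon} of $\Upsilon$, making $\tau^{\rm Vor}_{{\rm BPS},\ell}$ agree with the product-of-$\Upsilon$ form up to the explicit factors recorded in Corollary~\ref{cor:vormin}.

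Finally I would upgrade from formal series to the analytic statement: the right-hand side $c_\ell\,\mathcal{S}_\ell e^{-\partial_\hbar\phi}$ is genuinely meromorphic on $\mathbb{H}_\ell$ because $\partial_\hbar\phi$ is, up to the same gamma-function/$\Upsilon$-type building blocks appearing in $X^{\rm Vor}$, Borel summable with meromorphic sum (Propositions~\ref{prop:Borel-sum-Voros-relative} and \ref{prop:vorospot}); and the defining relations \eqref{eq:tau-concrete}, being identities of meromorphic functions that hold as asymptotic series, hold identically by analytic continuation. I expect the main obstacle to be purely bookkeeping: carefully tracking the signs and the factor-of-$2$ conventions coming from the identification $\Gamma_{\rm can}\otimes\mathbb{Q}\simeq\Gamma\otimes\mathbb{Q}$, $\lambda_s\mapsto(\gamma_{s_+}-\gamma_{s_-})/2$, the convention $m_s = Z(\gamma_s)/4\pi i$ used in \S\ref{sec:vorospotential}, and the pairing normalization $\langle\beta_s,\gamma_{s'_+}-\gamma_{s'_-}\rangle = 2\delta_{s,s'}$, so that the Poisson bracket \eqref{eq:possion-M} and the $\frac{1}{2\pi i}$ in \eqref{eq:defining-tau} combine correctly with the factors of $2\pi i$ in $\omega$; getting the Legendre/Gauss cases right (where $\phi_k$ involves a genuine superposition of Weber- and Bessel-type terms and the $\Omega=-1$ cycles contribute the $\mathcal{B}_k$ with the averaged Bernoulli polynomials) is where a sign slip is most likely, so I would cross-check against the explicit $\varrho_\beta$ factors of Corollary~\ref{coro:vorminhol} and the $\Upsilon$-product formula of Proposition~\ref{prop:annatau}.
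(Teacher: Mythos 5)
Your proposal is correct and follows essentially the same route as the paper: reduce the defining relation \eqref{eq:tau-concrete} to the statement that $\phi$ is a potential for the one-form built from the $V_{\beta_s}$ (Proposition \ref{prop:vorospot}), verify it termwise in $\hbar$, upgrade to the analytic statement via Borel summability of both sides, and note that the ${\bm m}$-independent prefactor $c_\ell$ only serves to enforce the rescaling normalization. The paper's own proof is just a terser version of this argument.
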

\begin{proof}
Using the definition of the $\tau$-function for the almost-doubled problem, it is not hard to see the condition for being a $\tau$-function is exactly the same as the condition (rightmost in \eqref{eq:tau-concrete}) for being a potential for  ${-} \partial_\hbar\log Y_{\cdot,\ell}$, where $Y_{\cdot,\ell}$ is the untwisted part of the solution and $\cdot$ is valued in $\Gamma^*$. For the Voros solution, this is the equation \eqref{eq:vorpot} for the Voros potential {(up to the sign)}. Then from the explicit formulas, it is easy to check the equation holds at the level of asymptotic expansions, and since both sides are Borel summable, the relation holds analytically. 
The constant prefactor $c_\ell$ is chosen so that $\tau_{\rm BPS}^{\rm Vor}$ is invariant under the rescaling $(m_s, \hbar) \mapsto (\lambda m_s, \lambda \hbar)$ for any $\lambda \in {\mathbb C}^\ast$.
\end{proof}


As in the previous section, we can also obtain the relation between $\tau^{\rm Vor}_{\rm BPS}$ and $\tau^{\rm min}_{\rm BPS}$ (which together with the explicit formula for $\tau^{\rm min}_{\rm BPS}$ in Proposition \ref{prop:annatau} may be used to compute the Voros potential explicitly). We have:


{
\begin{coro}
\label{thm:maintheorem-1}

Fix a non-BPS ray $\ell$.
Under the same assumptions on ${\bm \nu}$ as Corollary \ref{cor:vormin}, the BPS $\tau$ function $\tau^{\rm Vor}_{{\rm BPS}}$ is related to the $\tau$-function $\tau_{\rm BPS}^{\rm min}$ \eqref{eq:mintau} for the minimal solution by
\begin{equation}
\tau_{{\rm BPS}, \ell}^{{\rm Vor}} = \kappa_\ell\cdot \tau_{{\rm BPS}, \ell}^{{\rm min}}
\end{equation}
where $\kappa_\ell$ is a simple explicit function given by 
\begin{equation} \label{eq:expression-of-kappa-ell}
\kappa_\ell = \kappa_\ell(\hbar) 
= \prod_{\substack{ \gamma \in \Gamma \\ Z(\gamma) \in i {\mathbb H}_\ell \\ \Omega(\gamma) = -1 }}
\left( \frac{Z(\gamma)}{2 \pi i \hbar} -  \frac{\nu(\gamma)}{2}  \right)^{ \frac{\nu(\gamma) \, \Omega(\gamma) }{4}}.
\end{equation}

\end{coro}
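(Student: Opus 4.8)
\textbf{Proof proposal for Corollary \ref{thm:maintheorem-1}.}

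The plan is to combine the relation between the Voros and minimal solutions established in Corollary \ref{cor:vormin} with the $\tau$-function relations from Proposition \ref{prop:annatau} and Theorem \ref{thm:taupotential}. First I would recall that, by Corollary \ref{cor:vormin}, the Voros solution and minimal solution differ only in their $\beta$-components by the explicit meromorphic factor $\rho_{\ell, \beta}$ of \eqref{eq:vorminfactor}, which arises entirely from the loop-type BPS cycles with $\Omega(\gamma) = -1$; the $\gamma$-components agree. Since the BPS $\tau$-function is determined (up to a $\hbar$-dependent, ${\bm m}$-independent constant) by the equation \eqref{eq:tau-concrete}, namely $\partial_{m_s} \log \tau = - \partial_\hbar \log Y_{\ell, \beta_s}$, and since $Y_{\ell, \beta_s}$ changes precisely by $\rho_{\ell, \beta_s}$ (divided by $\xi$, which is ${\bm m}$-independent) when passing from the minimal to the Voros solution, the ratio $\tau^{\rm Vor}_{\rm BPS, \ell} / \tau^{\rm min}_{\rm BPS, \ell}$ must satisfy $\partial_{m_s} \log \kappa_\ell = - \partial_\hbar \log \rho_{\ell, \beta_s}$. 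So the core of the proof is to integrate this equation explicitly and to fix the remaining $\hbar$-dependent constant using the rescaling-invariance normalization built into both $\tau$-functions.

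The key computation is as follows. From \eqref{eq:vorminfactor} we have, using $\langle \beta_s, \gamma \rangle = (\gamma, \beta_s)$ with the sign convention \eqref{eq:sign-of-pairings} and the fact that the only contributing $\gamma$ are $\pm(\gamma_{s_+} - \gamma_{s_-})$ with $\Omega = -1$, $\nu(\gamma) = \pm \nu_s$, and $Z(\gamma) = \pm 2\pi i \cdot 2 m_s$ (so $\langle \beta_s, \gamma\rangle/2 = \pm 1$),
\begin{equation}
\log \rho_{\ell, \beta_s}(\hbar) = - \log\!\left( 1 - \frac{\pi i \nu_s}{2\pi i \cdot 2 m_s} \hbar \right)^{\pm 1} = \mp \log\!\left( 1 - \frac{\nu_s \hbar}{4 m_s} \right),
\end{equation}
where the sign is the one making $Z(\gamma) \in i\mathbb{H}_\ell$. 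Then $-\partial_\hbar \log \rho_{\ell, \beta_s} = \mp \frac{\nu_s/(4m_s)}{1 - \nu_s\hbar/(4m_s)}$, and integrating $\partial_{m_s} \log \kappa_\ell$ against this (treating it as an exact $d_M$-form, which holds because the Voros potential is exact by Proposition \ref{prop:vorospot}) gives, up to a function of $\hbar$ alone,
\begin{equation}
\log \kappa_\ell = \sum_{\substack{\gamma \in \Gamma \\ Z(\gamma) \in i\mathbb{H}_\ell \\ \Omega(\gamma) = -1}} \frac{\nu(\gamma)}{4} \log\!\left( \frac{Z(\gamma)}{2\pi i \hbar} - \frac{\nu(\gamma)}{2} \right),
\end{equation}
where I have rewritten $1 - \nu_s\hbar/(4m_s) = \frac{\hbar}{2m_s}\left(\frac{2m_s}{\hbar} - \frac{\nu_s}{2}\right)$ and absorbed the $\hbar$-dependent prefactor $(\hbar/2m_s)^{\nu_s/4}$-type terms — this is where I must be careful — into the overall normalization, checking at the end that the resulting $\kappa_\ell$ in \eqref{eq:expression-of-kappa-ell} is consistent with the rescaling invariance $(m_s, \hbar) \mapsto (\lambda m_s, \lambda \hbar)$ that pins down both $\tau^{\rm Vor}_{\rm BPS}$ (via $c_\ell$ in Theorem \ref{thm:taupotential}) and $\tau^{\rm min}_{\rm BPS}$. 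Equivalently, and perhaps more cleanly, I would derive $\kappa_\ell$ directly from the chain $\tau^{\rm Vor}_{\rm BPS} = c_\ell \, \mathcal{S}_\ell e^{-\partial_\hbar \phi}$ together with the analogue of Theorem \ref{thm:taupotential} relating $\tau^{\rm min}_{\rm BPS}$ to $\Upsilon$-functions via \eqref{eq:mintau}, substituting the $\Lambda$-to-$\Upsilon$ relation $\frac{d}{dw}\log\Upsilon(w,\eta) = w \frac{d}{dw}\log\Lambda(w,\eta)$ and the extra factor $(1 - \pi i \nu(\gamma) \hbar / Z(\gamma))$ from \eqref{eq:Bessel-Voros-alternative-expression}, whose $\partial_\hbar\log$-contribution integrates to exactly \eqref{eq:expression-of-kappa-ell}.

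The main obstacle I anticipate is bookkeeping rather than anything conceptual: correctly tracking the half-integer exponents and the $\hbar$-prefactors so that the final $\kappa_\ell$ is genuinely ${\bm m}$-independent-free of spurious constants and matches \eqref{eq:expression-of-kappa-ell} on the nose, including getting the exponent $\nu(\gamma)\Omega(\gamma)/4$ right (note $\Omega(\gamma) = -1$ here, so this is $-\nu(\gamma)/4$, consistent with the sign I found above once the orientation convention $Z(\gamma) \in i\mathbb{H}_\ell$ is imposed). I would also need to verify that the assumptions on ${\rm Re}\,\nu(\gamma_{\rm BPS})$ inherited from Corollary \ref{cor:vormin} are exactly what is needed to identify the branch of $\log \xi$ in \eqref{eq:Anna-solution} so that $X^{\rm min}$ and $X^{\rm Vor}$ are being compared with compatible choices, and that $\kappa_\ell$ has no zeros or poles forcing a reconsideration of the meromorphy claims — but these are the same branch considerations already handled in the proof of Corollary \ref{cor:vormin}, so they transfer directly.
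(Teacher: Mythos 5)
Your main step is exactly the paper's: from \eqref{eq:tau-concrete} and \eqref{eq:vormin2} one gets $\partial_{m_s}\log\kappa_\ell=-\partial_\hbar\log\rho_{\ell,\beta_s}$, and integrating the explicit $\rho_{\ell,\beta_s}$ of \eqref{eq:vorminfactor} produces $\kappa_\ell$ up to a factor independent of the $m_s$. Where you diverge is in how that residual factor is fixed, and this is where your argument has a gap. Rescaling invariance of both $\tau$-functions only tells you that the unknown function of $\hbar$ multiplying your integrated expression is scale-invariant, i.e.\ a genuine constant in $\mathbb{C}^\ast$; it does not determine that constant, so it cannot by itself establish \eqref{eq:expression-of-kappa-ell} "on the nose". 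The paper instead compares the leading terms of the $\hbar\to 0$ asymptotic expansions of $\log\tau^{\rm min}_{\rm BPS,\ell}$ (via \eqref{eq:asymptotic-upsilon}) and $\log\tau^{\rm Vor}_{\rm BPS,\ell}$ (via the Voros potential and $c_\ell$), reducing the comparison to the Bernoulli identity
\begin{equation*}
{\mathcal B}_2(\gamma)-B_2\!\left(\tfrac{\log\xi_{\bm\nu}(-\gamma)}{2\pi i}\right)=
\begin{cases} 0, & \Omega(\gamma)\neq -1,\\ -\tfrac{\nu(\gamma)}{2}, & \Omega(\gamma)=-1,\end{cases}
\end{equation*}
which is what actually pins down the normalization (and simultaneously explains why only the $\Omega=-1$ cycles contribute). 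You would need to add such an asymptotic comparison, or some equivalent evaluation, to close the argument; your suggested alternative route through the $\Lambda$-to-$\Upsilon$ relation and the extra factor in \eqref{eq:Bessel-Voros-alternative-expression} could also do this if carried out, but as written it is only sketched.

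One concrete arithmetic point in your illustrative computation: for the loop-type cycle $\gamma=\gamma_{s_+}-\gamma_{s_-}$ one has $\nu(\gamma)=2\nu_s$ and $Z(\gamma)=4\pi i m_s$, so the factor in $\rho_{\ell,\beta_s}$ is $1-\tfrac{\nu_s\hbar}{2m_s}$, not $1-\tfrac{\nu_s\hbar}{4m_s}$; with this correction the integration does reproduce $\bigl(\tfrac{2m_s}{\hbar}-\nu_s\bigr)^{-\nu_s/2}$, matching \eqref{eq:expression-of-kappa-ell}. This is the kind of bookkeeping you flagged, but it should be fixed explicitly since the exponent and argument of $\kappa_\ell$ both depend on it.
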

\begin{proof}
From equation \eqref{eq:tau-concrete} and \eqref{eq:vormin2}, $\kappa_\ell$ must satisfy 
\begin{equation}
\frac{\partial}{\partial{m_s}} {\rm log} \kappa_\ell =
    \frac{\partial}{\partial{m_s}}\left({\rm log}\tau^{\rm Vor}_{\rm BPS, \ell}-{\rm log}\tau^{\rm min}_{\rm BPS, \ell}\right) = - \frac{\partial}{\partial \hbar} \log \rho_{\ell, \beta_s}. 
\end{equation}
Using \eqref{eq:vorminfactor}, 
we can solve the relation in an elementary computation and obtain $\kappa_\ell$ up to multiplicative constant which is independent of $m_s$. 
To determine the constant, let us look at the leading terms of asymptotic expansion of the BPS $\tau$-functions when $\hbar \to 0$ in ${\mathbb H}_\ell$: 
\begin{align}
\label{eq:asymptotic-tau-min}
\log \tau_{\rm BPS, \ell}^{\rm min} ~ \sim & ~ 
- \frac{1}{2} \sum_{\substack{ \gamma \in \Gamma \\ Z(\gamma) \in i {\mathbb H}_\ell }} \Omega(\gamma) \, B_2\Bigl( \frac{\log \xi_{\bm \nu}(- \gamma)}{2 \pi i} \Bigr) \, \log \left( \frac{Z(\gamma)}{2 \pi i \hbar} \right)
\\
\label{eq:asymptotic-tau-Voros}
\log \tau_{\rm BPS, \ell}^{\rm Vor} ~ \sim & ~ 
- \frac{1}{2} \sum_{\substack{ \gamma \in \Gamma \\ Z(\gamma) \in i {\mathbb H}_\ell }} \Omega(\gamma) \, {\mathcal B}_2(\gamma) \, \log \left( \frac{Z(\gamma)}{2 \pi i \hbar} \right)
\end{align}
Here \eqref{eq:asymptotic-tau-min} follows from the asymptotic property \eqref{eq:asymptotic-upsilon} of $\Upsilon$-function, and \eqref{eq:asymptotic-tau-Voros} is given by the leading term of $\hbar$-derivative of the Voros potential (together with $\log \, {c}_\ell$; see Theorem \ref{thm:taupotential}).
As we have seen in the proof of Corollary \ref{cor:vormin}, 
\begin{equation}
\frac{\log \xi_{\bm \nu}(- \gamma)}{2 \pi i} = 
\begin{cases}
\displaystyle  \frac{1 - \nu(\gamma)}{2} \quad & \Omega(\gamma) \ne -1,  \\[+1.em]
\displaystyle - \frac{\nu(\gamma)}{2} \quad & \Omega(\gamma) = -1
\end{cases}
\end{equation}
holds under the same assumptions on ${\bm \nu}$. 
Therefore, using the properties 
$B_2(x) = B_2(1 - x)$ and $B_2(1+x) - B_2(x) = 2 x$
of the Bernoulli polynomial, we have
\begin{equation}
{\mathcal B}_2(\gamma) - B_2\left( \frac{\log \xi_{\bm \nu}(- \gamma)}{2 \pi i} \right) = 
\begin{cases}
\displaystyle 0 \quad & \Omega(\gamma) \ne -1, \\[+.5em] 
\displaystyle -\frac{\nu(\gamma)}{2} \quad & \Omega(\gamma) = -1.
\end{cases}
\end{equation}
This fixes the overall constant factor in $\kappa_\ell$, and 
thus we have \eqref{eq:expression-of-kappa-ell}.
\end{proof}

\noindent In particular, we see that the $\tau$-functions do not agree for the Voros and minimal solutions, but differ only by a very simple factor coming from BPS cycles with $\Omega(\gamma_{\rm BPS})=-1$.

}

Finally, thanks to Corollary \ref{coro:vorminhol} we may relate the original BPS $\tau$-function of Bridgeland $\tau^{\rm hol}_{\rm BPS}$ to $\tau^{\rm Vor}_{\rm BPS}$ at a special parameter value:

\begin{coro}
\label{thm:maintheorem-2}
For any of the families of spectral curves of hypergeometric type, the BPS $\tau$-functions $\tau^{{\rm Vor}}_{{\rm BPS},{\ell}}$ for the Voros solution and $\tau^{\rm hol}_{\rm BPS}$ for the holomorphic solution are related, as functions of $\hbar$ defined on $\mathbb{H}_\ell$ for any non-BPS ray $\ell \in \mathbb{C}^\ast$, by
\begin{equation} \label{eq:tau-Voros-and-tau-holomorphic}
\tau_{{\rm BPS}, \ell}^{{\rm Vor}} \,\Big|_{{\bm \nu} = {\bm \nu}_*} 
=\varkappa \cdot 
{ 
\tau_{{\rm BPS}, \ell}^{{\rm hol}},
}
\end{equation}
where $\varkappa$ is either 1 or a simple explicit function independent of $\ell$.
\end{coro}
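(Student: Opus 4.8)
The plan is to deduce this final corollary by chaining together Corollary \ref{thm:maintheorem-1} (relating $\tau^{\rm Vor}_{\rm BPS,\ell}$ to $\tau^{\rm min}_{\rm BPS,\ell}$ at general ${\bm\nu}$ satisfying the admissibility hypotheses) with the specialization ${\bm\nu}={\bm\nu}_*$ and the fact, recorded in Corollary \ref{coro:vorminhol}, that at ${\bm\nu}={\bm\nu}_*$ one has $\xi_{{\text{\DJ}},{\bm\nu}_*}(\gamma)=1$ for all active $\gamma$. The first step is to check that ${\bm\nu}_*$ does in fact satisfy the hypotheses ${\rm Re}\,\nu(\gamma_{\rm BPS})\in(-1,1]$ when $\Omega(\gamma_{\rm BPS})\neq-1$ and ${\rm Re}\,\nu(\gamma_{\rm BPS})\in(-2,0]$ when $\Omega(\gamma_{\rm BPS})=-1$; this is immediate from the case-by-case choices made in the proof of Corollary \ref{coro:vorminhol}, where every relevant $\nu_s$ was set to $0$ or $1$, so that $\nu(\gamma_{\rm BPS})\in\{-1,0,1\}$ in all cases. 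Then Corollary \ref{thm:maintheorem-1} gives $\tau^{\rm Vor}_{\rm BPS,\ell}\big|_{{\bm\nu}={\bm\nu}_*}=\kappa_\ell\cdot\tau^{\rm min}_{\rm BPS,\ell}\big|_{{\bm\nu}={\bm\nu}_*}$ with $\kappa_\ell$ the explicit product in \eqref{eq:expression-of-kappa-ell}.

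The second step is to identify $\tau^{\rm min}_{\rm BPS,\ell}\big|_{{\bm\nu}={\bm\nu}_*}$ with $\tau^{\rm hol}_{\rm BPS,\ell}$. By Proposition \ref{prop:annatau} the minimal $\tau$-function is the product of $\Upsilon\bigl(Z(\gamma)/2\pi i\hbar,\log\xi(-\gamma)/2\pi i\bigr)^{\Omega(\gamma)}$ over $\gamma$ with $Z(\gamma)\in i{\mathbb H}_\ell$; setting $\xi=\xi_{{\text{\DJ}},{\bm\nu}_*}$, for which $\xi(\gamma)=1$ on all active classes, the second argument $\log\xi(-\gamma)/2\pi i$ becomes an integer (in fact $0$, with the chosen branch ${\rm Im}\log\xi(\gamma)\in[0,2\pi)$), so each factor reduces to $\Upsilon\bigl(Z(\gamma)/2\pi i\hbar\bigr)^{\Omega(\gamma)}$, which is precisely \eqref{eq:tau-BPS-hol}. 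Hence $\tau^{\rm min}_{\rm BPS,\ell}\big|_{{\bm\nu}={\bm\nu}_*}=\tau^{\rm hol}_{\rm BPS,\ell}$, and therefore $\varkappa=\kappa_\ell\big|_{{\bm\nu}={\bm\nu}_*}$.

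The third step is to compute $\varkappa$ explicitly and verify the stated dichotomy ``$1$ or a simple explicit function independent of $\ell$''. From \eqref{eq:expression-of-kappa-ell}, $\kappa_\ell$ is a product over active $\gamma$ with $\Omega(\gamma)=-1$ (the $\pm(\gamma_{s_+}-\gamma_{s_-})$ classes) of $\bigl(Z(\gamma)/2\pi i\hbar-\nu(\gamma)/2\bigr)^{\nu(\gamma)\Omega(\gamma)/4}$. For each such cycle $\nu(\gamma)$ equals $\pm\nu_s$ for the corresponding even-order pole $s$, and in the chosen ${\bm\nu}_*$ we have $\nu_s\in\{0,1\}$: when $\nu_s=0$ the exponent $\nu(\gamma)\Omega(\gamma)/4$ vanishes and the factor is $1$; when $\nu_s=1$ the factor is a single elementary expression in $\hbar$ and $m_s$ (and this happens only for the one distinguished pole in the Legendre, Weber-type, Kummer, degenerate-Gauss and Gauss cases, where $\varrho_{\beta_s}=(1-\hbar/2m_s)^{-1}$ appeared). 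One then observes that, for each pole, both the $+\gamma$ and $-\gamma$ representatives contribute and the ray-dependence enters only through which of $\pm\gamma$ has central charge in $i{\mathbb H}_\ell$; since the exponent is symmetric in the sense needed (a short check using $Z(-\gamma)=-Z(\gamma)$ and the even-ness of the relevant intersection numbers), the resulting product is manifestly $\ell$-independent, matching the structure of $\varrho_\beta$ in Corollary \ref{coro:vorminhol}. So $\varkappa$ is either $1$ (when every $\Omega(\gamma_{\rm BPS})=-1$ cycle gets $\nu_s=0$, e.g. Bessel and Whittaker) or an explicit product of factors like $(1-\hbar/2m_s)^{\pm1/2}$, independent of $\ell$, as claimed.

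I expect the main obstacle to be purely bookkeeping: making sure the exponents and branch choices in \eqref{eq:expression-of-kappa-ell}, \eqref{eq:vorminfactor} and the specializations in Corollary \ref{coro:vorminhol} are consistent — in particular that $\kappa_\ell\big|_{{\bm\nu}={\bm\nu}_*}$ genuinely loses its $\ell$-dependence (the naive expression still references $i{\mathbb H}_\ell$) and that one has correctly tracked the $\tfrac14$ versus $\tfrac12$ normalizations and the passage from cycle Voros data to the potential $\phi$. No deep new ingredient is needed beyond what is already proved; the delicate point is bundling the case-by-case computations of Corollary \ref{coro:vorminhol} into the single uniform statement \eqref{eq:tau-Voros-and-tau-holomorphic}.
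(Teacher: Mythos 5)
Your overall architecture — chain Corollary \ref{thm:maintheorem-1} with the identification $\tau^{\rm min}_{{\rm BPS},\ell}\big|_{\xi=\xi_{\text{\DJ},{\bm\nu}_*}}=\tau^{\rm hol}_{{\rm BPS},\ell}$ — differs from the paper's, which instead reruns the argument of Corollary \ref{thm:maintheorem-1} with $\rho_{\ell,\beta}$ replaced by the $\ell$-independent factor $\varrho_\beta$ already computed in Corollary \ref{coro:vorminhol}. Unfortunately your route has a genuine gap at its first step: the claim that ${\bm\nu}_*$ satisfies the hypotheses of Corollaries \ref{cor:vormin}/\ref{thm:maintheorem-1} because ``$\nu(\gamma_{\rm BPS})\in\{-1,0,1\}$ in all cases'' is false. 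For a loop-type BPS cycle $\gamma=\pm(\gamma_{s_+}-\gamma_{s_-})$ one has $\nu(\gamma)=\pm 2\nu_s$, so for the distinguished pole with $\nu_s=1$ (Legendre, degenerate Gauss, Gauss --- precisely the cases where $\varkappa\neq 1$) one gets $\nu(\gamma)=\pm 2$, which lies outside the required interval $(-2,0]$ for either orientation. (Even for cycles with $\Omega\neq -1$ and $\nu(\gamma)=\pm1$, only one of $\pm\gamma$ satisfies ${\rm Re}\,\nu\in(-1,1]$, so the hypothesis fails for half the rays $\ell$.) Hence Corollary \ref{thm:maintheorem-1} cannot be invoked at ${\bm\nu}={\bm\nu}_*$ exactly where it is needed.

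A second, related problem: even if you formally substitute $\nu(\gamma)=2\epsilon$ into \eqref{eq:expression-of-kappa-ell}, the loop-cycle factor becomes $\bigl(\epsilon(\tfrac{2m_s}{\hbar}-1)\bigr)^{-\epsilon/2}$, where $\epsilon=\pm1$ records which of $\pm(\gamma_{s_+}-\gamma_{s_-})$ has central charge in $i\mathbb{H}_\ell$; the two choices give $(\tfrac{2m_s}{\hbar}-1)^{-1/2}$ versus $(1-\tfrac{2m_s}{\hbar})^{+1/2}$, which are not equal, so your asserted ``short check'' of $\ell$-independence does not go through as stated. The paper sidesteps both issues by never passing through $\tau^{\rm min}$ at ${\bm\nu}_*$: it solves $\partial_{m_s}\log\varkappa=-\partial_\hbar\log\varrho_{\beta_s}$ directly and fixes the integration constant by asymptotics, where $\varrho_{\beta_s}$ was obtained in Corollary \ref{coro:vorminhol} by comparing asymptotic expansions against \eqref{eq:asymptotic-log-holomorphic-solution} and is manifestly independent of $\ell$. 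To repair your argument you would need to extend Corollary \ref{thm:maintheorem-1} beyond its stated range of ${\bm\nu}$ (shifting the second argument via $\Lambda(w,\eta+1)=\Lambda(w,\eta)(1+\tfrac{\eta}{w})$), at which point both $\rho_{\ell,\beta}$ and $\kappa_\ell$ acquire the extra factors and you have essentially reproduced the paper's computation.
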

\begin{proof}
Replacing $\rho$ with $\varrho$ in the previous proof, we obtain $\varkappa = 1$ for the Weber, Whittaker, Bessel, and Kummer cases.
For Legendre, degenerate Gauss, and Gauss, we obtain
\begin{equation}
{ 
\varkappa(\hbar)=\Big(\frac{2 m_s}{\hbar} - 1 \Big)^{- \frac{1}{2}}
}
\end{equation}
where $s\in P_{\rm ev}$ is a pole chosen as in the proof of Corollary \ref{coro:vorminhol}.


\end{proof}

\subsection{Borel-resummed TR partition function as a BPS $\tau$-function}  
\label{section:TR-tau-vs-BPS-tau-hol}

Let us finally show that the Borel-resummed topological recursion partition function essentially agrees with the BPS $\tau$-function which we have constructed so far --- this provides the ``Borel-resummed version'' of our previous result in \cite{IK20}.

\begin{thm}
\label{thm:tauhol}
For any spectral curve of hypergeometric type, 
let $F_{\rm TR}$ and $Z_{\rm TR}$ denote the topological recursion free energy and partition function. Then, 
$F_{\rm TR}$ and $Z_{\rm TR}$ are Borel summable along any non-BPS ray $\ell$ for the corresponding BPS structure. 
In particular, if we take the following normalization
\begin{equation} 
\label{eq:F1-normalization-hbar}
F_{1} = 
- \frac{1}{12} \sum_{\substack{ \gamma \in \Gamma \\ Z(\gamma)\in  i \mathbb{H}_{\ell}}} 
\Omega(\gamma)\, \log \left( \frac{Z(\gamma)}{2 \pi i \hbar} 
\right)
\end{equation}
of $F_1$, then the Borel sum 
\begin{equation}
{\tau}^{\mathsmaller{>0}}_{{\rm TR},\ell} = 
{\mathcal S}_{\ell}  \exp \left( {\sum_{g>0} \hbar^{2g-2} F_g} \right)
\end{equation}
agrees with the BPS $\tau$-function $\tau_{\rm BPS, \ell}^{\rm hol}$:
\begin{equation} \label{eq:main-equality-TR-vs-BPS}
  \tau_{\rm BPS, \ell}^{\rm hol} = {\tau}^{\mathsmaller{>0}}_{{\rm TR},\ell}. 
\end{equation}
\end{thm}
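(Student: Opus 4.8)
The plan is to prove the identity $\tau_{\mathrm{BPS},\ell}^{\mathrm{hol}} = \tau^{\mathsmaller{>0}}_{\mathrm{TR},\ell}$ by comparing both sides at the level of Borel-resummable asymptotic expansions and then invoking Borel summability on both sides to upgrade the equality of expansions to an analytic identity on $\mathbb{H}_\ell$. First I would establish the Borel summability of $F_{\mathrm{TR}}$ and $Z_{\mathrm{TR}}$ along any non-BPS ray $\ell$: since by Theorem \ref{thm:Borel-sum-free-energy} the $F_g$ for $g \ge 2$ are explicitly given as finite sums of terms of the form $\frac{B_{2g}}{2g(2g-2)}\bigl(\frac{2\pi i}{Z(\gamma)}\bigr)^{2g-2}$ with $\Omega(\gamma) \ne 0$ and $Z(\gamma) \in i\mathbb{H}_\ell$, the generating series $\sum_{g \ge 2}\hbar^{2g-2}F_g$ is (a finite superposition of) series whose Borel transforms are essentially those of $\log\Upsilon\bigl(\frac{Z(\gamma)}{2\pi i \hbar}\bigr)$, and these are Borel summable precisely away from the rays where $Z(\gamma)/2\pi i \hbar \in \mathbb{R}_{<0}$, i.e. away from the BPS rays. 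The $g = 1$ term (with the normalization \eqref{eq:F1-normalization-hbar}) and the analytic extension properties of $\Upsilon$ then give the summability of the whole partition function without the genus-$0$ part.

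Next I would match the asymptotic expansions. By the definition \eqref{eq:tau-BPS-hol} we have $\tau_{\mathrm{BPS}}^{\mathrm{hol}} = \prod_{\gamma: Z(\gamma) \in i\mathbb{H}_\ell}\Upsilon\bigl(\frac{Z(\gamma)}{2\pi i \hbar}\bigr)^{\Omega(\gamma)}$, and using the asymptotic expansion \eqref{eq:asymptotic-upsilon} of the $\Upsilon$-function, namely
\begin{equation}
\log \Upsilon(w,0) \sim -\frac{B_2(0)}{2}\log w + \sum_{k \ge 2}\frac{B_{k+1}(1)}{(k-1)(k+1)}w^{1-k},
\end{equation}
with $w = \frac{Z(\gamma)}{2\pi i \hbar}$, I get
\begin{equation}
\log\tau_{\mathrm{BPS},\ell}^{\mathrm{hol}} \sim -\frac{B_2}{2}\sum_{\gamma}\Omega(\gamma)\log\Bigl(\frac{Z(\gamma)}{2\pi i \hbar}\Bigr) + \sum_{k \ge 2}\sum_\gamma \frac{B_{k+1}\,\Omega(\gamma)}{(k-1)(k+1)}\Bigl(\frac{2\pi i \hbar}{Z(\gamma)}\Bigr)^{k-1}.
\end{equation}
On the other hand $\log\bigl(\sum_{g>0}\hbar^{2g-2}F_g\bigr) = \hbar^{-2}\cdot 0 + F_1 + \sum_{g \ge 2}\hbar^{2g-2}F_g$; substituting $F_1$ from \eqref{eq:F1-normalization-hbar} gives the $-\frac{1}{12}\sum_\gamma\Omega(\gamma)\log\bigl(\frac{Z(\gamma)}{2\pi i \hbar}\bigr)$ term, which matches the $-\frac{B_2}{2}(\cdots)$ term since $B_2 = \frac{1}{6}$, and substituting $F_g$ from \eqref{eq:freenergies3} gives $\sum_{g\ge2}\hbar^{2g-2}\frac{B_{2g}}{2g(2g-2)}\sum_\gamma\Omega(\gamma)\bigl(\frac{2\pi i}{Z(\gamma)}\bigr)^{2g-2}$. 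The point is then purely combinatorial: the odd-$k$ terms $k = 2g-1$ drop out of the $\Upsilon$-expansion because $B_{k+1} = B_{2g} $ appears only for even index while $B_{\mathrm{odd}} = 0$ for odd index $\ge 3$, and reindexing $k = 2g-1$ in the $\Upsilon$ sum matches $\frac{B_{2g}}{(2g-2)(2g)}$ against $\frac{B_{2g}}{2g(2g-2)}$ term-by-term. So the two asymptotic expansions coincide identically.

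Finally, since both $\tau_{\mathrm{BPS},\ell}^{\mathrm{hol}}$ (by the explicit $\Upsilon$-product formula and the known analytic properties of $\Upsilon$) and $\tau^{\mathsmaller{>0}}_{\mathrm{TR},\ell} = \mathcal{S}_\ell\exp(\sum_{g>0}\hbar^{2g-2}F_g)$ are genuine analytic (indeed meromorphic) functions on $\mathbb{H}_\ell$ with the same asymptotic expansion as $\hbar \to 0$ in $\mathbb{H}_\ell$, and since a function that is the Borel sum of a series is uniquely determined by that series (Watson's lemma / the uniqueness in the Nevanlinna--Sokal sense applies because the Borel transform in our case extends analytically with at-most-exponential growth along $\ell$), the two must agree on $\mathbb{H}_\ell$. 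I would phrase this last step by noting that both sides equal $\mathcal{S}_\ell$ of the \emph{same} formal series: $\tau^{\mathrm{hol}}_{\mathrm{BPS}} = \prod_\gamma \mathcal{S}_\ell\bigl[\Upsilon(\tfrac{Z(\gamma)}{2\pi i\hbar})^{\Omega(\gamma)}\bigr]$ and the Borel sum commutes with finite products, so this reduces to comparing $\mathcal{S}_\ell$ of the two identical expansions. The main obstacle I anticipate is the bookkeeping for the case-by-case verification that the Borel transforms of the relevant series genuinely extend analytically with exponential growth off the BPS rays (so that Borel summability really holds and uniqueness applies) — the computation is in spirit the same as in Proposition \ref{prop:Borel-sum-Voros-relative}, relying on Binet-type integral representations of $\log\Upsilon$ analogous to the one used there for $\log\Lambda$, but one must be careful that the $g=1$ logarithmic term and the genus-$0$ omission are handled consistently with the chosen normalization \eqref{eq:F1-normalization-hbar}.
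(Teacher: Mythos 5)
Your proposal is correct and follows essentially the same route as the paper's proof: expand $\log\Upsilon$ via \eqref{eq:asymptotic-upsilon} with $\eta=0$, match term-by-term against $F_1$ and the $F_g$ of Theorem \ref{thm:Borel-sum-free-energy} (only the $k=2g-1$ terms survive because the odd Bernoulli numbers vanish --- note that your phrase ``the odd-$k$ terms drop out'' should read ``the even-$k$ terms drop out''), and then use the validity of the $\Upsilon$-asymptotics in a sufficiently large sector to deduce Borel summability of $F_{\rm TR}$ along non-BPS rays together with the identity \eqref{eq:main-equality-TR-vs-BPS}. The paper's argument is exactly this, stated more briefly.
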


\begin{proof}
Using the asymptotic property \eqref{eq:asymptotic-upsilon} of $\Upsilon$-function, we can verify that  $\tau_{\rm BPS, \ell}^{\rm hol}$ has the following asymptotic expansion when $\hbar \to 0$ in ${\mathbb H}_\ell$:
\begin{align}
 \log \tau_{\rm BPS, \ell}^{\rm hol} 
& = \sum_{\substack{\gamma \in \Gamma \\ Z(\gamma) \in i {\mathbb H}_\ell}} \Omega(\gamma) \log \Upsilon\left( \frac{Z(\gamma)}{2 \pi i \hbar} \right)  \notag \\
& \sim  - \frac{B_2}{2}  \sum_{\substack{\gamma \in \Gamma \\ Z(\gamma) \in i {\mathbb H}_\ell}} \Omega(\gamma) \log \left( \frac{Z(\gamma)}{2 \pi i \hbar} \right) 
+  \sum_{g \ge 2} \hbar^{2g-2} \frac{B_{2g}}{2g(2g-2)} \sum_{\substack{\gamma \in \Gamma \\ Z(\gamma) \in i {\mathbb H}_\ell}} \Omega(\gamma) \left( \frac{2\pi i}{Z(\gamma)} \right)^{2g-2}. 
\end{align}
We can verify that the right hand side is nothing but 
$\sum_{g \ge 1} \hbar^{2g-2} F_g$ by using Theorem \ref{thm:Borel-sum-free-energy}. Since the asymptotic expansion of the $\Upsilon$-function is valid in a sufficiently large sector, this asymptotic property ensures that $F_{\rm TR}$ and its exponential $Z_{\rm TR}$ are Borel summable along $\ell$, and the relation \eqref{eq:main-equality-TR-vs-BPS} holds tautologically. 
\end{proof}

Together with Corollary \ref{thm:maintheorem-1}, the above formula tells us that the TR partition function is realized as the specialization of the BPS $\tau$-function associated with the Voros solution.

\begin{rem}
\label{rem:differenceeq} We note that the specialization to the special value of ${\bm\nu}$ is necessary, since the BPS $\tau$-function in general clearly depends on ${\bm \nu}$, but $\tau_{\rm TR}$ depends only on free energies (and thus not on the quantization parameter ${\bm \nu}$ ). We may offer a middle ground as follows. In \cite{IKoT-I,IKoT-II}, it was shown that the Voros coefficient can be recovered as the so-called ``difference value'' of the free energies. To simplify the notation, let us focus only on the Weber case, and refer to \cite{IKoT-II} for the others. It was found that
\begin{align}
\label{eq:diffeq}
& V^{\rm Web}(m_\infty, \nu_\infty,\hbar) \notag \\ 
& \quad = F^{\rm Web}\left( m_\infty - \frac{\nu_\infty \hbar}{2} + \frac{\hbar}{2} , \hbar \right) - F^{\rm Web}\left( m_\infty - \frac{\nu_\infty \hbar}{2} -  \frac{\hbar}{2} , \hbar \right) - \frac{1}{\hbar} \dfrac{\partial F_0}{\partial m_\infty} + \frac{1}{2} \dfrac{\partial^2F_0}{\partial m_\infty^2}.
\end{align}
 Essentially, this relation is due to the variational formula (\cite{EO})
 \begin{equation}
     \frac{\partial^n F^{\rm Web}_g}{\partial m_\infty^n}= \int_{0}^{\infty} \cdots \int_{0}^{\infty} W^{\rm Web}_{g,n}(z_1, \dots, z_n)
 \end{equation} 
 (see \cite{IKoT-II}). The last two terms in \eqref{eq:diffeq} may be removed by a suitable redefinition of the Voros coefficient. Thus, at the level of asymptotic expansions, the difference value of $\tau_{\rm BPS}^{\rm hol}$ gives the Voros coefficient, i.e. the solutions to the Riemann-Hilbert problem for arbitrary choices of ${\bm \nu}$ are encoded in the topological recursion $\tau$-function $\tau_{\rm TR}$ (the BPS $\tau$-function at ${\bm \nu}={\bm \nu}_*$). This should be related to the observation of \cite{BBS}, who obtained a similar difference equation for the minimal solution (see also \cite{Alim1, Alim2, AS21} where closely related difference equations are discussed). Since it nonetheless encodes the solution for the full family of ${\bm \nu}$, one may consider this as an alternative way to define $\tau_{\rm BPS}$, depending on which properties (geometric characterization, independence of ${\bm \nu}$) one desires the most.
\end{rem}
\begin{rem}{   
By looking at the asymptotic expansion of the both sides of \eqref{eq:def-tau-Voros} evaluated at ${\bm \nu} = {\bm \nu}_{\ast}$, we have
\begin{equation} \label{eq:VP-vs-TRFE}
\bigl( - \partial_\hbar \phi + \log c_{\ell} \bigr)\Big|_{{\bm \nu} = {\bm \nu}_\ast} = 
\sum_{g \ge 1} \hbar^{2g-2} F_g + \log {\varkappa}, 
\end{equation}
where we use the same normalization \eqref{eq:F1-normalization-hbar} of $F_1$ on the right-hand side.  
Thus, we see there are two different relations between the solution to the BPS Riemann-Hilbert problem (or the Voros coefficients) and the BPS $\tau$-function (or the TR partition function); that is, the difference equation discussed \eqref{eq:diffeq} and the relation \eqref{eq:VP-vs-TRFE} with the Voros potential. 
From the view point of special functions, this reflects the fact that the $\Gamma$ function and Barnes $G$-functions are related in two different ways (i.e., difference relation and integro-differential relation) as:
\begin{align*}
\log G(w+1) 
& = \log G(w) + \log \Gamma(w)  \\
& = \frac{w(1-w)}{2} + \frac{w}{2} \log 2 \pi + w \log \Gamma(w) - \int^{w}_{0} \log \Gamma(z) dz.
\end{align*}

}
\end{rem}

\subsection{Higher degree spectral curves}
\label{sec:conjectures}

We have so far only discussed variations of BPS structure arising from quadratic differentials. However, due to the work \cite{GMN12} by Gaiotto-Moore-Neitzke, it is natural to expect that a tuple of higher differentials (or a higher degree spectral curve) also defines a BPS structure. Although many examples have been studied in e.g. \cite{GMN12}, there is still no mathematically rigorous description of the corresponding BPS structures in these cases, so such a picture remains conjectural.

On the other hand, TR is applicable to higher degree spectral curves \cite{BHLMR-12, BE-12}, and its relationship to WKB analysis is also discussed in \cite{BE-16} (under a certain admissibility assumption on spectral curves). 
The Borel summability and resurgence structure for the WKB solutions / Voros symbols of the associated quantum curve is not fully established, but it is natural to expect that they are controlled by an (appropriately generalized) spectral network (Stokes graph) as in the rank 2 cases (c.f., \cite{BNR, AKT94, AKT98}). 

Thus it is reasonable to ask if the results in this paper generalize to higher degree curves. Rather than formulate a precise conjecture, we will simply recall two higher degree examples considered numerically in Part I, and observe that (thanks to results of Y.M. Takei \cite{YM-Takei20}) they behave as expected based on the results we have obtained so far. 

{\subsubsection*{\bf Example: $(1,4)$ curve}
Let us consider the spectral curve arising as the classical limit of the 3rd order hypergeometric differential equation of type $(1,4)$, studied in \cite{Okamoto-Kimura, Hirose}, which is explicitly given as follows:
\begin{equation}
    \Sigma_{(1,4)} ~:~ 3y^3 + 2 t y^2 + x y - m_\infty = 0.
    \label{eq:14curve}
\end{equation}
Here $m_\infty$ is a parameter assumed to be non-zero, and we regard $t$ as a constant. 
The curve $\Sigma_{(1,4)}$ is a genus $0$ curve with two punctures at $\infty_\pm$, and we can verify that $\Res_{x=\infty_\pm} y dx = \pm m_\infty$.

A crude numerical experiment performed in our previous work \cite{IK20} suggested a single degeneration occurs, the so-called ``three-string web'' also observed in \cite{GMN12}. According to their result, the associated BPS cycle is simply the residue cycles $\gamma_{\infty_\pm}$, and the BPS index is assigned so that  $\Omega(\gamma_{\infty_\pm}) = 1$.

According to \cite{IKo} and \cite[Theorem 4.6]{YM-Takei20}, the explicit expression for the Voros coefficient of the quantum curve for the path from $\infty_{-}$ to $\infty_+$ is given as follows: 
\begin{equation}
    V_{\beta_\infty} = \sum_{k=1}^{\infty}{\dfrac{B_{k+1}(\nu_\infty)}{k(k+1)} \left(\dfrac{\hbar}{m_\infty}\right)^k}.
\end{equation}

Since the formula is identical to that of the Weber curve, we have an explicit description via \eqref{eq:Borel-sum-Weber-Voros-symbol} of the jump property after taking the Borel sum. Taking an appropriate lattice, one may check that this Voros symbol solves the corresponding almost-doubled BPS Riemann-Hilbert problem.

\subsubsection*{\bf Example: $(2,3)$ curve}

We consider the spectral curve arising as the classical limit of the 3rd order hypergeometric differential equation of type $(2,3)$, which was also studied in \cite{Okamoto-Kimura}. The curve is explicitly given as follows:
\begin{equation}
    \Sigma_{(2,3)} ~:~ 4y^3 -2 x y^2 + 2 m_\infty y -t = 0.
    \label{eq:23curve}
\end{equation}
Again, it is a genus $0$ curve with two punctures at $\infty_\pm$, and $\Res_{x=\infty_{\pm}} y \, dx = \pm m_\infty$.

Numerical experiment \cite{IK20} together with the general rules for spectral networks and BPS states predicted from physics (e.g. \cite{GMN12}) suggests that $\Omega(\gamma)=0$ for all classes  $\gamma$. 
\cite[Theorem 4.6]{YM-Takei20} computed the Voros coefficient explicitly, which is simply
\begin{equation}
    V_{\beta_\infty}=0,
\end{equation}
After taking the Borel sum, the Voros symbol does not jump, and provides a solution in accordance with the vanishing of the BPS indices $\Omega(\gamma)$.

}



\appendix
\section{{Explicit expression for Voros coefficients}}
\label{sec:explicit-formulas}
To make formula \eqref{eq:voros-coeffs-general} for the Voros coefficient more concrete, and to make the BPS structure in each example more evident, we give their explicit expressions that were obtained in \cite[Theorem 3.1]{IKoT-II}. The Voros coefficients  arising from our examples are explicitly given by 
\begin{equation}
V_{\beta_s}(\hbar)
= \sum_{k = 1}^{\infty} \hbar^{k} V_{\beta_s, k}
\quad (s \in P_{\rm ev}),
\end{equation}
where the coefficients $V_{\beta_s, k}$ for $k \ge 1$ are given as follows (several of these expressions or special cases thereof were obtained in \cite{SS, Takei08, KoT11,  ATT, ATT2, AIT}).\medskip

\begin{itemize}
\item[$\bullet$]  
For the quantum Gauss curve:
\begin{align*}
& V^{\rm HG}_{\beta_0, k} 
 = 
\frac{1}{k (k + 1)}
\left\{
\frac{B_{k + 1} \bigl( \frac{1 + \nu_0 + \nu_1 + \nu_{\infty}}{2} \bigr)}
{(m_0 + m_1 + m_{\infty})^k}
+ \frac{B_{k + 1} \bigl( \frac{1 + \nu_0 - \nu_1 + \nu_{\infty}}{2} \bigr)}
{(m_0 - m_1 + m_{\infty})^k}
\right. \nonumber \\ 
& \hspace{+1.em} \left.
+ \frac{B_{k + 1} \bigl( \frac{1 + \nu_0 + \nu_1 - \nu_{\infty}}{2} \bigr) }
{(m_0 + m_1 - m_{\infty})^k}
+ \frac{B_{k + 1} \bigl( \frac{1 + \nu_0 - \nu_1 - \nu_{\infty}}{2} \bigr)}
{(m_0 - m_1 - m_{\infty})^k} 
- \frac{B_{k + 1}(\nu_0) + B_{k + 1}(1+\nu_0)}{(2m_0)^k} 
\right\}, 
\\[+.5em]
& V^{\rm HG}_{\beta_1, k} 
 = 
\frac{1}{k (k + 1)}
\left\{
\frac{B_{k + 1} \bigl( \frac{1 + \nu_0 + \nu_1 + \nu_{\infty}}{2} \bigr)}
{(m_0 + m_1 + m_{\infty})^k}
- \frac{B_{k + 1} \bigl( \frac{1 + \nu_0 - \nu_1 + \nu_{\infty}}{2} \bigr)}
{(m_0 - m_1 + m_{\infty})^k}
\right. \nonumber \\ 
& \hspace{+1.em} \left.
+ \frac{B_{k + 1} \bigl( \frac{1 + \nu_0 + \nu_1 - \nu_{\infty}}{2} \bigr) }
{(m_0 + m_1 - m_{\infty})^k}
- \frac{B_{k + 1} \bigl( \frac{1 + \nu_0 - \nu_1 - \nu_{\infty}}{2} \bigr)}
{(m_0 - m_1 - m_{\infty})^k} 
- \frac{B_{k + 1}(\nu_1) + B_{k + 1}(1+\nu_1)}{(2m_1)^k} 
\right\},
\\[+.5em]
& V^{\rm HG}_{\beta_\infty, k} 
 = 
\frac{1}{k (k + 1)}
\left\{
\frac{B_{k + 1} \bigl( \frac{1 + \nu_0 + \nu_1 + \nu_{\infty}}{2} \bigr)}
{(m_0 + m_1 + m_{\infty})^k}
+ \frac{B_{k + 1} \bigl( \frac{1 + \nu_0 - \nu_1 + \nu_{\infty}}{2} \bigr)}
{(m_0 - m_1 + m_{\infty})^k}
\right. \nonumber \\ 
& \hspace{+1.em} \left.
- \frac{B_{k + 1} \bigl( \frac{1 + \nu_0 + \nu_1 - \nu_{\infty}}{2} \bigr) }
{(m_0 + m_1 - m_{\infty})^k}
- \frac{B_{k + 1} \bigl( \frac{1 + \nu_0 - \nu_1 - \nu_{\infty}}{2} \bigr)}
{(m_0 - m_1 - m_{\infty})^k} 
- \frac{B_{k + 1}(\nu_\infty) + B_{k + 1}(1+\nu_\infty)}{(2m_\infty)^k} 
\right\}.
\end{align*}
We note that these Voros coefficients are related by permutions of the parameters: 
\begin{align*}
V^{\rm HG}_{\beta_1, k} 
& =
V^{\rm HG}_{\beta_0, k} \bigl|_{
(m_0, m_1, m_\infty, \nu_0, \nu_1, \nu_\infty) = 
(m_1, m_0, m_\infty, \nu_1, \nu_0, \nu_\infty)}, 
\\
V^{\rm HG}_{\beta_\infty, k}
& =
V^{\rm HG}_{\beta_0, k} \bigl|_{
(m_0, m_1, m_\infty, \nu_0, \nu_1, \nu_\infty) = 
(m_\infty, m_1, m_0, \nu_\infty, \nu_1, \nu_0)}. 
\end{align*}

\bigskip
\item[$\bullet$]  
For the quantum degenerate Gauss curve:
\begin{align*}
V^{\rm dHG}_{\beta_1,k} 
& = \frac{1}{k(k+1)}\left\{
\frac{2 B_{k + 1} \bigl( \frac{1 +\nu_1 + \nu_{\infty}}{2} \bigr)}
{(m_1 + m_{\infty})^k}
+ \frac{2 B_{k + 1} \bigl( \frac{1 +\nu_1 - \nu_{\infty}}{2} \bigr)}
{(m_1 - m_\infty)^k}
- \frac{B_{k + 1}(\nu_1) + B_{k + 1}(1+\nu_1)}{(2m_1)^k} 
\right\}, \\[+.5em]
V^{\rm dHG}_{\beta_\infty,k} 
& = \frac{1}{k(k+1)}\left\{
\frac{2 B_{k + 1} \bigl( \frac{1 +\nu_1 + \nu_{\infty}}{2} \bigr)}
{(m_1 + m_{\infty})^k}
- \frac{2 B_{k + 1} \bigl( \frac{1 +\nu_1 - \nu_{\infty}}{2} \bigr)}
{(m_1 - m_\infty)^k}
- \frac{B_{k + 1}(\nu_\infty) + B_{k + 1}(1+\nu_\infty)}{(2m_\infty)^k} 
\right\}.
\end{align*}
We note that these Voros coefficients are related by permutions of the parameters:
\begin{align*}
V^{\rm dHG}_{\beta_\infty, k} =
V^{\rm dHG}_{\beta_1, k} \bigl|_{
(m_1, m_\infty,  \nu_1, \nu_\infty) = 
(m_\infty, m_1, \nu_\infty, \nu_1)}. 
\end{align*}

\bigskip
\item[$\bullet$]  
For the quantum Kummer curve: 
\begin{align*}
		V^{\rm Kum}_{\beta_0, k}
		&= \frac{1}{k(k+1)} 
			\bigg\{ \frac{B_{k+1}\bigl( \frac{1 + \nu_0 + \nu_{\infty}}{2} \bigr)}
			{(m_0 + m_{\infty})^{k}} 
		+ \frac{B_{k+1}\bigl( \frac{1 +\nu_0 - \nu_{\infty}}{2} \bigr)}
		{(m_0 - m_{\infty})^{k}} 
		- \frac{B_{k+1}(\nu_{0}) + B_{k+1}(1 + \nu_{0})}
		{{(2 m_0)}^{k}} \bigg\}, \notag \\[+.5em]
		V^{\rm Kum}_{\beta_\infty, k} 
		&=  \frac{1}{k(k+1)} 
			\bigg\{ \frac{B_{k+1}\bigl( \frac{1 +\nu_0 + \nu_{\infty}}{2} \bigr)}
			{(m_0 + m_{\infty})^{k}} 
			- \frac{B_{k+1}\bigl( \frac{1 +\nu_0 - \nu_{\infty}}{2} \bigr)}
			{(m_0 - m_{\infty})^{k}} \bigg\}. 
\end{align*}

\bigskip
\item[$\bullet$]  
For the quantum Legendre curve: 
\[
V^{\rm Leg}_{\beta_\infty, k} =  
\frac{1}{k(k+1)} \left\{  
\frac{4B_{k+1}\bigl( \frac{1 + \nu_\infty}{2} \bigr)}{m_\infty^k} 
- \frac{B_{k+1}(\nu_\infty) + B_{k+1}(1 + \nu_\infty)}{(2m_\infty)^{k}}
\right\}.
\]

\bigskip
\item[$\bullet$]  
For the quantum Bessel curve:
\[
V^{\rm Bes}_{\beta_0, k} =  
- \frac{B_{k+1}(\nu_0) + B_{k+1}(1 + \nu_0)}{k(k+1) (2m_0)^{k}}. 
\]


\bigskip
\item[$\bullet$]  
For quantum Whittaker curve: 
\[
V^{\rm Whi}_{\beta_\infty, k} =  
\frac{2 B_{k+1} \bigl( \frac{1 + \nu_\infty}{2})}{k(k+1) \, m_{\infty}^{k}}.
\]

\bigskip
\item[$\bullet$]  
For quantum Weber curve:
\[
V^{\rm Web}_{\beta_\infty, k} = 
\frac{B_{k + 1}\big( \frac{1 + \nu_\infty}{2} \big)}{k (k + 1) \, m_\infty^{k}}. 
\]

\bigskip
\item[$\bullet$]  
The Voros coefficients are not defined for the quantum Airy curve and quantum degenerate Bessel curve, since they have trivial BPS structure (i.e., $\Gamma = 0$).
\end{itemize}

\addtocontents{toc}{\protect\setcounter{tocdepth}{0}}
\section*{Declarations}
\subsection*{Data availability}
Data sharing is not applicable to this article as no datasets were generated or analyzed during the current study.
\subsection*{Competing interests}
The present work was funded by various organizations outlined in the Acknowledgements section above.

\addtocontents{toc}{\protect\setcounter{tocdepth}{1}}

\end{document}